\numberwithin{equation}{section}  
\theoremstyle{plain} 
\newtheorem{theorem}{Theorem}[section]
\newtheorem{corollary}[theorem]{Corollary}
\newtheorem{lemma}[theorem]{Lemma}
\newtheorem{proposition}[theorem]{Proposition}
\theoremstyle{definition}
\newtheorem{definition}[theorem]{Definition}
\theoremstyle{remark}
\newcommand{\ra}{\rightarrow}
\newcommand{\abs}[1]{\left\lvert #1 \right\rvert}
\newcommand{\ket}[1]{\lvert #1 \rangle}
\newcommand{\wslim}{{\rm w}^*\mbox{-}\lim}
\newcommand{\conv}[1]{{\rm Conv}\left( #1 \right)}
\newcommand{\CC}{\mathbb C}
\newcommand{\ZZ}{\mathbb Z}
\newcommand{\calA}{\mathcal A}
\newcommand{\calB}{\mathcal B}
\newcommand{\calG}{\mathcal G}
\newcommand{\calH}{\mathcal H}
\DeclareMathOperator{\Tr}{Tr}
\begin{document}

\title{The complete set of infinite volume ground states for Kitaev's abelian quantum double models}
\renewcommand\Affilfont{\itshape\small}
  \author[1]{Matthew Cha}
  \author[1,2]{Pieter Naaijkens}
  \author[1]{Bruno Nachtergaele}
  \affil[1]{Department of Mathematics, University of California, Davis, USA}
  \affil[2]{JARA Institute for Quantum Information, RWTH Aachen University, Germany}
  \date{December 20, 2016}

\maketitle

\begin{abstract}
We study the set of infinite volume ground states of Kitaev's quantum double model on $\ZZ^2$ for an arbitrary finite 
abelian group $G$. It is known that these models have a unique frustration-free ground state.
Here we drop the requirement of frustration freeness, and classify the full set of ground states.
We show that the set of ground states  decomposes into $|G|^2$ different charged sectors, corresponding 
to the different types of abelian anyons (also known as superselection sectors).
In particular, all pure ground states are equivalent to ground states that can be interpreted as describing a single excitation.
Our proof proceeds by showing that each ground state can be obtained as the weak$^*$ limit of finite volume ground states of the 
quantum double model with suitable boundary terms. The boundary terms allow for states that represent a pair of excitations, 
with one excitation in the bulk and one pinned to the boundary, to be included in the ground state space.
\end{abstract}

\section{Introduction}

The past two decades have witnessed a rising interest in topologically ordered states, mainly due to the 
realization that their properties could be useful for fault tolerant quantum computation~\cite{Freedman,Kitaev}.  
The quantum double models introduced by Kitaev demonstrated 
the existence of quantum spin models with short-range interactions that have ground states exhibiting 
topological order and anyonic excitation spectrum. A remarkable feature is that the ground space degeneracy 
depends on the genus of the surface on which the model is defined. 

A characteristic of topologically ordered 
states in two dimensions is the appearance of elementary excitations with braid statistics, called anyons~\cite{Wilczek}.  
Perhaps the most well known model for anyons are the quasi-particle excitations of the fractional quantum 
Hall effect~\cite{ArovasSW,MooreR}. Braid statistics have also been studied in the context of local quantum physics~\cite{FredRS,FrohlichG}
and gauge theories~\cite{BaisDW}. The fusion rules and braiding of anyons is encoded
algebraically as a unitary modular tensor category~\cite{BakalovK}.
In particular, the case of the representation theory of the quantum double for a finite group 
has been well studied~\cite{BaisDW,DijkgraafPR,Kitaev,SzlachV}.

Kitaev's models are very special in the sense that 
they have frustration-free ground states and that the interaction terms in the Hamiltonian are all mutually 
commuting. Trivially, these models have a non-vanishing spectral gap above the ground state. The spectral
gap is an important feature in the classification of topologically ordered ground states into a so-called 
topological phase~\cite{ChenGW,Wen}, and has important implications, such as exponential decay 
of correlations~\cite{HastingsK,NachtergaeleS} and entanglement area laws~\cite{BrandaoH,Hastings}.
It was therefore important to show that this gap does not close for sufficiently small uniform perturbations of these 
models~\cite{BravyiHM,MichalakisP}.

Ground states of quantum lattice models is a well-studied subject. Knowing the set of ground states is essential
for understanding the properties of quantum many-body systems at sufficiently low temperatures. 
In the mathematical analysis of certain statistical mechanics phenomena,  such as equlibrium states, phase transitions, 
superselection sectors and phase classification, it is often necessary or convenient to take the infinite volume limit 
(or thermodynamic limit)~\cite{BachmannO,BratteliR,Naaijkens}. General existence 
and decomposition properties of the set of infinite volume ground states have been mastered for some time~\cite{BratteliKR,BratteliR}. 
The problem of finding the complete the set of  ground states for a given model, and proving that it indeed is the complete set, however, has been solved only in a few cases.

In this work, we study quantum double models for abelian groups, in their implementation as quantum spin
Hamiltonians with short-range interactions as defined by Kitaev~\cite{Kitaev}.
The simplest example is the toric code model, which corresponds to the choice $G = \mathbb{Z}_2$.
The abelian quantum double model is particularly interesting because it has all of the characteristic features 
of topologically ordered systems, while at the same time being simple enough to be tackled directly.
The main features of the model are: it is exactly solvable in the sense that the Hamiltonian can be explicitly 
diagonalized; the dimension of the space of ground states of the models defined
on a compact orientable surface is a topological invariant and corresponds to the number of flat $G$-connections 
on the lattice (up to conjugation); there is a spectral gap above the ground state;
the elementary excitations correspond to quasi-particles with braid statistics,
see~\cite{Bachmann} for a rigorous treatment of these results.

The focus of this paper is the set of infinite volume ground states of the model for finite abelian groups.
Although the quantum double models are exactly solvable in finite volume,
much less is known about the thermodynamic limit. 
The first results in this direction are due to Alicki, Fannes and Hordecki~\cite{AlickiFH}.
They showed that in the case $G =\ZZ_2$, also known as the toric code,
there is a unique frustration-free ground state, which coincides with the translation invariant ground state.
This uniqueness property is not general \cite{GottsteinW}, but is related to topological order in the ground state.
The difficulty of solving the full ground state problem can be understood as follows.
If $\delta$ is the derivation generating the dynamics, one has to find \emph{all} states $\omega$ on the quasi-local algebra $\calA$ of observables that satisfy $\omega(A^*\delta(A)) \geq 0$ for all $A$ in the domain of $\delta$.
It is possible to construct ground states as weak$^*$ limits of finite volume ground states,
but even though the boundary goes to infinity in a sense, the resulting state strongly depends on the boundary conditions chosen for the finite volume ground states.

The main result of this paper is a complete classification of the set of infinite volume ground states
for Kitaev's quantum double model for finite abelian groups.
We find that the set of ground states can be decomposed into $\abs{G}^2$ sectors.
There is a one-to-one correspondence between the ground state sectors and
the superselection sectors defined in \cite{FiedlerN}.
In particular, each sector corresponds to a different anyon type.
The strategy of the proof is to reduce the infinite volume calculation to 
a finite volume calculation.
In particular, we find a boundary term for every finite box 
such that the restriction of any infinite volume ground state to the box 
is a ground state of the finite volume Hamiltonian plus the boundary term.
This strategy is motivated by the fact that infinite volume ground states minimize energy in a local region
among all states that are equivalent in the complement of that region~\cite{BratteliKR},
and resembles the classical Dobrushin-Lanford-Ruelle theory of boundary conditions
for the restriction of infinite volume equilibrium states~\cite{FannesW}.

Our analysis is deeply connected to the notions of topological order, superselection sectors
and the fact that the anyon quasi-particles of the quantum double models are time invariant. 
In analogy with the analysis of Doplicher, Haag and Roberts~\cite{Haag1,Haag2} in local quantum physics~\cite{Haag},
it is possible to analyze the different types of anyons in the system and
to obtain the modular tensor category $\operatorname{Rep}(\mathcal{D}(G))$ describing all their properties.
This was done in~\cite{FiedlerN,Naaijkens} for the abelian quantum double models, and we will use some of these results here.

It is often surprisingly difficult to classify the full set of ground states in the thermodynamic limit.
To our knowledge, the complete ground state problem has only been solved for the one-dimensional 
$XY$-model by Araki and Matsui \cite{ArakiM}, for the one-dimensional $XXZ$-models by Matsui \cite{Matsui} 
and Koma and Nachtergaele \cite{KomaN}, and for finite-range spin chains with a unique frustration free 
matrix product ground state by Ogata~\cite{Ogata3}.
To make progress, one typically has to pair the ground state problem with 
model specific notions; in the $XY$-model it was the Jordan-Wigner transformation to fermions
and in the $XXZ$-model and the frustration-free spin chains it was a connection to zero-energy states \cite{FannesNW,GottsteinW}.
As far as we are aware, our result is the first solution to the ground state problem for a quantum model in two dimensions.

This paper is organized as follows:
In Section~\ref{sec:mainresults} we introduce the model and state the main results.
We describe the superselection sectors in Section~\ref{sec:supersel}
and present a new result that equates a boundary projector to a sum of products of local projectors.
A detailed presentation of the main results and the proofs are in Section~\ref{sec:results}.
We conclude with a discussion of the lack of stability for the infinite volume ground state condition, and 
some difficulties that appear in the analysis for the non-abelian case in Section~\ref{sec:con}.

\textbf{Acknowledgements:} PN has received funding from the European Union's Horizon 2020 research and innovation program under the Marie Sklodowska-Curie grant agreement No 657004. BN was supported in part by the National Science Foundation under Grant DMS-1515850.

\section{Main results}\label{sec:mainresults}

In this section, we recall the definition of the quantum double models we study and present a precise statement of our main results.
A more in-depth discussion can be found in the following sections.

Let $G$ be a finite abelian group and consider the bonds (or edges) $\calB$ of the square lattice $\ZZ^2$, i.e.\ the edges between 
nearest neighbors of points (or vertices)  in $\mathbb{Z}^2$.
We give $\calB$ an orientation by having edges either point up or right.
To each edge $e \in \calB$ we associate a $\abs{G}$-dimensional Hilbert space with an orthonormal basis labeled by group elements 
and denoted by $\ket{g}$. Throughout the paper, we use the notation $\bar{g}$ to denote the inverse element $g^{-1}$.
In general, the model can be defined on any oriented metric graph 
and  for non-abelian groups, see~\cite{BombinMD,Kitaev}. Reversing 
the orientation on a given edge corresponds to the unitary transformation that maps $\ket{g}$  to $\ket{\bar{g}}$ in the state space of that
edge. We use the notation $ \Lambda \subset_f \calB$ to indicate that $\Lambda$ is a {\em finite} subset of $\calB$.
The quantum spin system in $\Lambda$ is defined on the Hilbert space $ \calH_\Lambda := \bigotimes_{x\in \Lambda} \calH_x$
and its algebra of observables is $\calA_\Lambda = \calB(\calH_\Lambda)$.
If $ \Lambda_1 \subset \Lambda_2$ there is a natural inclusion
$ i_{\Lambda_2, \Lambda_1}:\calA_{\Lambda_1} \hookrightarrow \calA_{\Lambda_2}$ mapping $ A \mapsto A\otimes I_{\Lambda_2\backslash \Lambda_1}$.
The maps $i_{\Lambda_2, \Lambda_1}$ are isometric morphisms 
so we will often abuse notation by identifying $i_{\Lambda_2,\Lambda_1}(A)$ with simply $A.$
For this net of algebras, we define the local algebra of observables 
and the quasi-local algebra of observables as, respectively,
\begin{equation}
 \calA_{loc} = \bigcup_{\Lambda \subset_f \calB} \calA_\Lambda, \qquad \calA = \overline{\calA_{loc}}^{\| \cdot \|}.
\end{equation}
An observable $A\in \calA$ is said to be supported on $\Lambda$ if $A \in \calA_\Lambda$.
If $\Lambda$ is the smallest set such that $A \in \calA_\Lambda$ then we say 
$\Lambda$ is the support of $A$.

To define the model we specify the local Hamiltonians and the Heisenberg dynamics on $\calA$.
The interaction terms of the local Hamiltonians are non-trivial only on certain subsets of $\calB$, called stars and plaquettes.
We define a \emph{star} $v$ to be a set of four edges sharing a vertex.
Similarly, a \emph{plaquette} $f$ is the set of four edges forming a unit square in the lattice.
Interaction terms are defined for each star and plaquette by
\begin{equation}
A_v = \frac{1}{\abs{G}} \sum_{g\in G} A_v^g, \qquad \text{ and } \qquad  B_f = B_f^e,
\end{equation}
where the terms $A_v^g$ and $B_f^h$ are defined by their action on basis of simple tensors as shown in the following diagram:
\begin{figure}[h]
\centering
\subfloat{\includegraphics[width=0.45\textwidth]{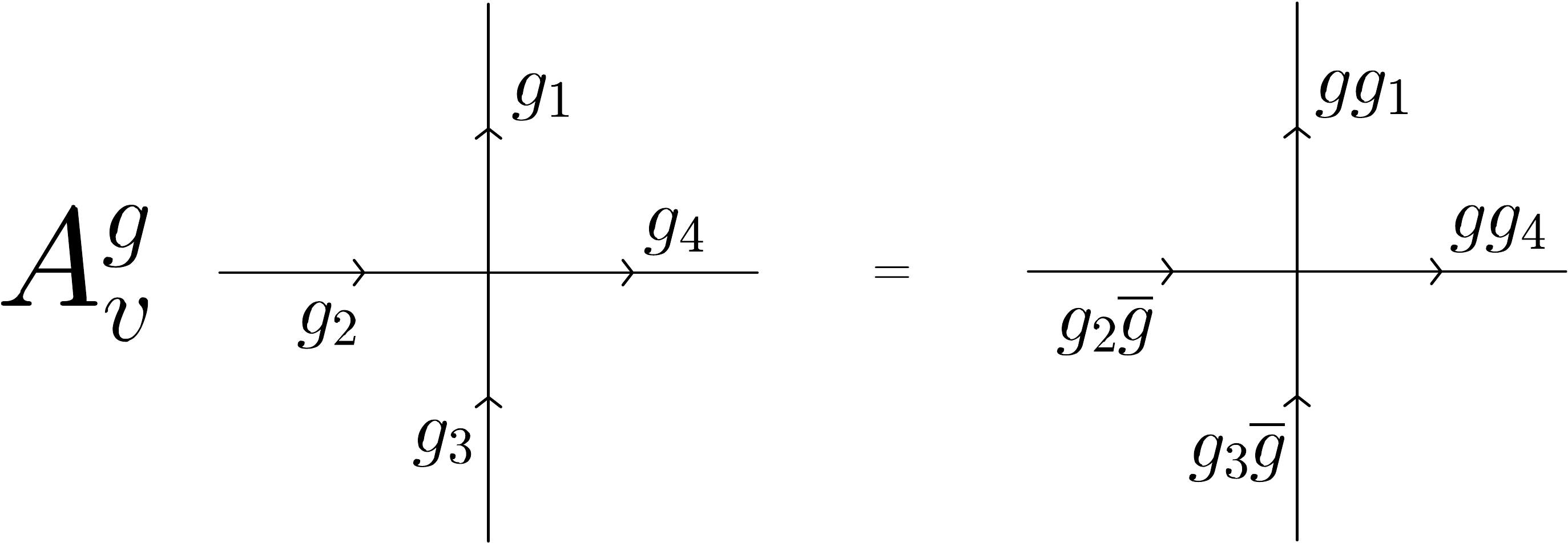}}
\hfill
\subfloat{\includegraphics[width=0.45\textwidth]{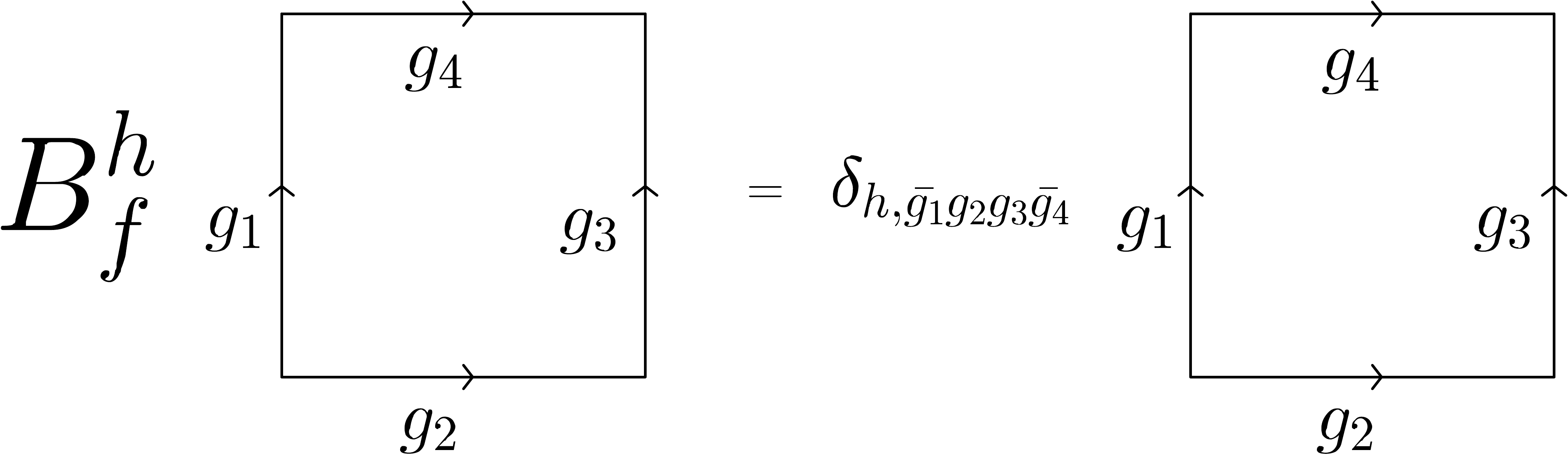}}
\end{figure}\\
Here the group elements $g_1, \dots, g_4$ label the elements of a tensor basis for the local Hilbert space corresponding to a star or plaquette.

It is easy to check that the operators $A_v$ and $B_f$ satisfy the following relations:
\begin{align*}
&A_v^g A_v^{g'} = A_v^{gg'},  & \left(A_v^{\bar{g}}\right)^* = A_v^g, \\
&B_f^h B_f^{h'} = \delta_{h,h'} B_f^h,  &B_f^{h*} = B_f^h, \\
&A_v^g B_f^h = B_f^{gh\bar{g}} A_v^g &\text{(if $v$ and $f$ share edges)}.
\end{align*}
In all other cases the operators commute.

The interactions terms are mutually commuting projectors
\begin{equation}
A_v = A^*_v=A_v^2, \quad \quad B_f = B_f^* = B_f^2, \quad \quad [A_v, B_f] = 0 \text{ for all } v,f.
\end{equation}

We caution the reader that in the case of the toric code model (which corresponds to $G = \mathbb{Z}_2$) one usually shifts the 
local interaction terms by a constant. This has no effect on the dynamics, but the algebraic relations are slightly different. Explicitly,
the common convention is to define the toric code model in terms of star and plaquette operators $A_v^{tc}$ and $B_f^{tc}$ given
by $2 A_v -I = A_v^{tc}$ and $ 2 B_f -I = B_f^{tc}$.

For $\Lambda \subset \calB$ denote the subset of stars and plaquettes contained in $\Lambda$ as
\begin{equation}
\mathcal{V}_\Lambda = \{ v \subset \Lambda: v \text{ is a star} \}, \qquad 
\mathcal{F}_\Lambda = \{ f \subset \Lambda: f \text{ is a plaquette} \}.
\end{equation}
If $\Lambda \subset_f \calB$, the local Hamiltonians for the quantum double models defined by Kitaev \cite{Kitaev} are given by
\begin{equation}
 \sum_{v\in \mathcal{V}_\Lambda} (I - A_v) + \sum_{f\in \mathcal{F}_\Lambda} (I - B_f) = H_\Lambda \in \calA_\Lambda.
\end{equation}
Since the interaction terms are uniformly bounded and of finite range, 
the existence of global dynamics $t \mapsto \tau_t \in \operatorname{Aut}(\calA)$ is readily established.
For our analysis it will be enough to consider squares, $\Lambda_L\subset \calB$, consisting of all edges in $[-L,L]^2$.  
We will denote $H_L = H_{\Lambda_L}$ and $\calH_L = \calH_{\Lambda_L}$.
The generator of the dynamics is the closure of the operator
\begin{equation}
\delta(A) = \lim_{L \ra \infty} [H_L, A],
\end{equation}
where $\calA_{loc}$ is a core for $\delta$, and $\tau_t(A) = e^{i t \delta}(A) $ for all $A \in \calA_{loc}$.

We briefly describe the ground state space, $\calG_L$, of the local Hamiltonians $H_L$.
When defined on a torus, that is when we impose periodic boundary conditions,
the ground states minimize the local energy of each interaction term,
$\calG_L^{per} = \ker H_L^{per} =  \{ \Omega \in \calH_L:  A_s \Omega = \Omega, B_f \Omega = \Omega, \forall s\in \mathcal{S}_L, f\in \mathcal{F}_L;\}$ (see \cite{Kitaev}).
The dimension of $\calG_L^{per}$ is equal to the number of flat $G$-connections up to conjugation and independent of the size of the torus.
For free boundary conditions, the frustration-free property still holds: $\calG_L = \ker H_L$. 
It should be noted though that in this case the dimension grows exponentially with the perimeter of $\Lambda_L$. Other boundary conditions 
have also been considered~\cite{BeigiSW,BravyiK,FreedmanM}.

Recall that a \emph{state} on $\calA$ is a linear functional $\omega: \calA \ra \CC$ such that 
$\omega(A) \geq 0$ if $A\geq 0$  and $ \omega(I) = 1$.  
The set of all states is denoted $ \calA_{+,1}^*$
and is a convex set; its extremal points are called pure states.
By the Banach-Alaoglu theorem the unit ball of all bounded linear functionals on $\calA$ is compact in the weak$^*$ topology.
The positive linear functionals form a convex and weak$^*$ closed subset of this unit ball.
Since $\calA$ is unital, the set of bounded linear functionals $\omega$ for which $\omega(I) = 1$ is closed in the weak$^*$ topology.
Hence the state space is compact in the weak$^*$ topology, being the intersection of a compact with a closed subset (see also~\cite[Thm. 2.3.15]{BratteliR}).

We are interested in the ground states of the infinite system defined as follows.
\begin{definition}\label{defn:gs}
A state $\omega$ is a $\tau$\emph{-ground state} if
\begin{equation}\label{eqn:gs}
\omega( A^* \delta(A)) \geq 0 \quad \quad \text{ for all } A \in \calA_{loc}.
\end{equation} 
We will refer to a $\tau$-ground state as simply a ground state or an \emph{infinite volume ground state}.
\end{definition}

This definition can be interpreted as an infinite volume variational principle expressing
that local perturbations do not decrease the energy of a ground state.
At finite temperature $T$, equilibrium states are defined by the KMS-condition~\cite{HaagHW}.
Definition~\ref{defn:gs} can be obtained as the zero-temperature limit $T\ra 0$ of the KMS-condition.
The set of all grounds states is denoted by $$K = \{ \omega \in \calA_{+,1}^* \mid \forall A\in\calA_{loc}: \omega(A^*\delta(A))\geq 0 \}.$$
$K$ is compact and closed in the weak$^*$ topology, and is a face in $\calA_{+,1}^*$ 
(see Theorem 5.3.37 in \cite{BratteliR}).

Solutions to equation~\eqref{eqn:gs} satisfying in addition a frustration-free condition, i.e.,
$\omega(A_v) = \omega(B_f) = 1$ for all $v \in \mathcal{V}_\calB$ and $ f \in \mathcal{F}_\calB$,
were first studied in \cite{AlickiFH}.
They showed that for $G = \ZZ_2$ there exists a unique frustration-free ground state,
coinciding with the unique translation invariant ground state.
This result was extended for all $G$ in \cite{FiedlerN}.
The notion of topological order is crucial in the proof of these results. 

In~\cite{Naaijkens}, single excitation states were constructed from the frustration-free ground state via localized endomorphisms.
These states are labeled by a pair $(\chi,c) \in \widehat{G} \times G$, where $\widehat{G}$ is the group of characters of $G$.
Each label $(\chi,c)$ denotes a distinct charge, or superselection sector, of the model.
We will show that the single excitation states are solutions to equation~\eqref{eqn:gs}, and in fact that any pure ground state is equivalent to such a state.

To be more precise: in this paper, we prove the set of ground states decomposes into $\abs{G}^2$ sectors, $K^{\chi,c} \subset K$,
corresponding to the superselection sectors defined in \cite{FiedlerN}.
The ground state sectors $K^{\chi, c}$ will be constructed explicitly in Section~\ref{sec:results}.
Heuristically, states in $K^{\chi,c}$ are obtained by projecting ground states onto different charged sectors. 
Our main result is a complete characterization of the set of ground states of the abelian quantum double models,
which is the content of the following theorem:

\begin{theorem}\label{thm:main}
Let $\omega\in K$ be a ground state of the quantum double model.
Then there exists a convex decomposition of $\omega$  as
\begin{equation}
\omega = \sum_{\chi\in \widehat{G},c \in G} \lambda_{\chi,c}(\omega) \omega^{\chi,c} \qquad \text{ where } \quad \omega^{\chi,c}\in K^{\chi,c}.
\end{equation}
For all $(\chi,c)\in \widehat{G}\times G$, $K^{\chi,c}$ is a face in the set of all states.
In particular, if $\omega^{\chi,c} \in K^{\chi,c}$ is an extremal point of $K^{\chi,c}$ 
then $\omega^{\chi,c}$ is a pure state.
If $\omega^{\chi,c} \in K^{\chi,c}$ is a pure state then $\omega^{\chi,c}$ 
is equivalent to a single excitation ground state defined in \cite{FiedlerN},
where two states are equivalent if their corresponding GNS representations are unitarily equivalent.
\end{theorem}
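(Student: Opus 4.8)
\emph{Proof strategy.}
The plan is to trade the infinite-volume ground-state condition for a sequence of \emph{finite-volume} problems. Concretely, the first and main step is to establish the following reduction: for every box $\Lambda_L$ there is a self-adjoint boundary interaction $\partial_L\in\calA_{\Lambda_L}$, supported near $\partial\Lambda_L$, such that the restriction of \emph{any} $\omega\in K$ to $\calA_{\Lambda_L}$ is a ground state of $\hat H_L:=H_L+\partial_L$, i.e.\ its density matrix on $\calH_L$ is supported on $\ker\hat H_L$. The input here is the local variational characterisation of infinite-volume ground states \cite{BratteliKR}: the condition $\omega(A^*\delta(A))\ge 0$ for all $A\in\calA_{\Lambda_L}$ is the statement $\omega(A^*[H_{\Lambda_L^+},A])\ge 0$ for all such $A$, where $\Lambda_L^+$ is $\Lambda_L$ together with the stars and plaquettes meeting it; after tracing out the exterior and optimising over it (a Dobrushin--Lanford--Ruelle type manoeuvre, cf.\ \cite{FannesW}) this becomes a condition on $\omega\restriction\calA_{\Lambda_L}$ alone. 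Making $\partial_L$ explicit and computing $\ker\hat H_L$ is exactly what the identity of Section~\ref{sec:supersel} --- expressing the relevant boundary projector as a sum of products of local star and plaquette projectors --- is for: the upshot is that $\partial_L$ relaxes the interaction near $\partial\Lambda_L$ just enough that $\ker\hat H_L$ contains, besides the frustration-free subspace $\ker H_L$, the states representing a single bulk excitation paired with a conjugate excitation pinned to $\partial\Lambda_L$ (the endpoint of the string leaving the box), and that $\ker\hat H_L$ decomposes orthogonally according to the total charge enclosed by $\partial\Lambda_L$ as $\ker\hat H_L=\bigoplus_{(\chi,c)\in\widehat G\times G}\hat\calG_L^{\chi,c}$, with $\hat\calG_L^{1,e}\supseteq\ker H_L$.

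Let $\Pi_L^{\chi,c}\in\calA_{\Lambda_L}$ be the projection onto $\hat\calG_L^{\chi,c}$; it is supported near $\partial\Lambda_L$ and is built from the holonomy (closed ribbon) operators encircling $\partial\Lambda_L$. Being \emph{closed} loops, these commute with every $A_v$ and $B_f$, hence $\delta(\Pi_L^{\chi,c})=0$; for $L$ large $\Pi_L^{\chi,c}$ also commutes with any fixed $A\in\calA_{loc}$, and $\sum_{\chi,c}\Pi_L^{\chi,c}$ acts as the identity on $\ker\hat H_L$. Therefore, for $A\in\calA_{loc}$ and $L$ large, $\omega(A)=\sum_{\chi,c}\omega(\Pi_L^{\chi,c}A)$, and since $B:=\Pi_L^{\chi,c}A\in\calA_{loc}$ satisfies $B^*\delta(B)=\Pi_L^{\chi,c}A^*\delta(A)$, one gets $\omega(\Pi_L^{\chi,c}A^*\delta(A))=\omega(B^*\delta(B))\ge0$. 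Passing to a subsequence $L_n\to\infty$ along which $\lambda_{\chi,c}(\omega):=\lim_n\omega(\Pi_{L_n}^{\chi,c})$ and the weak$^*$ limits $\omega^{\chi,c}:=\lim_n\lambda_{\chi,c}(\omega)^{-1}\,\omega(\Pi_{L_n}^{\chi,c}\,\cdot)$ exist (possible by weak$^*$ compactness of the state space; terms with $\lambda_{\chi,c}(\omega)=0$ are dropped), one obtains $\sum_{\chi,c}\lambda_{\chi,c}(\omega)=1$ and the convex decomposition $\omega=\sum_{\chi,c}\lambda_{\chi,c}(\omega)\,\omega^{\chi,c}$ with each $\omega^{\chi,c}$ a ground state. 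Setting $K^{\chi,c}:=\{\nu\in K:\lim_{L\to\infty}\nu(\Pi_L^{\chi,c})=1\}$ (which one checks coincides with the sector constructed in Section~\ref{sec:results}), the point that $\omega^{\chi,c}\in K^{\chi,c}$ is where one must argue that for a ground state the charge enclosed by $\partial\Lambda_L$ stabilises, e.g.\ that $\pi_\omega(\Pi_L^{\chi,c})$ converges strongly to a central projection of $\pi_\omega(\calA)''$. Moreover $K^{\chi,c}$ is a face: if $\nu=t\nu_1+(1-t)\nu_2\in K^{\chi,c}$ with $\nu_i$ states and $t\in(0,1)$, then $t(1-\nu_1(\Pi_L^{\chi,c}))+(1-t)(1-\nu_2(\Pi_L^{\chi,c}))\to0$ with nonnegative summands forces $\nu_i(\Pi_L^{\chi,c})\to1$, and $\nu_i\in K$ since $K$ is a face (Thm.~5.3.37 of \cite{BratteliR}), so $\nu_i\in K^{\chi,c}$. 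Since $K^{\chi,c}$ is a face of the full state space, an extremal point $\omega^{\chi,c}$ of $K^{\chi,c}$ is extremal in the state space (if $\omega^{\chi,c}=t\nu_1+(1-t)\nu_2$ with $\nu_i$ states, the face property puts $\nu_i\in K^{\chi,c}$, and extremality in $K^{\chi,c}$ gives $\nu_1=\nu_2=\omega^{\chi,c}$), i.e.\ it is pure.

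It remains to identify pure $\omega^{\chi,c}\in K^{\chi,c}$. First, the single excitation state $\omega_{\chi,c}=\omega_0\circ\rho_{\chi,c}$ of \cite{FiedlerN,Naaijkens} --- with $\omega_0$ the frustration-free ground state and $\rho_{\chi,c}$ a localised endomorphism --- lies in $K^{\chi,c}$: since the quantum double anyons are time invariant, $\tau_t\circ\rho_{\chi,c}=\rho_{\chi,c}\circ\tau_t$, so $\omega_{\chi,c}(A^*\delta(A))=\omega_0(\rho_{\chi,c}(A)^*\delta(\rho_{\chi,c}(A)))\ge0$ because $\omega_0$ is a ground state and $\rho_{\chi,c}(A)\in\calA_{loc}$, while the holonomy of $\rho_{\chi,c}$ around a large loop is $(\chi,c)$, giving $\omega_{\chi,c}(\Pi_L^{\chi,c})\to1$. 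Now let $\omega^{\chi,c}\in K^{\chi,c}$ be pure. Using that $\omega^{\chi,c}\restriction\calA_{\Lambda_L}$ is supported on $\hat\calG_L^{\chi,c}$, whose generating vectors agree with frustration-free vectors off a deformable ribbon of charge $(\chi,c)$ running from the bulk to $\partial\Lambda_L$, I would show that $\pi_{\omega^{\chi,c}}$ satisfies the superselection (cone) criterion relative to $\pi_{\omega_0}$ from \cite{FiedlerN} with total charge $(\chi,c)$, i.e.\ $\pi_{\omega^{\chi,c}}\restriction\calA(\Lambda^c)\cong\pi_{\omega_0}\restriction\calA(\Lambda^c)$ for cones $\Lambda$. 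By the classification of superselection sectors in \cite{FiedlerN}, $\pi_{\omega^{\chi,c}}$ is then equivalent to $\pi_{\omega_0}\circ\rho_{\chi,c}=\pi_{\omega_{\chi,c}}$; as both GNS representations are irreducible, quasi-equivalence is unitary equivalence, which is the asserted equivalence.

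The main obstacle is the finite-volume reduction of the first paragraph: pinning down the boundary term $\partial_L$ and proving that every infinite-volume ground state restricts to a ground state of $\hat H_L$. This rests on the new boundary-projector identity of Section~\ref{sec:supersel} and on a careful treatment of the boundary layer via the local variational principle, and it is the step that genuinely uses that the model lives in two dimensions. The secondary difficulty is the last step: converting the finite-volume description of $\hat\calG_L^{\chi,c}$ into the cone-type superselection criterion that allows one to invoke the sector classification of \cite{FiedlerN}, in particular controlling the direction and fluctuations of the ribbon joining the bulk charge to the boundary uniformly as $L\to\infty$.
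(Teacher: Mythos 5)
Your outline reproduces the paper's overall architecture (a finite-volume reduction via boundary terms built from boundary charge projectors, decomposition of a ground state by total-charge projectors, the face argument, and identification of pure charged states with the single-excitation states), but the two load-bearing steps are asserted rather than proved. First, the finite-volume reduction: the claim that \emph{every} $\omega\in K$ restricts to a ground state of $H_L$ plus a \emph{state-independent} boundary term does not follow from the variational principle of \cite{BratteliKR} together with a ``trace out the exterior and optimise'' manoeuvre --- that principle only says $\omega$ minimizes the energy among states agreeing with it outside the region, and any boundary correction extracted this way would a priori depend on $\omega$. The paper obtains the universal boundary term $V_L^\epsilon+V_L^\mu=D_L^\epsilon+D_L^\mu$ in Lemma~\ref{lem:gshambound} only through model-specific work: applying the ground-state inequality to judiciously chosen products of ribbon operators and plaquette projectors, summing over charge configurations, an induction proving $\omega\big(\sum_f (I-B_f)\big)=\omega\big(I-\prod_f B_f\big)$, and the identity $D_L^\mu=V_L^\mu$ of Lemma~\ref{lem:globprojboundaryop}; nothing in your first paragraph substitutes for this. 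Second, your decomposition uses only subsequential weak$^*$ limits, so you never establish that $\lim_{L}\omega(D_L^{\chi,c})$ exists, nor that the components carry sharp charge --- exactly what membership in $K^{\chi,c}$ as in Definition~\ref{def:chargedgs} and the face/purity statements require. You flag this (``the charge stabilises''), but the mechanism the paper uses --- monotonicity of $\omega(D_L^{\chi,c})$ in $L$ coming from the inclusions \eqref{eqn:chargecont1}--\eqref{eqn:chargecont4} of projected finite-volume ground spaces, plus the Cauchy estimate \eqref{eqn:cauchyseq} --- is absent, and your proposed substitute (strong convergence of $\pi_\omega(\Pi_L^{\chi,c})$ to a central projection) is itself unproven and not easier than what it replaces.

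Two further problems occur in the last step. The claim $\tau_t\circ\rho_{\chi,c}=\rho_{\chi,c}\circ\tau_t$ is false: the localized automorphism acts nontrivially on the star and plaquette terms at the endpoint site (e.g.\ $\alpha_\rho^{\chi,c}(A_v^g)=\bar{\chi}(g)A_v^g$ there), so it does not commute with $\delta$; the correct argument that $\omega_s^{\chi,c}\in K$ goes through the positive boundary Hamiltonians $H_L^{\epsilon,\mu}$ and Lemma~\ref{lem:finitegslim}. Finally, identifying a pure $\omega^{\chi,c}\in K^{\chi,c}$ with $\omega_s^{\chi,c}$ by verifying the cone superselection criterion for $\pi_{\omega^{\chi,c}}$ is precisely the hard content, and you give no argument for it (the criterion must be verified for \emph{all} cones). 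The paper avoids this route: it shows directly that $\omega^{\chi,c}$ and $\omega_s^{\chi,c}$ are uniformly close on $\calA_{loc}\cap\calA_{L^c}$, by controlling the cut-down functional $\omega(D_{L'}^{\chi,c}D_L^{\chi,c}\,\cdot\,D_L^{\chi,c}D_{L'}^{\chi,c})/\omega(D_L^{\chi,c}D_{L'}^{\chi,c})$ on an annulus through an explicit ribbon-decomposition computation, and then invokes the equivalence criterion for pure states (Corollary 2.6.11 of \cite{BratteliR}). As it stands, then, your proposal is a faithful sketch of the paper's strategy, but the arguments that make it work --- the ribbon-operator proof of the finite-volume reduction, the monotonicity/Cauchy argument for charge stabilization, and the annulus comparison with the single-excitation state --- are missing or incorrect.
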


The proof of this result is split up in different theorems, and can be found in Section~\ref{sec:results}.
In particular, the statement and its proof is found as a combined result of Theorem~\ref{thm:qdoubgs1}, Corollary~\ref{cor:face}, and Theorem~\ref{thm:eqstates}.

\section{Excitations and superselection sectors}\label{sec:supersel}
We study the charges, or superselection sectors, of the quantum double model.
To this end, we study the excitations of the model, which can be obtained by using 
what are called \emph{ribbon operators}.
We first recall the definition of ribbon operators and review some properties.
We will just state the properties that will be necessary for the proof of our results, and refer to the existing literature for proofs of these facts~\cite{BombinMD,FiedlerN,Kitaev}.
A good understanding of these operators and how they can be used to build up the local Hilbert spaces is essential to our proof.
In particular, we need to be able to detect such excitations with local projections, which we will call \emph{charge projections}.
We will prove the projections that measure the total charge in a box are supported on the boundary of the box.

After introducing the ribbon operators and charge projectors, we turn to the main topic of interest: the infinite volume ground states. 
In this section we will introduce the unique frustration-free ground state.
We will also construct ground states that are not frustration-free, namely, the single excitation states.
It turns out that they can be obtained by judiciously choosing boundary terms of finite volume Hamiltonians, and taking weak$^*$ limits of finite volume ground states. As such, they can be identified with the superselection sectors of the theory, 
cf.~\cite{FiedlerN,Naaijkens}.
In the next section we will show that in fact all ground states are (quasi-)equivalent to such states.

\subsection{Ribbon operators} 
\begin{figure}
\subfloat{\includegraphics[width=0.4\textwidth]{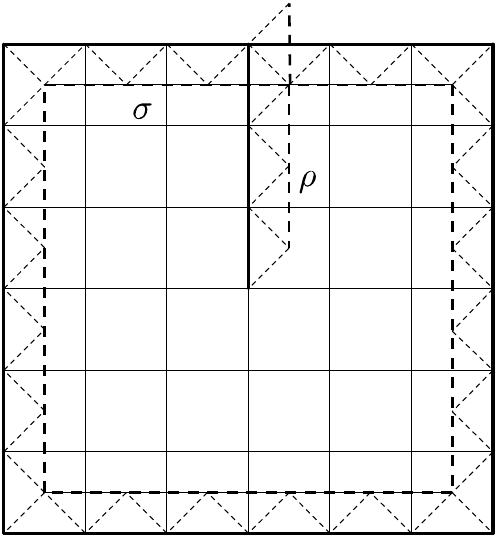}}
\hfill
\subfloat{\includegraphics[width=0.4\textwidth]{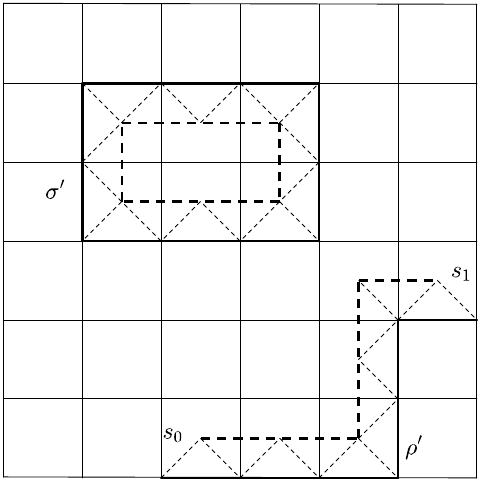}}	
\caption{The region $\Lambda_3$ is depicted with typical configurations of ribbons.
On the left, $\rho$ connects a site in $\mathcal{S}_3$ to a site on the boundary,
and intersects $\sigma = \partial \Lambda_3$, the boundary ribbon of $\Lambda_3$.
On the right, $\rho'$ is an open ribbon connecting site  $s_0 \in \mathcal{S}_3$ to $s_1 \in \mathcal{S}_3$, while $\sigma'$ is a closed ribbon.}
\label{fig:ribbons}
\end{figure}

We abuse notation and use $v$ and $f$ to also denote a vertex and face of the lattice $\ZZ^2$.
A \emph{site} is a pair $s=(v,f)$ of a vertex $v \in \ZZ^2$ and neighboring face $f$. 
Let $\mathcal{S}_L$ denote the set of all sites $s=(v,f)$ such that $v \in \ZZ^2 \cap [-L,L]^2$ and the corresponding face $f \in \mathcal{F}_L$.
We say that a site $s=(v,f)$ is on the boundary of $\Lambda_L$ if $v \in \ZZ^2 \cap [-L,L]^2$ and 
the corresponding face $f \in \mathcal{F}_{L+1} \setminus \mathcal{F}_{L}$.
As we will see, excitations of the model are located at sites.
A \emph{ribbon} $\rho$ is a sequence of adjacent sites connecting two sites $s_0$ and $s_1$. 
We assume ribbons avoid self-crossing 
and label $\partial_0\rho =  s_0$ as the start of the ribbon and $\partial_1 \rho = s_1$ as its end. 
In particular, note that ribbons carry a direction (see~\cite{BombinMD} for how this relates to the direction of the lattice).
We also assume that ribbons have at least two distinct sites.
A ribbon is said to be \emph{open} if $s_0 \neq s_1$ and \emph{closed} if $s_0 = s_1$, see Figure~\ref{fig:ribbons}.

\begin{figure}
	\begin{center}
\includegraphics[width=\textwidth]{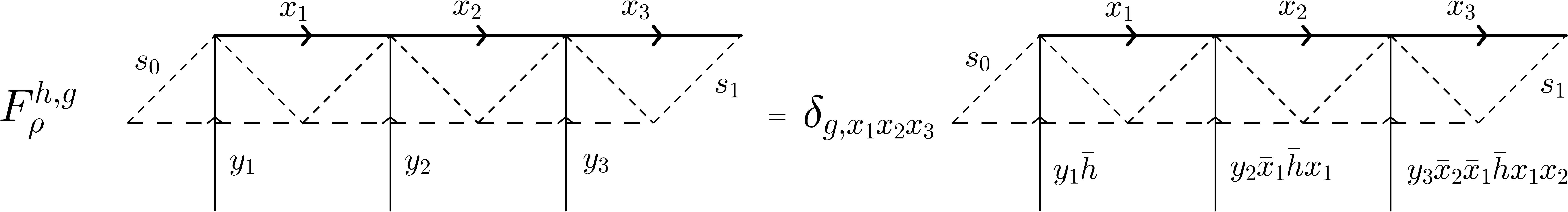}
	\end{center}
\caption{Definition of the ribbon operator $F_\rho^{h,g}$.}
\label{fig:defnrib}
\end{figure}

For any ribbon $\rho$ and $g,h \in G$ the ribbon operator $F_\rho^{g,h}$ is defined as in Figure~\ref{fig:defnrib}.
The ribbon operators can also be defined recursively as concatenations of elementary triangle operators, see \cite{BombinMD}. 
If $\rho_0$ and $\rho_1$ are two ribbons such that $\partial_1 \rho_0 = \partial_0\rho_1$ 
then for the ribbon $\rho = \rho_0\rho_1$,  where the product is defined by concatenation of ribbons,
the ribbon operator satisfies $F_\rho^{h,g}  = \sum_{k\in G} F_{\rho_0}^{h,k} F_{\rho_1}^{\bar{k}  h k, \bar{k} g}$.
As we will see, the operators $F_\rho^{h,g}$ create excitations at the endpoints of the ribbons.
However, in general $F^{h,g}_\rho$ will yield a superposition of different excitation types, and it is more convenient to choose a different basis.
Essentially, what one does is to decompose the space of excitations as invariant subspaces with respect to a \emph{local} action of the quantum double symmetry $\mathcal{D}(G)$ at each site.
This symmetry is implemented by the star and plaquette operators $A^g$ and $B^h$~\cite{BombinMD,Kitaev}.
In this new basis the ribbon operators are labeled by pairs $(\chi, c) \in \widehat{G} \times G$, where $\widehat{G}$ is the group of characters of $G$, and we define
\begin{equation}
F_\rho^{\chi,c} := \sum_{g \in G} \overline{\chi}(g) F_\rho^{\bar{c}, g}.
\end{equation}
If $\rho$ consists of a single edge, then the family of ribbon operators generate the local algebra on that edge.
Similarly, for any finite subset $\Lambda$, the family of ribbon operators supported in $\Lambda$ generate the local observable algebra $\calA_\Lambda$.

We end this overview by listing some of the properties of the ribbon operators that we will use frequently.
Most properties can be verified easily with a direct computation, but see~\cite{BombinMD,FiedlerN,Kitaev} for more information:
\begin{enumerate}[a.)]
\item For operators acting along the same ribbon:
\begin{equation}\label{eqn:ribprop1}
 F^{\chi,c }_\rho F^{\sigma, d}_\rho = F_{\rho}^{\chi\sigma,cd} \qquad \text{and} \qquad (F^{\chi,c}_\rho)^* = F^{\bar{\chi}, \bar{c}}_\rho.
\end{equation}
\item If $\rho$ is an open ribbon connecting sites $\partial_0 \rho = (v_0, f_0)$ and $ \partial_1 \rho= (v_1,f_1)$ then for all $k \in G$ we have
\begin{align}
A_{v_0}^k F_\rho^{\chi,c} &= \chi(k) F_\rho^{\chi,c} A_{v_0}^k, 
& A_{v_1}^k F_\rho^{\chi,c}&= \bar{\chi}(k) F_\rho^{\chi,c} A_{v_1}^k, 
\label{eqn:ribstarrel} \\
B_{f_0}^k F_\rho^{\chi,c} &= F_\rho^{\chi,c} B_{f_0}^{k\bar{c}}, 
& B_{f_1}^k F_\rho^{\chi,c} &= F_\rho^{\chi,c} B_{f_1}^{ck}
.\label{eqn:ribplaqrel}	
\end{align}
In all other cases, the star and plaquette interaction terms commute with the open ribbon operators.
From equations \eqref{eqn:ribstarrel} and \eqref{eqn:ribplaqrel}, we can compute 
the commutation relation with the local Hamiltonian $H_L$ as
\begin{equation}\label{eqn:ribHamrel}
\begin{split}
[H_L, F_{\rho}^{\chi,c}] &= F_{\rho}^{\chi,c} \bigg(B_{f_0} - B_{f_0}^{\bar{c}}  + B_{f_1} - B_{f_1}^c \\
&\quad+ \sum_{k \in G} \left( 1 - \chi(k) \right) A_{v_0}^k + \left(1 - \bar{ \chi}(k)\right) A_{v_1}^k \bigg).
\end{split}
\end{equation}
\item Let $\Omega \in \calG_L$ be a ground state and $\rho$ be an open ribbon.  We can compute the energy introduced by the ribbon operators from the relations \eqref{eqn:ribstarrel}, \eqref{eqn:ribplaqrel}, and \eqref{eqn:ribHamrel},
\begin{equation}\label{eqn:ribenergy}
H_L F_\rho^{\chi,c} \Omega = C_\rho ( 2- \delta_{\chi, \iota} - \delta_{c, e}  ) F_{\rho}^{\chi,c}  \Omega,
\end{equation}
where 
\[ C_\rho = 
\left\{ \begin{array}{ll}
2 & \mbox{ if } \partial_i \rho \in \mathcal{S}_L \text{ for }   i = 0,1 \\
1 & \mbox{ if } \partial_i \rho \in \mathcal{S}_L, \  \partial_{i+1}\rho \notin \mathcal{S}_L \\
0 & \mbox{ if } \partial_i \rho \notin \mathcal{S}_L \text{ for }   i=0,1.
\end{array}\right. \]
Thus, $\rho$ generates \emph{excitations} at its endpoints.
\item If $\rho$ is a closed ribbon then for all $\Omega \in \calG_{L}$,
\begin{equation}\label{eqn:closedribbon}
F_\rho^{\chi,c} \Omega = \Omega \qquad \quad  \forall (\chi, c) \in \widehat{G}\times G
\end{equation}
and 
\begin{equation}\label{eqn:ribclosedcom}
[ F_\rho^{\chi,c}, A_v^g] = [ F_\rho^{\chi,c}, B_f^h] = 0 \qquad \quad \forall v\in \mathcal{V}_\mathcal{B}, f\in \mathcal{F}_\mathcal{B}.
\end{equation}
\item If $\rho = \rho_0\rho_1$, that is, $\partial_1 \rho_0 = \partial_0 \rho_1$ then the ribbon operators obey
\begin{equation}\label{eqn:ribbonconcatenation}
F_\rho^{\chi, c} = F_{\rho_0}^{\chi,c} F_{\rho_1}^{\chi,c}.
\end{equation}
\item A complete set of eigenvectors of $H_L$ for $\calH_L$ is,
\begin{equation}\label{eqn:qdgsspan}
\left\{ \prod_{i} F_{\rho_i}^{\chi_i, c_i} \Omega : \forall \Omega\in \calG_L, (\chi_i, c_i) \in\widehat{G}\times G, \rho_i \text{ a ribbon } \right\}.
\end{equation}
\item  When two ribbons intersect once (as in left Figure~\ref{fig:ribbons}),
\begin{equation}\label{eqn:ribbonrelation}
F_\rho^{\chi,c} F_\sigma^{\xi,d} = \chi(d)\bar{\xi}(c)  F_\sigma^{\xi,d}F_\rho^{\chi,c}.
\end{equation}
In the case of multiple crossings, one can induct on the formula above by decomposing the ribbons $\rho$ and $\sigma$ into sections where only one crossing occurs using the concatenation formula \eqref{eqn:ribbonconcatenation}.
\item  Ribbon operators satisfy \emph{path independence} in the ground state, that is, 
if $\rho$ and $ \sigma $ are ribbons such that $\partial_i \rho = \partial_i\sigma$ for $ i= 0,1$ then
\begin{equation}\label{eqn:ribbonpathind}
F_\rho^{\chi,c}\Omega  = F_\sigma^{\chi,c} \Omega \qquad \text{ for all } \qquad \Omega\in\calG_L.
\end{equation}
\end{enumerate}

\subsection{Local and global charge projectors}\label{sec:finvolume}
We can detect the presence of an excitation and the charge type localized at a site $s=(v,f)$ 
with the orthogonal projectors 
\begin{align}
D_v^\chi &:= \frac{1}{\abs{G}} \sum_{g\in G} \overline{\chi}(g) A_v^g  &\text{ for } \chi \in \widehat{G}, \\
D_f^c &:= B_f^c  &\text{ for } c \in G.
\end{align}
These can be obtained by considering the action of the quantum double at each site~\cite{BombinMD}.
The first detects ``electric'' charges labeled by the characters of $G$, while the latter project on the ``magnetic'' charges labeled by group elements.
The electric charges are located on the vertices, while the magnetic charges are located on faces.
Since we only consider abelian models, these two types of excitations can be treated separately.
One can check that, in the case of abelian groups $G$, the projectors commute.
Thus, there is no ambiguity in defining the operator $D_s^{\chi, c} := D_v^\chi D_f^c$ for the site $s =(v,f)$.

The local charge projectors have the following properties. They follow readily using the properties listed in the previous section:
\begin{align}
D_v^\chi \Omega&= \delta_{\chi,\iota} \Omega, & D_f^c \Omega&=  \delta_{c, e} \Omega &\text{ for all  } \Omega\in \calG_L \label{eqn:localprojgs}\\
D_v^{\chi} F_\rho^{\xi, d} & = F_\rho^{\xi, d} D_v^{\chi\bar{\xi}},  &
D_f^{c} F_\rho^{\xi, d} &= F_\rho^{\xi, d} D_f^{c\bar{d}}
&\text{ if } \partial_0\rho = (v,f) \neq \partial_1\rho  \label{eqn:localprojribbonrelation1}\\
D_v^{\chi} F_\rho^{\xi, d} & = F_\rho^{\xi, d} D_v^{\chi\xi},  &
D_f^{c} F_\rho^{\xi, d} &= F_\rho^{\xi, d} D_f^{dc}
&\text{ if } \partial_1\rho = (v,f) \neq \partial_0\rho\label{eqn:localprojribbonrelation2} \\
D_v^{\chi}D_v^{\xi} &= \delta_{\chi,\xi} D_s^{\chi}, &
D_f^{c}D_f^{d} &= \delta_{c,d} D_f^{c}\\
\sum_{\chi\in\widehat{G}} D_v^{\chi} &= I, &  \sum_{ c \in G} D_f^{c} & = I  &
 \implies \sum_{ \chi\in \widehat{G}, c \in G} D_s^{\chi,c} = I \label{eqn:localprojcomplete}.
\end{align}
Thus the projections onto the charges at a site $s$ form a complete set of orthogonal projections, by equation~\eqref{eqn:localprojcomplete}.
The ground state projector onto the ground state subspace $\calG_L$ is a product of all local charge projectors with trivial charge, $\prod_{s\in \mathcal{S}_L} D_s^{\iota,e}$.

A \emph{global} (or \emph{total}) \emph{charge projector} selects the total charge $(\chi,c)$ in the region $\Lambda_L$.
Heuristically, if for each face $f\in \mathcal{F}_L$ there is a local charge $c_f$, 
then the total charge of magnetic type is $\prod_{f\in \mathcal{F}_L} c_f = c$ (again, we use that $G$ is abelian).
For example, if magnetic charges $c$ and $\bar{c}$ are located on two faces and all other faces carry trivial charge, then the total charge in the region is $ c \bar{c} = e$.
Thus, for a charge $c$ the \emph{conjugate charge} is given by the inverse group element $\bar{c}$.
Here ``conjugate charge'' is standard terminology: it means that you can combine the two charges to obtain a trivial charge. It has nothing to do with conjugation in the group, although here the notions happen to coincide.

Note that the example above in particular shows that trivial global charge does not mean that there are no excitations in the region $\Lambda_L$.
Rather, it means that all charges add up to the trivial charge (similarly, the total electric charge of a state with an electron and a positron is trivial).
An open ribbon operator with both endpoints in $\mathcal{S}_L$  
generates a charge $c$ (see equation~\eqref{eqn:localprojribbonrelation1}), and a conjugate charge $\bar{c}$ (from equation~\eqref{eqn:localprojribbonrelation2}) pair at each of its endpoints. 
Thus, charge is created locally at its endpoints but the total charge of the initial state is preserved.
However, an open ribbon operator with only one of its endpoints in the region $\Lambda_L$ 
does \emph{not} conserve the global charge in the region.
The same is true for charges of electric type, where the multiplication is in the dual group $\widehat{G}$.

We define the global charge projectors by
\begin{equation}
D_L^\chi:= \sum_{\prod_{v}\chi_v = \chi} \ 
\prod_{v\in\mathcal{V}_L} D_v^{\chi_v}, \qquad 
D_L^{c} := \sum_{\prod_{f}c_f= c}  \ 
\prod_{f \in \mathcal{F}_L} D_f^{c_f},
\end{equation}
where the sums are over all configurations $\{ \chi_v \}_{v \in \mathcal{V}_L}$ such that $ \prod_{v}\chi_v = \chi$
and configurations $\{ c_f\}_{f \in \mathcal{F}_L}$ such that $\prod_{f}c_f= c$, respectively.

To project onto (non-trivial) electric $\epsilon$ or magnetic $\mu$ charge types, we have the projectors, respectively,
\begin{align}\label{eqn:globcharge}
D_L^\epsilon  &:= \sum_{\chi\neq \iota}D_L^\chi = I - D_L^\iota, \\
D_L^\mu &:= \sum_{c\neq e} D_L^c = I - D_L^e.
\end{align}
From the definitions above it appears that the global charge projectors are supported on the entire region $\Lambda_L$. 
We will show that they are actually boundary operators.

To this end, we consider closed ribbon operators encircling the boundary of $\Lambda_L$ given by the operators
\begin{equation}\label{eqn:boundribbon}
V_L^\epsilon := \frac{1}{\abs{G}} \sum_{c\in G} \left(I - F_{\partial L}^{\iota, c}\right), \qquad
V_L^\mu := \frac{1}{\abs{G}} \sum_{\chi \in \widehat{G}} \left( I -  F_{\partial L}^{\chi,e} \right),
\end{equation}
where we use $\partial L$ to denote the closed ribbon running along the boundary of $\Lambda_L$, see Figure~\ref{fig:ribbons}.
Physically this can be thought of as an interferometry experiment~\cite{BondersonSS}:
we create a conjugate pair of charges from the ground state, and take one around the region.
Due to the anyonic nature of the charges, if there is a charge present in the region $\Lambda_L$, this is a non-trivial operation for at least one of the charges.
In principle, this can be detected and used to determine the total charge in $\Lambda_L$.

We first show that $V_L^\varepsilon$ and $V_L^\mu$ are in fact projections (cf.~\cite[Sect. B.9]{BombinMD}).
\begin{proposition}
The operators $V_L^\epsilon$ and $V_L^\mu$ are orthogonal projections.
\end{proposition}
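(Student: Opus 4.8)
The plan is to show that each of $V_L^\epsilon$ and $V_L^\mu$ is self-adjoint and idempotent, working entirely with the ribbon operator calculus on the closed boundary ribbon $\partial L$. I will treat $V_L^\epsilon$ in detail; the argument for $V_L^\mu$ is identical with the roles of $G$ and $\widehat G$ interchanged (via $F_{\partial L}^{\chi,e}$ in place of $F_{\partial L}^{\iota,c}$).

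\textbf{Self-adjointness.} First I would observe that $V_L^\epsilon = \frac{1}{\abs{G}}\sum_{c\in G}(I - F_{\partial L}^{\iota,c})$, so by property \eqref{eqn:ribprop1} we have $(F_{\partial L}^{\iota,c})^* = F_{\partial L}^{\bar\iota,\bar c} = F_{\partial L}^{\iota,\bar c}$. Since $c \mapsto \bar c$ is a bijection of $G$, taking the adjoint just permutes the summands, and hence $V_L^\epsilon = (V_L^\epsilon)^*$. The same reasoning, using $(F_{\partial L}^{\chi,e})^* = F_{\partial L}^{\bar\chi,e}$ and that $\chi \mapsto \bar\chi$ is a bijection of $\widehat G$, gives $V_L^\mu = (V_L^\mu)^*$.

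\textbf{Idempotency.} The key structural input is the first relation in \eqref{eqn:ribprop1}, which for operators on the same ribbon and with trivial character gives $F_{\partial L}^{\iota,c} F_{\partial L}^{\iota,d} = F_{\partial L}^{\iota\cdot\iota,\, cd} = F_{\partial L}^{\iota,cd}$. In other words, $c \mapsto F_{\partial L}^{\iota,c}$ is a (not necessarily unitary, but in fact unitary by the adjoint computation above) representation of $G$. Writing $P := \frac{1}{\abs{G}}\sum_{c\in G} F_{\partial L}^{\iota,c}$, so that $V_L^\epsilon = I - P$, I would compute
\begin{equation}
P^2 = \frac{1}{\abs{G}^2}\sum_{c,d\in G} F_{\partial L}^{\iota,c} F_{\partial L}^{\iota,d} = \frac{1}{\abs{G}^2}\sum_{c,d\in G} F_{\partial L}^{\iota,cd} = \frac{1}{\abs{G}^2}\sum_{c\in G}\sum_{d\in G} F_{\partial L}^{\iota,d} = \frac{1}{\abs{G}}\sum_{d\in G} F_{\partial L}^{\iota,d} = P,
\end{equation}
using the substitution $cd \mapsto d$ for each fixed $c$. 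Hence $P$ is an idempotent, and combined with $P^* = P$ it is an orthogonal projection; therefore $V_L^\epsilon = I - P$ is one as well. For $V_L^\mu$, set $Q := \frac{1}{\abs{G}}\sum_{\chi\in\widehat G} F_{\partial L}^{\chi,e}$; the relation $F_{\partial L}^{\chi,e} F_{\partial L}^{\xi,e} = F_{\partial L}^{\chi\xi,e}$ makes $\chi \mapsto F_{\partial L}^{\chi,e}$ a representation of $\widehat G$, and the identical averaging computation gives $Q^2 = Q = Q^*$, so $V_L^\mu = I - Q$ is an orthogonal projection.

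\textbf{Expected obstacle.} I do not anticipate a genuine obstacle: the proof is essentially the standard ``group-averaging projector'' argument, and everything needed is already packaged in the listed ribbon-operator properties. The only point requiring a little care is making sure that the relevant multiplication rule \eqref{eqn:ribprop1} applies to a \emph{closed} ribbon $\partial L$ — but \eqref{eqn:ribprop1} is stated for operators acting along the same ribbon with no openness hypothesis, so it covers $\partial L$ directly. One might also remark, for context, that $P$ and $Q$ are exactly the projectors onto the subspace where the closed boundary ribbon acts trivially (cf.\ \eqref{eqn:closedribbon}), which foreshadows the identification of $V_L^\epsilon, V_L^\mu$ with the global charge projectors in the sequel, but this is not needed for the proposition itself.
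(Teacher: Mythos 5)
Your proof is correct and follows essentially the same route as the paper's: both arguments rest on the multiplication and adjoint rules of \eqref{eqn:ribprop1} for operators on the closed ribbon $\partial L$, combined with the standard group-averaging/reindexing computation (the paper expands $(V_L^\epsilon)^2$ directly, while you equivalently isolate the average $P$ and show $P^2=P=P^*$). No gaps.
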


\begin{proof}
From equation \eqref{eqn:ribprop1}, we have that 
\[\sum_{\chi,\chi'} F_{ \partial L}^{\chi, e} F_{ \partial L}^{\chi', e}= 
\sum_{\chi,\chi'} F_{ \partial L}^{\chi \chi' , e} = \abs{G} \sum_{\chi} F_{\partial L}^{\chi,e}.\]
Thus,
\begin{align*}
(V_L^\epsilon)^2 &= \frac{1}{\abs{G}^2} \sum_{\chi,\chi'} \left(I - F_{\partial L}^{\chi, e } -  F_{\partial L}^{\chi', e } 
+ F_{\partial L}^{\chi, e } F_{\partial L}^{\chi', e } \right) \\
&= \frac{1}{\abs{G}} \sum_{\chi} \left( I - F_{\partial L}^{\chi, e }  \right) = V_L^\epsilon.
\end{align*}
Also with equation~\eqref{eqn:ribprop1} we find
\begin{align*}
(V_L^\epsilon)^* &= \frac{1}{\abs{G}} \sum_{\chi} I - F_{\partial L}^{\chi, e *}\\
& = \frac{1}{\abs{G}} \sum_{\chi} I - F_{\partial L}^{\bar{\chi}, e} = V_L^\epsilon.
\end{align*}
The calculation to show that $V_L^\mu = (V_L^\mu)^2 = (V_L^\mu)^*$ is similar.
\end{proof}

The following lemma states that localized ribbon operators do not change the total charge.
In other words, local operation cannot change the charged sector.  

\begin{lemma}\label{lem:ribchargeinv}
Let $L' > L$.  Then, $[D_{L'}^\chi,A] = [D_{L'}^c, A] =0 $ for all
 $ \chi\in \widehat{G}, c\in G$ and $A \in \calA_L$.  
\end{lemma}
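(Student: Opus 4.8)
The plan is to reduce the claim to commutation with a generating set of $\calA_L$. Recall from Section~\ref{sec:supersel} that $\calA_L=\calA_{\Lambda_L}$ is generated, as an algebra, by the ribbon operators $F_\rho^{\xi,d}$ with $\rho$ a ribbon supported in $\Lambda_L$ and $(\xi,d)\in\widehat G\times G$. Since $D_{L'}^\chi$ and $D_{L'}^c$ are bounded, it therefore suffices to show $[D_{L'}^\chi,F_\rho^{\xi,d}]=[D_{L'}^c,F_\rho^{\xi,d}]=0$ for every such $\rho$ and all $(\xi,d)$; the general statement then follows by taking products and linear combinations. If $\rho$ is a \emph{closed} ribbon this is immediate from~\eqref{eqn:ribclosedcom}, which says $F_\rho^{\xi,d}$ commutes with every $A_v^g$ and every $B_f^h$, hence with every local charge projector $D_v^\chi$ and $D_f^c$, and therefore with $D_{L'}^\chi$ and $D_{L'}^c$.

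So assume $\rho$ is an \emph{open} ribbon with $\partial_0\rho=(v_0,f_0)$ and $\partial_1\rho=(v_1,f_1)$. The decisive geometric observation --- and the only place the strict inequality $L'>L$ enters --- is that both endpoint vertices then belong to $\mathcal{V}_{L'}$ and both endpoint faces to $\mathcal{F}_{L'}$: a ribbon supported in $\Lambda_L$ touches only edges inside $[-L,L]^2$, so its endpoint vertices lie in $[-L,L]^2\cap\ZZ^2$ and its endpoint faces are plaquettes contained in $[-(L+1),L+1]^2$, and for $L'\ge L+1$ the former are in $\mathcal{V}_{L'}$ and the latter in $\mathcal{F}_{L'}$. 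Moreover, since $D_v^\chi$ is a linear combination of the $A_v^g$ and $D_f^c=B_f^c$, the locality relations of Section~\ref{sec:supersel} show that the only local charge projectors failing to commute with $F_\rho^{\xi,d}$ are those at the two endpoint sites, where the behaviour is governed by~\eqref{eqn:localprojribbonrelation1}--\eqref{eqn:localprojribbonrelation2}.

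Now I would write $D_{L'}^\chi=\sum_{\prod_v\chi_v=\chi}\prod_{v\in\mathcal{V}_{L'}}D_v^{\chi_v}$ and commute $F_\rho^{\xi,d}$ to the left through a single summand: by~\eqref{eqn:localprojribbonrelation1}--\eqref{eqn:localprojribbonrelation2} the sole effect is to replace $\chi_{v_0}$ by $\chi_{v_0}\bar\xi$ and $\chi_{v_1}$ by $\chi_{v_1}\xi$ (the two modifications cancelling when $v_0=v_1$). Because $\widehat G$ is abelian this leaves the total product $\prod_v\chi_v$ unchanged, so the assignment $\{\chi_v\}\mapsto\{\chi_v'\}$ is a bijection of the index set $\{\{\chi_v\}:\prod_v\chi_v=\chi\}$ onto itself, and reindexing the sum gives $D_{L'}^\chi F_\rho^{\xi,d}=F_\rho^{\xi,d}D_{L'}^\chi$. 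The argument for $D_{L'}^c$ is identical, the face relations now replacing $c_{f_0}$ by $c_{f_0}\bar d$ and $c_{f_1}$ by $d\,c_{f_1}$, which again preserves $\prod_f c_f$ since $G$ is abelian. Passing from generators to all of $\calA_L$ completes the argument.

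I do not anticipate a genuine obstacle here: at bottom this is a locality-and-bookkeeping argument encoding the physical fact that an operation localized strictly inside $\Lambda_{L'}$ creates only conjugate charge pairs and so cannot change the total charge measured in $\Lambda_{L'}$. The one point that really needs care is the geometric claim that $L'>L$ forces every endpoint site of a ribbon in $\Lambda_L$ into the interior of $\Lambda_{L'}$, so that its vertex and face are actually among those summed over in $D_{L'}^\chi$ and $D_{L'}^c$; if this failed, an open ribbon could deposit an uncompensated charge on $\partial\Lambda_{L'}$ and the commutator would be nonzero --- which is precisely why the lemma assumes $L'>L$ rather than $L'\ge L$.
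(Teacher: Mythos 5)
Your proposal is correct and follows essentially the same route as the paper's proof: reduce to ribbon-operator generators of $\calA_L$, dispose of closed ribbons via~\eqref{eqn:ribclosedcom}, and for open ribbons note that $L'>L$ places both endpoint sites in the interior of $\Lambda_{L'}$, so commuting $F_\rho^{\xi,d}$ through each summand of $D_{L'}^\chi$ (or $D_{L'}^c$) merely shifts the charges at the two endpoints by conjugate amounts, preserving the total product and allowing a reindexing of the sum. Your explicit remarks on the bijection of configurations and on the case $v_0=v_1$ are slightly more detailed than the paper's calculation but do not change the argument.
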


\begin{proof}
Suppose $A = F_\rho^{\xi,d}$ is a ribbon operator.
If $\rho$ is a closed ribbon, then by equation~\eqref{eqn:ribclosedcom},
$[D_{L'}^\chi, F_\rho^{\xi,d}] = [D_{L'}^c,  F_\rho^{\xi,d}] =0 $.

If $\rho$ is an open ribbon let $\partial_i\rho = s_i = (v_i,f_i)$ for $i = 0,1$. 
Since, $L'>L$ it follows that $v_i \in \mathcal{V}_{L'}$ and $s_i \in \mathcal{S}_{L'}$ for $i =0,1$.
Thus,
\begin{align}
D_{L'}^\chi F_\rho^{\xi,d} 
&= \sum_{\prod_{v}\chi_v = \chi} \ 
\prod_{v\in\mathcal{V}_{L'}} D_v^{\chi_v} F_\rho^{\xi,c}\\
& = F_\rho^{\chi,c}
\sum_{\prod_{v}\chi_v = \chi} \ 
\prod_{\substack{v\in \mathcal{V}_{L'}\\ v\neq v_0,v_1}} D_v^{\chi_v} D_{s_0}^{\chi_{v_0} \bar{\xi} } D_{s_1}^{\chi_{v_1} \xi}\\
& = F_\rho^{\xi, d}  D_{L'}^\chi.
\end{align}
The lemma follows from the fact that ribbon operators generate $\calA_L$.

A similar calculation holds to show $[D_L^c, A] = 0$.
\end{proof}

From the preceding lemma it follows that $D^{\epsilon}_L$ and $D_L^\mu$ are supported on the boundary of $\Lambda_L$.
In fact, it turns out that they are equal to the operators $V^{\epsilon}_L$ and $V_L^{\mu}$ defined in equation~\eqref{eqn:boundribbon}. These two different definitions of what turns out to be the same operators will be convenient for us in the proof of our main result. The following lemma gives a proof that they are indeed equal by showing that they coincide on a spanning set of vectors.

\begin{lemma}\label{lem:globprojboundaryop}
The global charge projectors $D_L^\epsilon$ and $D_L^\mu$, defined above, 
are supported on the boundary of the region $\Lambda_L$.  
In particular, they are equal to the boundary ribbon operators on $\calH_L$, defined in equations~\eqref{eqn:boundribbon}:
\begin{equation}\label{eqn:globalchargewilsonloops}
 D_L^\epsilon= V_L^\epsilon \in \calA_{\Lambda_L\setminus\Lambda_{L-1}}, \qquad  D_L^\mu=V_L^\mu \in \calA_{\Lambda_L\setminus\Lambda_{L-1}}.
\end{equation}
\end{lemma}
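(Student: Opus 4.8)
The plan is to establish the two assertions of the lemma in turn. For the support claim, I would invoke Lemma~\ref{lem:ribchargeinv} with outer region $\Lambda_L$ and inner region $\Lambda_{L-1}$: since $L>L-1$, it gives $[D_L^\chi,A]=[D_L^c,A]=0$ for every $A\in\calA_{\Lambda_{L-1}}$ and all $\chi\in\widehat{G}$, $c\in G$. As $D_L^\chi,D_L^c\in\calA_{\Lambda_L}$ and the relative commutant of $\calA_{\Lambda_{L-1}}$ in the matrix algebra $\calA_{\Lambda_L}$ is exactly $\calA_{\Lambda_L\setminus\Lambda_{L-1}}$, it follows that $D_L^\chi,D_L^c\in\calA_{\Lambda_L\setminus\Lambda_{L-1}}$; hence $D_L^\epsilon=I-D_L^\iota$ and $D_L^\mu=I-D_L^e$ are supported on the boundary ring. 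That $V_L^\epsilon,V_L^\mu\in\calA_{\Lambda_L\setminus\Lambda_{L-1}}$ is immediate, since the boundary ribbon $\partial L$ only touches edges of $\Lambda_L$ not contained in $\Lambda_{L-1}$. Consequently, once equality is proved as operators on $\calH_L$ it holds in $\calA_{\Lambda_L\setminus\Lambda_{L-1}}$.

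For the equality I would use the spanning set of $\calH_L$ from~\eqref{eqn:qdgsspan}: it suffices to show that $D_L^\epsilon$ and $V_L^\epsilon$ (and likewise the magnetic pair) agree on every vector $\Psi=\prod_i F_{\rho_i}^{\chi_i,c_i}\Omega$ with $\Omega\in\calG_L$. To evaluate $V_L^\epsilon\Psi$, for each $b\in G$ I would push $F_{\partial L}^{\iota,b}$ to the right through the product using the braiding relation~\eqref{eqn:ribbonrelation} (extended to multiple crossings via~\eqref{eqn:ribbonconcatenation}, as indicated after~\eqref{eqn:ribbonrelation}), and then absorb it into $\Omega$ via~\eqref{eqn:closedribbon}. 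This yields $F_{\partial L}^{\iota,b}\Psi=\overline{\Xi}(b)\,\Psi$, where $\Xi:=\prod_i\chi_i^{n_i}\in\widehat{G}$ and $n_i$ is the signed number of crossings of $\rho_i$ with $\partial L$; summing over $b$ and using $\tfrac{1}{\abs{G}}\sum_b\overline{\Xi}(b)=\delta_{\Xi,\iota}$ gives $V_L^\epsilon\Psi=(1-\delta_{\Xi,\iota})\Psi$. To evaluate $D_L^\epsilon\Psi=(I-D_L^\iota)\Psi$, I would push each factor $\prod_{v\in\mathcal{V}_L}D_v^{\chi_v}$ through the product using~\eqref{eqn:localprojribbonrelation1}--\eqref{eqn:localprojribbonrelation2} and absorb it into $\Omega$ via~\eqref{eqn:localprojgs}; the sum over configurations $\{\chi_v\}$ with $\prod_v\chi_v=\iota$ collapses to the single configuration $\chi_v=q_v$, where $q_v$ is the net electric charge the ribbons deposit at $v$, and it contributes exactly when $\Pi:=\prod_{v\in\mathcal{V}_L}q_v=\iota$. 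Hence $D_L^\epsilon\Psi=(1-\delta_{\Pi,\iota})\Psi$. The magnetic case is the mirror image, with $\widehat{G}$ replaced by $G$, the star projectors replaced by $D_f^c=B_f^c$, and $\tfrac{1}{\abs{G}}\sum_{\chi}\chi(d)=\delta_{d,e}$ used in place of the character-orthogonality identity.

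It then remains to show $\Xi=\Pi$ — equivalently, since $\delta_{\chi,\iota}=\delta_{\bar{\chi},\iota}$, that $\Xi$ and $\Pi$ are simultaneously trivial, so the orientations of crossings are irrelevant. This is the geometric heart of the lemma: the closed ribbon $\partial L$ is chosen so that a ribbon supported in $\Lambda_L$ crosses it with odd parity exactly when exactly one of its endpoints lies at a site whose charge is registered by the local projectors $\{D_v^\chi\}_{v\in\mathcal{V}_L}$ (respectively $\{D_f^c\}_{f\in\mathcal{F}_L}$); granting this, both $\Xi$ and $\Pi$ are the same product of the $\chi_i$ over those $\rho_i$, up to inversions that do not affect triviality. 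I expect verifying this parity statement carefully to be the main obstacle, the subtle point being the behaviour of ribbons whose endpoints sit near the boundary of $\Lambda_L$ (in particular sites of $\mathcal{S}_L$ whose vertex fails to lie in $\mathcal{V}_L$) and of ribbons that cross $\partial L$ and return; path independence in the ground state~\eqref{eqn:ribbonpathind}, together with~\eqref{eqn:ribbonconcatenation}, should let one normalise each $\rho_i$ to a standard form in which the crossing parity is manifestly the relevant indicator, reducing the general case to the single-ribbon computations sketched above.
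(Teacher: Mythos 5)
Your strategy is the same as the paper's: the support claim via Lemma~\ref{lem:ribchargeinv} (your relative-commutant phrasing just makes the paper's one-line remark explicit), and the operator equality by checking agreement on the spanning set \eqref{eqn:qdgsspan}, pushing $F_{\partial L}^{\iota,b}$ through the product with the braiding relation \eqref{eqn:ribbonrelation} and using character orthogonality for $V_L^\epsilon$, and pushing the local projectors through with \eqref{eqn:localprojgs}--\eqref{eqn:localprojribbonrelation2} for $D_L^\epsilon$. The two computations $V_L^\epsilon\Psi=(1-\delta_{\Xi,\iota})\Psi$ and $D_L^\epsilon\Psi=(1-\delta_{\Pi,\iota})\Psi$ are fine as far as they go.

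The step you defer, however, is where the content of the lemma sits, and as you phrase it it is not quite correct: for groups with elements of order greater than $2$, inverting \emph{individual} factors does change triviality of the product (e.g.\ $\delta_{\chi_1\chi_2,\iota}\neq\delta_{\chi_1\bar{\chi}_2,\iota}$ in general), so neither crossing \emph{parity} nor ``inversions that do not affect triviality'' suffices beyond $G=\ZZ_2$; one must show that the sign of each crossing is correlated with the orientation in which that ribbon deposits its charge at the enclosed endpoint. The paper never proves such a general crossing identity. Instead it normalizes the spanning vectors first, so that the identification becomes trivial: (i) ribbons with both endpoints in the interior commute with $D_L^\epsilon$ by Lemma~\ref{lem:ribchargeinv}, and by path independence \eqref{eqn:ribbonpathind} they can be rerouted so as not to meet $\partial L$, hence they also commute with $V_L^\epsilon$; (ii) ribbons with both endpoints on the boundary commute with $D_L^\epsilon$ (their endpoint vertices and faces do not enter $\mathcal{V}_L$, $\mathcal{F}_L$) and with $V_L^\epsilon$, because the two crossing phases cancel as $\chi(g)\chi(\bar{g})=1$; (iii) the remaining ribbons are oriented by inversion so that $\partial_0\rho_i\in\mathcal{S}_L$ and split by concatenation \eqref{eqn:ribbonconcatenation} so that all endpoints are distinct, after which each crosses $\partial L$ exactly once with a fixed sign and deposits $\chi_i$ at a distinct interior vertex; both operators then act as $I-\delta_{\prod_i\chi_i,\iota}$. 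This case analysis is also exactly what disposes of the boundary-site subtlety you flag (sites whose vertex is not in $\mathcal{V}_L$). So: carry out the normalization you gesture at, in this form, rather than attempting the general statement about $\Xi$ and $\Pi$; with that done (and the mirror magnetic argument), your proof closes and coincides with the paper's.
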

\begin{proof}
	The strategy of the proof is to show that $D_L^\epsilon$ and $V_L^\epsilon$ coincide on vectors of the form $\left( \prod F_{\rho_i}^{\chi_i, c_i} \right) \Omega$ for $\Omega \in \calG_L$, by commuting the ribbon operators through $D_L^{\epsilon}$ and $V_L^\epsilon$.\footnote{The result would follow if we can show that the commutation relations of both operators with all ribbon operators coincide, but this is not obvious from a direct computation.} 
By the previous lemma, and the definition of $V^{\epsilon}_L$ and $V^\mu_L$, it follows that all ribbon operators commute with $F_{\rho}^{\chi,c}$ if both of their endpoints are contained in the interior of $\Lambda_L$.
Since we are interested in their action on vectors of the form $\left( \prod F_{\rho_i}^{\chi_i, c_i} \right) \Omega$, by path independence (equation~\eqref{eqn:ribbonpathind}), we can always assume that in that case the ribbon does not intersect the boundary ribbon.
Hence without generality we may assume that each $F_{\rho_i}^{\chi_i, c_i}$ has at least one endpoint on the boundary.

Consider first the case of the empty product.
That is, choose $\Omega \in \calG_L$.  
From equation \eqref{eqn:closedribbon}, $F_{\partial L}^{\chi,c} \Omega = \Omega$ for all $\chi, c$.
Thus, $V_L^\epsilon \Omega = V^\mu_L \Omega = 0$. 
On the other hand, from eqn.~\eqref{eqn:localprojgs}, 
$D_L^\epsilon \Omega= \sum_{\prod_{s}\chi_s \neq \iota} 
\prod_{v} D_v^{\chi_v} \Omega 
= \sum_{\prod_{v}\chi_v \neq \iota}  \prod_{v} \delta_{\chi_v, \iota} \Omega = 0$,
where each term in the sum vanishes because the requirement $\prod_{v}\chi_v \neq \iota$ 
forces a non-trivial charge to exist at at least one star.
Similarly, $D_L^\mu \Omega = 0$.
Hence the operators agree on the subspace $\calG_L$.


Now suppose $\rho$ has both endpoints contained on the boundary.
Since the endpoints of $\rho$ are not in $\mathcal{S}_L$,
the vertex and face projectors corresponding to the endpoints of $\rho$ will not be involved in the products defining $D_L^\epsilon$.
It follows from \eqref{eqn:ribstarrel}
that $D_L^\epsilon F_\rho^{\chi,c} = F_\rho^{\chi,c} D_L^\epsilon$.
On the other hand, from the ribbon crossing relations \eqref{eqn:ribbonrelation} we can compute
\begin{align}
V_L^\epsilon F_{\rho}^{\chi,c} &= \bigg(\frac{1}{\abs{G}}\sum_{g\in G} I - F_{\partial L}^{\iota,g}\bigg)F_{\rho}^{\chi,c}\\
& = F_{\rho}^{\chi,c} \bigg(\frac{1}{\abs{G}}\sum_{\chi\neq\iota} I - \chi(g) \chi(\bar{g}) F_{\partial L}^{\iota,g}\bigg)\\
&= F_{\rho}^{\chi,c} V_L^\epsilon.
\end{align}

Finally, we consider the action of $V_L^\epsilon$ and $D_L^\epsilon$ 
on the spanning set of vectors of the form $\prod_{i}F_{\rho_i}^{\chi_i,c_i} \Omega$.
From the above arguments, without loss of generality, 
we can consider the product of ribbon operators for ribbons which have one endpoint on the boundary
and one in the interior of $\Lambda$, since the other path operators commute with both $V_L^\epsilon$ and $D_L^{\epsilon}$. 
We also assume $\partial_0 \rho_i \in \mathcal{S}_L$.
This can always be achieved by reversing the direction of the ribbon (called ``inversion'' in~\cite{BombinMD}).
By concatenation of ribbons \eqref{eqn:ribbonconcatenation}, without loss of generality we can assume
that all endpoints are distinct, that is,
$\partial_i\rho_k \neq \partial_j\rho_l$ for all $i,j = 0,1$ and $k, l = 0,1,\ldots,n$.
Let $\{(v_i,f_i)\}_{i=1}^n$ enumerate the endpoints of $ \rho_i$ contained in the interior of $\Lambda_L$.
 
Applying the ribbon operator relations \eqref{eqn:ribbonrelation},
\begin{align}
V_L^\epsilon \bigg( \prod_{i=1}^n F_{\rho_i}^{\chi_i,c_i}\bigg) \Omega
&= \frac{1}{\abs{G}} \bigg( \sum_{g\in G} I - F_{\partial L}^{\iota,g}\bigg) 
\bigg( \prod_{i=1}^n F_{\rho_i}^{\chi_i,c_i}\bigg) \Omega\\
& = \frac{1}{\abs{G}}\bigg( \prod_{i=1}^n F_{\rho_i}^{\chi_i,c_i}\bigg)
\bigg( \sum_{g\in G} I - \prod_{i=1}^n \chi_i(g) F_{\partial L}^{\iota,g}\bigg)\Omega\\
& =\bigg( \prod_{i=1}^n F_{\rho_i}^{\chi_i,c_i}\bigg) \left( I - \delta_{\prod_{i=1}^n \chi_i,\iota} \right)\Omega,
\end{align}
where we use orthogonality of characters, $ \frac{1}{|G|} \sum_{g\in G} \prod_{i=1}^n \chi_i(g) = \delta_{\prod_{i=1}^n \chi_i,\iota}$.
On the other hand we have that 
\begin{align}
D_L^\epsilon \bigg( \prod_{i=1}^n F_{\rho_i}^{\chi_i,c_i}\bigg) \Omega
&= \bigg( \sum_{\prod_{v}\chi_v \neq \iota} 
\prod_{v} D_v^{\chi_v} \bigg) \bigg( \prod_{i=1}^n F_{\rho_i}^{\chi_i,c_i}\bigg) \Omega\\
&=\bigg( \prod_{i=1}^n F_{\rho_i}^{\chi_i,c_i}\bigg)  \bigg( \sum_{\prod_{v}\chi_v \neq \iota} 
\prod_{i=1}^n D_{v_i}^{\chi_{i}\chi_{v_i}} \prod_{v \neq v_i} D_v^{\chi_v}\bigg) \Omega\\
&= \bigg( \prod_{i=1}^n F_{\rho_i}^{\chi_i,c_i}\bigg) \left( I - \delta_{\prod_{i=1}^n \chi_i,\iota} \right)\Omega,
\end{align}
where for the last equality we apply an extension of \eqref{eqn:localprojribbonrelation1}.

We have shown $D_L^\epsilon \left( \prod_{i=1}^n F_{\rho_i}^{\chi_i,c_i}\right) \Omega =V_L^\epsilon \left( \prod_{i=1}^n F_{\rho_i}^{\chi_i,c_i}\right) \Omega$ for any arbitrary spanning vector of $\calH_L$.
Therefore, $D_L^\epsilon  = V_L^\epsilon$ as operators on $\calH_L$.
A similar argument gives  $D_L^\mu = V_L^\mu $ as operators on $\calH_L$.
\end{proof}

Before we continue let us briefly summarize the physical interpretation of the properties of ribbon operators and charge projectors.
The ribbon operators create a pair of excitations, one at each end of the ribbon.
In addition, the different types of excitations are labeled by pairs $(\chi,c) \in \widehat{G} \times G$ and the charges at each end are \emph{conjugate} to each other.
In other words, the \emph{global} charge does not change after applying a ribbon operator.
Moreover, when acting with a ribbon operator on the ground state, the resulting state only depends on the endpoints of the ribbon, i.e.\ the location of the excitations.
A similar thing is true for configurations of multiple charges, up to a phase.
This phase can be explained by the anyonic nature of the charges: exchanging two of them gives an overall phase, much like interchanging two fermions yield a minus sign.
The local Hilbert spaces can be obtained completely by such operations, so that we can define a basis by specifying the charge at each site (with an additional constraint on the total charge).
This observation will play an important role in our proof.

\subsection{Infinite volume ground states}
Let $\Omega_L\in \calG_L$ be a sequence of finite volume ground states of $H_L$.
A frustration-free ground state of the quantum double model $\omega^0$
can be constructed as follows.
Consider a family of states $\{\omega_L\}_{L=2}^\infty$ as $L\ra \infty$, 
where $ \omega_L$ is an arbitrary extension
of the state $\langle \Omega_L , \cdot \ \Omega_L  \rangle$ to the quasi-local algebra $\calA$ (in particular, we could choose a product state).
By compactness of $\calA_{+,1}^*$, there exists a convergent subsequence and denote its limit as $\omega^0$.
For any $v$ and $f$, choose $L$ large enough such that $v, f \subset \Lambda_{L_l}$.
Since $\omega_L$ is a ground state for the finite model it follows that $\omega^0(I - A_v) = \omega_L(I - A_v) = 0$ and $\omega^0(I - B_f) = \omega_L(I - B_f) = 0$.
This is the frustration-freeness property.

Ground states of the quantum double model are locally indistinguishable, so weak$^*$ limits of sequences of such states will always converge to the same state.
It turns out that this is the only frustration-free ground state (that is, satisfying equation~\eqref{eqn:gs}) of the model.
We list some properties of this state in the following proposition:

\begin{proposition}(\cite{AlickiFH,Naaijkens,FiedlerN})
Let $\omega^0$ be the frustration-free ground state of the quantum double model obtained as above.
Then,
\begin{enumerate}[(i)]
\item if $\omega$ is a frustration-free ground state then $\omega = \omega^0$,
\item $\omega^0$ is a pure state,
\item Let $(\pi_0, \Omega_0, \calH_0)$ be a GNS-representation for $\omega^0$ and $H_0$ be the GNS Hamiltonian. That is, $H_0$ is the unique self-adjoint (and here, unbounded) operator satisfying 
	$\pi_0 (\tau_t (A)) = e^{i t H_0} \pi_0(A) e^{-it H_0}$, $ H_0 \geq 0,$ and $H_0 \Omega_0 = 0$ (see~\cite[Prop. 5.3.19]{BratteliR}). 
	Then, $\operatorname{spec}(H_0) = 2 \ZZ^{\geq 0}$ with a simple ground state eigenvector $\Omega_0$.
\end{enumerate}
\end{proposition}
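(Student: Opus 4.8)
The plan is to prove the three claims in order, starting from the fact that $\omega^0$ is the weak$^*$ limit of finite-volume states $\langle \Omega_L, \cdot\,\Omega_L\rangle$ with $\Omega_L \in \calG_L$, and exploiting the exact diagonalization of $H_L$ via ribbon operators recorded in equations~\eqref{eqn:ribenergy}--\eqref{eqn:qdgsspan}. For (i), suppose $\omega$ is frustration-free, so $\omega(A_v) = \omega(B_f) = 1$ for all $v, f$. One shows by a standard argument that $\omega$ must then coincide on all of $\calA_{loc}$ with the unique tracial-type state determined by the relations: using the GNS representation $(\pi_\omega, \Omega_\omega, \calH_\omega)$, frustration-freeness gives $\pi_\omega(A_v)\Omega_\omega = \pi_\omega(B_f)\Omega_\omega = \Omega_\omega$, and since the $A_v$, $B_f$ are projections this forces $\Omega_\omega$ to be a simultaneous $+1$ eigenvector. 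Then, using that the ribbon operators generate $\calA_{loc}$ together with the closed-ribbon relation~\eqref{eqn:closedribbon} and path independence~\eqref{eqn:ribbonpathind}, one computes $\omega(F_\rho^{\chi,c})$ explicitly for every ribbon $\rho$: for an open ribbon $\rho$ both of whose endpoints lie in some $\mathcal{S}_L$, equation~\eqref{eqn:ribenergy} shows $F_\rho^{\chi,c}\Omega$ has strictly positive energy unless $(\chi,c) = (\iota,e)$, and a ground-state-overlap argument (or directly the commutation relations~\eqref{eqn:ribstarrel}--\eqref{eqn:ribplaqrel} with $A_{v_0}$, $B_{f_0}$) forces $\omega(F_\rho^{\chi,c}) = \delta_{\chi,\iota}\delta_{c,e}$. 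Since the values on a generating set are pinned down, $\omega = \omega^0$.

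For (ii), purity of $\omega^0$: the cleanest route is to show that the GNS representation $\pi_0$ is irreducible, equivalently that $\omega^0$ is not a nontrivial convex combination. One argument: $\omega^0$ is an extremal frustration-free ground state because the set of frustration-free ground states is a face of the state space (it is $\{\omega : \omega(I-A_v)=\omega(I-B_f)=0\}$, cut out by positivity of the operators $I-A_v, I-B_f \geq 0$), and by (i) this face is the single point $\{\omega^0\}$; a state that is the unique element of a face is extremal in the whole state space, hence pure. Alternatively one can invoke the results of \cite{AlickiFH,FiedlerN} directly, as the proposition already cites them.

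For (iii), one works in the GNS representation and identifies the spectrum of the GNS Hamiltonian $H_0$. First, $H_0\Omega_0 = 0$ and $\Omega_0$ is a simple eigenvector: if $\xi$ is cyclic-generated and $H_0\xi = 0$, then the vector state it defines is again a $\tau$-ground state that is frustration-free (since $\pi_0(A_v), \pi_0(B_f)$ commute with $e^{itH_0}$ and fix $\Omega_0$, one checks the corresponding expectations are $1$), hence by (i) equals $\omega^0$, and by purity $\xi \in \CC\Omega_0$. Next, to get $\operatorname{spec}(H_0) = 2\ZZ^{\geq 0}$: the local Hamiltonians $H_L$ have integer spectrum with all eigenvalues even — this follows because each excitation, created at the end of an open ribbon fully inside $\mathcal{S}_L$, contributes energy $2 - \delta_{\chi,\iota} - \delta_{c,e} \in \{0,1,2\}$ at \emph{each} of its two endpoints by~\eqref{eqn:ribenergy}, and excitations are always created in conjugate pairs, so the total energy is a sum of such paired contributions, which is even; the complete eigenbasis~\eqref{eqn:qdgsspan} shows every nonnegative even integer is attained for $L$ large. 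Passing to the infinite volume: since $\pi_0(\tau_t(A)) = e^{itH_0}\pi_0(A)e^{-itH_0}$ and $\tau_t$ is the limit of the finite-volume dynamics generated by $H_L$, and $\calA_{loc}$ is a core for $\delta$, the spectrum of $H_0$ is contained in the closure of $\bigcup_L \operatorname{spec}(H_L) \subseteq 2\ZZ^{\geq 0}$; conversely, applying open ribbon operators $F_\rho^{\chi,c}$ with one or both endpoints deep in the bulk to $\Omega_0$ produces, via~\eqref{eqn:ribHamrel} and a limiting argument, vectors on which $H_0$ acts with any prescribed even eigenvalue, giving the reverse inclusion.

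The main obstacle is the spectral identification in (iii): relating $\operatorname{spec}(H_0)$ to $\bigcup_L \operatorname{spec}(H_L)$ requires care because $H_0$ is unbounded and the infinite-volume dynamics is only a strong limit of the finite-volume ones on a core. The key technical point is that $\pi_0(F_\rho^{\chi,c})\Omega_0$ is (an approximate, then genuine) eigenvector of $H_0$: one uses equation~\eqref{eqn:ribHamrel} to see that $[H_L, F_\rho^{\chi,c}]$ stabilizes as $L$ grows (once both endpoints of $\rho$ are in the interior the extra boundary terms $B_{f_i}$, $A_{v_i}^k$ no longer contribute, leaving a fixed local operator that annihilates $\Omega_0$ after the bulk endpoint energies are accounted for), so that $\pi_0(F_\rho^{\chi,c})\Omega_0$ is an exact eigenvector of $H_0$ with eigenvalue $2 - \delta_{\chi,\iota} - \delta_{c,e}$ times the number of bulk endpoints — and products of such operators give the higher even eigenvalues. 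Combined with the upper bound $\operatorname{spec}(H_0) \subseteq 2\ZZ^{\geq 0}$ coming from lower semicontinuity of the spectrum under the limit, this yields the claim.
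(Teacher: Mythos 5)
Your attempt is necessarily a different route from the paper's, since the paper does not prove this proposition from scratch: (i) and (ii) are attributed to~\cite{AlickiFH,FiedlerN} via the local indistinguishability property of~\cite{BravyiHM}, and (iii) is only remarked to follow from strong resolvent convergence of the local Hamiltonians in the GNS representation. Measured on its own terms, your argument for (iii) has genuine gaps. First, the claim that the finite-volume $H_L$ have only even eigenvalues is false: with free boundary conditions a ribbon with exactly one endpoint in $\mathcal{S}_L$ creates a single excitation of energy $1$ or $2$ (this is precisely the $C_\rho=1$ case of~\eqref{eqn:ribenergy}), so $\operatorname{spec}(H_L)$ contains odd integers and your inclusion $\operatorname{spec}(H_0)\subseteq\overline{\bigcup_L\operatorname{spec}(H_L)}$ can only give $\ZZ^{\geq 0}$; note also that replacing $H_L$ by $\pi_0(H_L)$ does not help, since $\pi_0$ is faithful and the spectrum is unchanged. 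Second, the parity bookkeeping ``excitations come in conjugate pairs, hence the total energy is even'' breaks down as soon as ribbons share endpoints and $\abs{G}>2$: for $G=\ZZ_3$ and $\chi$ a nontrivial character, the local operator $F^{\chi,e}_{\rho_1}F^{\bar{\chi},e}_{\rho_2}$ with $\partial_1\rho_1=\partial_0\rho_2$ leaves the nontrivial charge $\bar{\chi}\,\bar{\chi}=\chi$ at the shared vertex, so applied to $\Omega_0$ it excites exactly three stars and no plaquettes, producing an $H_0$-eigenvector of energy $3$. Hence evenness cannot be obtained from your counting, and any complete proof of the claimed identity must confront exactly such merged-charge configurations. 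Third, simplicity of the eigenvalue $0$ is not established by your parenthetical: that $\pi_0(A_v)$ commutes with $e^{itH_0}$ only shows it preserves $\ker H_0$; the assertion that every $\xi\in\ker H_0$ defines a frustration-free state is essentially the nontrivial statement that the vacuum folium contains no other ground states, which in this paper comes out of the main theorem (Lemma~\ref{lem:gshambound} plus the charge decomposition), not out of (i) alone.

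There is also a gap in your (i): pinning down $\omega(F_\rho^{\chi,c})=\delta_{\chi,\iota}\delta_{c,e}$ on single ribbon operators does not determine $\omega$, because a state is fixed by its values on a \emph{linearly} spanning set, and $\calA_{loc}$ is spanned by arbitrary products of ribbon operators, not by the ribbon operators themselves. Your orthogonality argument does extend to monomials carrying a nontrivial charge at some site, but monomials whose charges cancel at every site (products reducing to closed ribbons) require the closed-ribbon and path-independence identities~\eqref{eqn:closedribbon} and~\eqref{eqn:ribbonpathind} to be verified for the GNS vector of an arbitrary frustration-free $\omega$, whereas the paper states them only for vectors in $\calG_L$; this step is missing. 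The paper instead gets (i) directly from local indistinguishability of the frustration-free ground spaces. Your face argument for (ii) is correct granted (i) (observing, via Lemma~\ref{lem:finitegslim} with $\tilde H_L=H_L$, that frustration-free states are automatically $\tau$-ground states), and is in the spirit of the cited literature.
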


The first property is in fact a general property for frustration-free models satisfying a local indistinguishability condition (for the definition see \cite{BravyiHM}).
The frustration-free ground state of the quantum double model satisfies the local indistinguishability condition \cite{FiedlerN}.
The last property follows from an application of strong resolvent convergence (see for example Theorem VIII.24 of \cite{ReedSimon}) and shows that the quantum double model has a spectral gap in its frustration-free ground state.
As a consequence of local indistinguishablity, the gap is stable under local perturbations \cite{BravyiHM,MichalakisP}.

\subsubsection{Single excitation ground states}
In the finite volume, elementary excitations are constructed by violating one of the frustration-freeness ground state conditions.
These excitations must come in pairs since they are generated by open ribbon operators.
By introducing a boundary condition to the Hamiltonian,
we generate ground states which have one excitation in the bulk and one on the boundary.
In the infinite volume, this is equivalent to moving one of the excitations off to infinity, 
thereby isolating a single excitation in the bulk.
By construction, these states will be ground states in the infinite volume.
One way to understand this intuitively is that even though there is an excitation, we cannot lower the energy of the system with local operations.
It is possible to move the excitation around (thereby locally decreasing the energy), but we cannot get rid of it completely with local operations, and the moved excitation will increase the local energy density at its new location.

The idea is to use the projections that were introduced in the previous section to define boundary conditions, which can compensate for the existence of an excitation in the bulk.
Recall that these projections are supported on the boundary of $\Lambda_L$.

\begin{definition}
	\label{def:gsboundary}
Define the following Hamiltonians with boundary condition
\begin{align}
H_L^{\epsilon} &:= H_L - V_L^\epsilon,\\
H_L^{\mu} &:= H_L - V_L^\mu,\\
H_L^{\epsilon, \mu} & := H_L - V_L^\epsilon - V_L^\mu. \label{eqn:boundaryH}
\end{align}
We will sometimes use the index $k$ to denote either $\epsilon, \mu,$ or $ (\epsilon,\mu)$,
and set $V^{\epsilon,\mu}_L := V_L^\epsilon + V_L^\mu$.
\end{definition}

Recall that the boundary terms $V_L^k$ are linear combination of closed ribbon operators and thus commute with each interaction term, and hence the Hamiltonian:
\begin{equation}
[V_L^k, B_f] = [ V_L^k, A_v] = [ V_L^k, H_L] = 0 \quad \text{ for all } \quad k, f, v.
\end{equation}
In what follows we will show that $H_L^k \geq 0$, despite it being a difference of positive operators.
From equation \eqref{eqn:closedribbon}, if  $\Omega_L \in \calG_L$ then $H_L^k \Omega_L = 0$ for all $k$.

Now consider an open ribbon $\rho$ connecting a site $ \partial_0\rho = (v,f)\in \mathcal{S}_L$ to a site on the boundary,
for instance $\rho$ as the ribbon in Figure~\ref{fig:ribbons}, and its corresponding ribbon operator, $F_\rho^{\chi,c}\in \calA_L$.
Then, for $(\chi,c)$ we have that
\begin{align*}
H_L F_\rho^{\chi,c} \Omega_L &= F_\rho^{\chi,c} \left(I- \frac{1}{\abs{G}} \sum_{g\in G} \chi(g) A_v^g + I - B_f^{\bar{c}}\right) \Omega_L\\
& = F_\rho^{\chi,c} \left( 2  -  \delta_{\bar{\chi},\iota}- \delta_{\bar{c},e}\right) \Omega_L.
\end{align*} 
In the last line we used orthonormality: $\langle \chi_1, \chi_2 \rangle := \frac{1}{|G|} \sum_{g \in G} \bar{\chi}_1(g) \chi_2(g) = \delta_{\chi_1, \chi_2}$. Similar calculations, as in the proof of Lemma~\ref{lem:globprojboundaryop}, yield:
\begin{align*}
	V_L^\epsilon F_\rho^{\chi,c} \Omega_L &= F^{\chi,c}_\rho \left[\frac{1}{\abs{G}} \sum_{d\in G} \left(I - \chi(d) F^{\iota, d}_{\partial L}\right)\right] \Omega_L\\
& = F^{\chi,c}_\rho ( I - \delta_{\chi, \iota}) \Omega_L,
\end{align*}
and for the magnetic charges,
\begin{align*}
	V_L^\mu F_\rho^{\chi,c} \Omega_L &= F^{\chi,c}_\rho \left[\frac{1}{\abs{G}} \sum_{\xi\in \widehat{G}}\left( I - \xi(c) F^{\xi,e}_{\partial L}\right)\right] \Omega_L\\
& = F^{\chi,c}_\rho ( I - \delta_{c, e}) \Omega_L.
\end{align*}
Therefore, together with equation~\eqref{eqn:ribenergy}, we find
\begin{equation}
H_L^{\epsilon, \mu} F_\rho^{\chi, c} \Omega_L = 0.
\end{equation}
If $\rho$ connects  two sites in $\mathcal{S}_L$ then $[V_L^k, F_\rho^{\chi,c}] = 0$.
Thus, combined with a similar calculation from above for multiple charges on the boundary, we can conclude that 
\[ \bigg\langle \bigg( \prod_{i} F_{\rho_i}^{\chi_i, c_i} \bigg) \Omega_L,  H_L^{\epsilon, \mu} \bigg( \prod_{i} F_{\rho_i}^{\chi_i, c_i} \bigg) \Omega_L \bigg\rangle \geq 0.\]
Therefore,
\begin{lemma}\label{lem:gsspan}
Let $\calG_L^k$ be the ground state space of $H_L^k$ for $k = \epsilon, \mu$ and $(\epsilon,\mu)$.  Then,
\begin{enumerate}[(i)]
\item $H_L^{\epsilon,\mu} \geq 0$ and $H_L^k \calG_L^k = 0$,
\item $\calG_L^{\epsilon}$ is spanned by $\{ F_\rho^{\chi,e} \Omega_L: \rho $ connects an interior site to the boundary, $\Omega_L \in \calG_L, \chi\in \widehat{G}\}$.

$\calG_L^{\mu}$ is spanned by $\{ F_\sigma^{\iota,c} \Omega_L: \sigma $ connects an interior site to the boundary, $\Omega_L \in \calG_L,  c\in G \}$.

$\calG_L^{\epsilon,\mu}$ is spanned by $\{ F_\rho^{\chi,e}F_\sigma^{\iota,c}  \Omega_L: \rho,\sigma $ connects interior sites to the boundary, $\Omega_L \in \calG_L, (\chi,c)\in\widehat{G}\times G\}$.
\end{enumerate}
\end{lemma}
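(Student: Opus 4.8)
The plan is to simultaneously diagonalize the three mutually commuting operators $H_L$, $V_L^\epsilon$, $V_L^\mu$ and to read off the low-lying part of their joint spectrum from the explicit eigenbasis of $H_L$ provided by \eqref{eqn:qdgsspan}. That the three operators commute pairwise is already recorded: each $V_L^k$ commutes with $H_L$, and $V_L^\epsilon$ commutes with $V_L^\mu$ because both are polynomials in the operators $\{F_{\partial L}^{\chi,c}\}$, which commute among themselves by \eqref{eqn:ribprop1} and abelianness of $G$ and $\widehat{G}$. Hence $\calH_L$ decomposes into joint eigenspaces, and on the joint eigenspace where $H_L$ has eigenvalue $E$, $V_L^\epsilon$ has eigenvalue $a$ and $V_L^\mu$ has eigenvalue $b$ (with $a,b\in\{0,1\}$ since the $V_L^k$ are projections), the operators $H_L^{\epsilon,\mu}$, $H_L^\epsilon$, $H_L^\mu$ act by the scalars $E-a-b$, $E-a$, $E-b$ respectively.

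First I would fix a convenient spanning set of joint eigenvectors. Starting from \eqref{eqn:qdgsspan} and using concatenation \eqref{eqn:ribbonconcatenation}, path independence \eqref{eqn:ribbonpathind}, and the identity $F_\rho^{\iota,e}=I$ (which follows by \eqref{eqn:ribbonconcatenation} from the single-edge case), every vector reduces to one of the form $\psi=(\prod_i F_{\rho_i}^{\chi_i,c_i})\Omega_L$ with $\Omega_L\in\calG_L$, all $(\chi_i,c_i)\neq(\iota,e)$, each $\rho_i$ connecting a site of $\mathcal{S}_L$ to a boundary site, and all the interior endpoints distinct: a ribbon with both endpoints interior is split at a boundary site, and a ribbon running entirely along the boundary acts trivially on $\calG_L$ after deformation and is absorbed, using \eqref{eqn:closedribbon}. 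On such a $\psi$, the computations of $H_L F_\rho^{\chi,c}\Omega_L$, $V_L^\epsilon F_\rho^{\chi,c}\Omega_L$ and $V_L^\mu F_\rho^{\chi,c}\Omega_L$ carried out just before this lemma extend to the multi-ribbon case, via the crossing relation \eqref{eqn:ribbonrelation} and orthogonality of characters exactly as in the proof of Lemma~\ref{lem:globprojboundaryop}, showing that $\psi$ is a joint eigenvector with eigenvalues $E=\sum_i(2-\delta_{\chi_i,\iota}-\delta_{c_i,e})$, $a=1-\delta_{\prod_i\chi_i,\iota}$, $b=1-\delta_{\prod_i c_i,e}$.

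Part (i) is then immediate. Writing $\sum_i(2-\delta_{\chi_i,\iota}-\delta_{c_i,e})=\sum_i(1-\delta_{\chi_i,\iota})+\sum_i(1-\delta_{c_i,e})$ and noting that $\prod_i\chi_i\neq\iota$ forces at least one $\chi_i\neq\iota$ (and likewise for the $c_i$), the $H_L^{\epsilon,\mu}$-eigenvalue $E-a-b$ on every $\psi$ is nonnegative, so $H_L^{\epsilon,\mu}\geq 0$; then $H_L^\epsilon=H_L^{\epsilon,\mu}+V_L^\mu\geq 0$ and $H_L^\mu=H_L^{\epsilon,\mu}+V_L^\epsilon\geq 0$ as sums of positive operators. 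Since $\Omega_L$ is a $0$-eigenvector of each $H_L^k$, the ground state space $\calG_L^k$ coincides with $\ker H_L^k$, which gives $H_L^k\calG_L^k=0$.

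For part (ii) one intersects the joint eigenspaces with the kernel condition. Since $\calG_L^{\epsilon,\mu}=\ker H_L^{\epsilon,\mu}$ is the sum of the joint eigenspaces with $E-a-b=0$, only $(E,a,b)\in\{(0,0,0),(1,1,0),(1,0,1),(2,1,1)\}$ contribute; reading off which reduced vectors $\psi$ land in each of these, using the eigenvalue formulas of the second paragraph, gives $\calG_L$ for $(0,0,0)$, a single ribbon with $\chi\neq\iota,c=e$ for $(1,1,0)$, a single ribbon with $\chi=\iota,c\neq e$ for $(1,0,1)$, and $F_\rho^{\chi,e}F_\sigma^{\iota,c}\Omega_L$ with $\chi\neq\iota,c\neq e$ for $(2,1,1)$; re-inserting $F_\rho^{\iota,e}=I$ collapses these four families into the single family $\{F_\rho^{\chi,e}F_\sigma^{\iota,c}\Omega_L\}$ claimed for $\calG_L^{\epsilon,\mu}$, and the same bookkeeping with the pairs $(E,a)$ satisfying $E=a\le 1$, resp. $(E,b)$ with $E=b\le 1$, yields the spanning sets for $\calG_L^\epsilon$ and $\calG_L^\mu$. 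I expect the main obstacle to be the ribbon bookkeeping in the second paragraph: justifying the reduction to ribbons from an interior site to the boundary with distinct endpoints and trivial pieces dropped, and tracking the crossing phases and character sums carefully enough that the reduced vectors are genuinely joint eigenvectors with the stated eigenvalues; once that is in place the rest is linear algebra on the joint spectral decomposition.
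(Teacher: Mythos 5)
Your proposal is correct and follows essentially the same route as the paper: the paper's argument is exactly the computation of the action of $H_L$, $V_L^\epsilon$ and $V_L^\mu$ on ribbon-excited ground states from the spanning set \eqref{eqn:qdgsspan}, and you have merely made the joint spectral bookkeeping explicit. The only detail to tighten is that ``distinct interior endpoints'' should be arranged so that the ribbons carrying nontrivial electric charge end at distinct vertices and those carrying nontrivial magnetic charge at distinct faces (achievable with the same concatenation, path-independence and $F_\rho^{\chi,c}=F_\rho^{\chi,e}F_\rho^{\iota,c}$ moves you already invoke), since otherwise the additive formula for the $H_L$-eigenvalue $E$ can fail when charges fuse at a shared vertex or face.
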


From the decompositions given in Lemma~\ref{lem:gsspan}, 
it is clear that 
\begin{equation}\label{eqn:gsintersection}
\text{if } L' > L \text{ then } H_L^{ \epsilon, \mu} ( \calG_{L'}^{\epsilon,\mu}) = 0.
\end{equation}
Note that $\calG^\mu_L$ and $\calG_L^\epsilon$ are subspaces of $\calG_L^{\epsilon,\mu}$. 
This result allows us to decompose the ground state space into different sectors corresponding to the different charges:
\begin{corollary}
The ground state space has a natural decomposition
\begin{equation}
\calG_L^{\epsilon,\mu} = \bigoplus_{\chi\in\widehat{G},c\in G} D_L^{\chi,c}\calG_L^{\epsilon,\mu}
\end{equation}
\end{corollary}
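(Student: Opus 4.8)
The plan is to realize $\{D_L^{\chi,c}\}_{(\chi,c)\in\widehat G\times G}$, with $D_L^{\chi,c}:=D_L^\chi D_L^c$, as a complete family of mutually orthogonal projections on $\calH_L$ that commute with $H_L^{\epsilon,\mu}$, and then invoke the elementary fact that an orthogonal projection commuting with a positive operator leaves its kernel invariant. Since $H_L^{\epsilon,\mu}\geq 0$ by Lemma~\ref{lem:gsspan}, one has $\calG_L^{\epsilon,\mu}=\ker H_L^{\epsilon,\mu}$, and the asserted decomposition is then forced by completeness and pairwise orthogonality of the family restricted to the invariant subspace $\calG_L^{\epsilon,\mu}$: every $\psi\in\calG_L^{\epsilon,\mu}$ equals $\sum_{\chi,c}D_L^{\chi,c}\psi$ with each summand again in $\calG_L^{\epsilon,\mu}$, and distinct summands are orthogonal.

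First I would check that the family has the claimed structure. Because $G$ is abelian, $D_L^\chi$ (a polynomial in the $A_v^g$) and $D_L^c$ (a polynomial in the $B_f^h$) commute by the star--plaquette relations, so $D_L^{\chi,c}$ is a well-defined orthogonal projection. From $D_v^\chi D_v^\xi=\delta_{\chi,\xi}D_v^\chi$, $D_f^cD_f^d=\delta_{c,d}D_f^c$ and the completeness relations in~\eqref{eqn:localprojcomplete}, a short computation over configurations gives $D_L^\chi D_L^{\chi'}=\delta_{\chi,\chi'}D_L^\chi$, $D_L^cD_L^{c'}=\delta_{c,c'}D_L^c$, $\sum_\chi D_L^\chi=\sum_c D_L^c=I$, and hence $D_L^{\chi,c}D_L^{\chi',c'}=\delta_{(\chi,c),(\chi',c')}D_L^{\chi,c}$ together with $\sum_{\chi,c}D_L^{\chi,c}=I$.

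The main step is that each $D_L^{\chi,c}$ commutes with $H_L^{\epsilon,\mu}=H_L-V_L^\epsilon-V_L^\mu$. Commutation with $H_L=\sum_v(I-A_v)+\sum_f(I-B_f)$ is immediate since each $A_v$ and $B_f$ commutes with every $A_{v'}^g$ and $B_{f'}^h$ in the abelian case. For the boundary terms, Lemma~\ref{lem:globprojboundaryop} identifies $V_L^\epsilon=I-D_L^\iota$ and $V_L^\mu=I-D_L^e$; since $D_L^\iota$ is one of the mutually orthogonal electric projectors, $D_L^e$ one of the mutually orthogonal magnetic projectors, and the electric and magnetic total-charge projectors commute with one another (again by the abelian relations), we obtain $[D_L^{\chi,c},V_L^\epsilon]=[D_L^{\chi,c},V_L^\mu]=0$. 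Hence $D_L^{\chi,c}$ preserves $\ker H_L^{\epsilon,\mu}=\calG_L^{\epsilon,\mu}$, completing the argument. I do not expect a genuine obstacle here: the only care needed is the bookkeeping of commutation relations, and the one nonobvious input --- that the global charge projectors commute with the boundary operators $V_L^k$ --- is essentially already contained in Lemma~\ref{lem:globprojboundaryop} together with the abelian hypothesis. (Alternatively one could bypass the operator argument and let $D_L^{\chi',c'}$ act directly on the spanning vectors $F_\rho^{\chi,e}F_\sigma^{\iota,c}\Omega_L$ of Lemma~\ref{lem:gsspan} using~\eqref{eqn:localprojribbonrelation1}--\eqref{eqn:localprojribbonrelation2}, but the commutation route is cleaner.)
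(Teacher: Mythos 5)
Your argument is correct, but it is not quite the route the paper takes: the paper's proof is the one you relegate to your final parenthesis. It invokes Lemma~\ref{lem:gsspan}(ii) to get the explicit spanning set $\{F_\rho^{\chi,e}F_\sigma^{\iota,c}\Omega_L\}$ of $\calG_L^{\epsilon,\mu}$ and then applies the completeness relation~\eqref{eqn:localprojcomplete}; acting with the global charge projectors on those spanning vectors (via \eqref{eqn:localprojribbonrelation1}--\eqref{eqn:localprojribbonrelation2}) either reproduces or annihilates them, which sorts the spanning set into the charge sectors and yields the direct sum. Your proof instead avoids the spanning set entirely: you only use positivity, $\calG_L^{\epsilon,\mu}=\ker H_L^{\epsilon,\mu}$, and the fact that $\{D_L^{\chi,c}\}$ is a complete orthogonal family of projections commuting with $H_L^{\epsilon,\mu}$ (for the boundary terms you can either use Lemma~\ref{lem:globprojboundaryop} as you do, or more directly equation~\eqref{eqn:ribclosedcom}, since $V_L^\epsilon,V_L^\mu$ are built from closed ribbon operators, which commute with every $A_v^g$ and $B_f^h$ and hence with the charge projectors without any appeal to the boundary identification). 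What your route buys is generality and economy: it works for any nonnegative Hamiltonian commuting with the projector family and needs none of the structure of the ground space. What the paper's route buys is concreteness: it exhibits exactly which vectors carry which charge $(\chi,c)$, information that is then used in the proofs of Lemma~\ref{lem:asymptoticcoef} and Theorem~\ref{thm:eqstates}, so the spanning-set description is not really optional downstream even if it can be bypassed for this corollary.
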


\begin{proof}
This follows from Lemma \ref{lem:gsspan} and the relation \eqref{eqn:localprojcomplete}.
\end{proof}

We now come to states in the thermodynamic limit that describe a single excitation.
Such states may be constructed on the quasi-local algebra by moving one of the excitations in a pair off to infinity.
Let $\rho$ be a ribbon extending to infinity such that $\partial_0\rho = s$ and $\partial_1 \rho = \infty$, where $\partial_1 \rho = \infty$ means that the ribbon goes to infinity in any direction.
We assume that it does not ``loop back'', in the sense that if $\rho_n$ is the ribbon consisting of the first $n$ parts of $\rho$, then for any fixed point in the lattice, the distance to the endpoint of $\rho_n$ that is not fixed goes to infinity as $n$ goes to infinity.

We denote $\rho_L = \rho \cap \Lambda_L$.
Define the state $\omega^{\chi,c}_s$  on $\calA_{loc}$, and its unique continuous extension to $\calA$, by
\begin{equation}\label{eqn:singleexcitation}
\omega_s^{\chi,c} (A) := \lim_{L\ra \infty} \langle F_{\rho_L}^{\chi,c } \Omega_L, A   F_{\rho_L}^{\chi,c } \Omega_L\rangle.
\end{equation}

The limit converges because the sequence is eventually constant for fixed local $A$.
That is, by concatenation \eqref{eqn:ribbonconcatenation} and unitarity in the ribbon operators, there exists $L>0$ such that for all $L' >L$ we have 
$(F_{\rho_{L'}}^{\chi,c})^* A F_{\rho_{L'}}^{\chi,c} = (F_{\rho_{L}}^{\chi,c})^* A F_{\rho_{L}}^{\chi,c}$,
and by local indistinguishability, the state is independent of the choice of sequence $\Omega_L$. 
By path independence in the ground state, the state $\omega_s^{\chi,c}$ is also independent of the 
path that $\rho$ takes to infinity and depends only on the basepoint $s$.

Note that by construction we have that $H_L^{\epsilon,\mu} \geq 0$,
$ \delta(A) = \lim_{L\ra \infty} [H_L, A] = \lim_{L\ra \infty} [H_L^{\epsilon,\mu}, A]$  for all $ A \in \calA_{loc}$, 
and  $\omega_s^{\chi,c}( H_L^{\epsilon,\mu}) = 0$ from Lemma~\ref{lem:gsspan}.
From the following basic lemma, it follows that $\omega_s^{\chi,c}$ is an infinite volume ground state. 

\begin{lemma}\label{lem:finitegslim}
Let $\omega \in \calA_{+,1}^*$ and $\tilde{H}_L\in \calA_L$ be a sequence of positive operators 
such that $\delta(A) = \lim_{L \ra \infty} [\tilde{H}_L, A]$ for all $A \in \calA_{loc}$.
If $\omega (\tilde{H}_L) = 0$ for all $L$ then $\omega$ is a ground state, that is, 
$\omega(A^* \delta(A)) \geq 0$ for all $A \in \calA_{loc}$.
\end{lemma}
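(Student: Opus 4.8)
The plan is to verify the ground state inequality $\omega(A^*\delta(A)) \geq 0$ directly, using the hypothesis that $\delta$ is approximated by the positive operators $\tilde H_L$ on which $\omega$ vanishes. Fix $A \in \calA_{loc}$, say $A \in \calA_{\Lambda}$ for some finite $\Lambda$. Since $\delta(A) = \lim_{L\to\infty}[\tilde H_L, A]$ in norm, it suffices to show $\omega(A^*[\tilde H_L, A]) \geq 0$ for all sufficiently large $L$ (and then pass to the limit, using that $\omega$ is norm-continuous). So the whole proof reduces to the finite-volume statement: if $\tilde H_L \geq 0$ and $\omega(\tilde H_L) = 0$, then $\omega(A^*[\tilde H_L, A]) \geq 0$.

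The key algebraic identity is the expansion
\begin{equation*}
A^*[\tilde H_L, A] = A^* \tilde H_L A - A^* A \tilde H_L = A^* \tilde H_L A - \tfrac12\{A^*A, \tilde H_L\} + \tfrac12[A^*A, \tilde H_L],
\end{equation*}
but a cleaner route is to write, for the self-adjoint $\tilde H_L$,
\begin{equation*}
A^*\tilde H_L A = (A - c)^* \tilde H_L (A-c) + c^*\tilde H_L A + A^*\tilde H_L c - c^*\tilde H_L c
\end{equation*}
and instead directly use the Cauchy--Schwarz inequality for the positive sesquilinear form $(X,Y) \mapsto \omega(X^* \tilde H_L Y)$, which is well-defined since $\tilde H_L \geq 0$. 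First I would observe that $\omega(\tilde H_L) = 0$ together with $\tilde H_L \geq 0$ forces $\omega(\tilde H_L^{1/2} X \tilde H_L^{1/2}\cdots)$-type terms to vanish; more precisely, by Cauchy--Schwarz $|\omega(B^* \tilde H_L)|^2 = |\omega(B^*\tilde H_L^{1/2}\cdot \tilde H_L^{1/2})|^2 \leq \omega(B^*\tilde H_L B)\,\omega(\tilde H_L) = 0$ for any $B \in \calA$. Hence $\omega(B^*\tilde H_L) = 0$ and, taking adjoints, $\omega(\tilde H_L B) = 0$ for all $B$. Applying this with $B = A^*A$ gives $\omega(A^*A\,\tilde H_L) = 0$, so
\begin{equation*}
\omega(A^*[\tilde H_L, A]) = \omega(A^*\tilde H_L A) - \omega(A^*A\,\tilde H_L) = \omega(A^*\tilde H_L A) \geq 0,
\end{equation*}
the last inequality because $A^*\tilde H_L A \geq 0$ and $\omega$ is a state.

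Finally I would assemble the pieces: for arbitrary $A \in \calA_{loc}$ and $\varepsilon > 0$, choose $L$ large so that $\|\delta(A) - [\tilde H_L, A]\| < \varepsilon$; then $|\omega(A^*\delta(A)) - \omega(A^*[\tilde H_L,A])| \leq \|A\|\varepsilon$, and since $\omega(A^*[\tilde H_L,A]) \geq 0$ we get $\omega(A^*\delta(A)) \geq -\|A\|\varepsilon$; letting $\varepsilon \to 0$ yields $\omega(A^*\delta(A)) \geq 0$. I do not expect a serious obstacle here; the only point requiring a little care is the Cauchy--Schwarz step establishing $\omega(\tilde H_L B) = 0$ from $\omega(\tilde H_L) = 0$, which hinges on the positivity of the form $(X,Y)\mapsto\omega(X^*\tilde H_L Y)$ — this is immediate once one writes $\tilde H_L = (\tilde H_L^{1/2})^*\tilde H_L^{1/2}$ and uses positivity of $\omega$ on $\calA_L$. (One should also note $\omega$ restricted to $\calA_L$ is a genuine state so these manipulations are legitimate.)
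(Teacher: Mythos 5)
Your proof is correct and follows essentially the same route as the paper: expand $\omega(A^*[\tilde{H}_L,A]) = \omega(A^*\tilde{H}_L A) - \omega(A^*A\,\tilde{H}_L)$, kill the second term using $\tilde{H}_L \geq 0$ and $\omega(\tilde{H}_L)=0$, observe that the first term is nonnegative, and pass to the limit by norm continuity. The only difference is in how the cross term is shown to vanish: the paper works with the reduced density matrix $\rho_L$ of $\omega$ on $\calA_L$ and uses $\tilde{H}_L\rho_L = 0$, whereas you argue via Cauchy--Schwarz with $\tilde{H}_L^{1/2}$ directly at the level of the state, which is equally valid (and does not rely on the finite-dimensionality of $\calA_L$).
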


\begin{proof}
Let $\rho_L \in \calA_L$ be the reduced density matrix for $\omega$ on $\calA_L$, that is, 
$\omega(A) = \Tr(\rho_L A)$ for all $A \in \calA_L$. 
From the condition, $ \omega(\tilde{H}_L) = 0$ and $\tilde{H}_L \geq 0$, it follows that $\tilde{H}_L \rho_L = 0$ for all $L$.
Therefore, by boundedness of $\omega$,
\begin{align*}
\omega(A^* \delta(A)) & = \lim_{L \ra \infty} \omega( A^* [ \tilde{H}_L, A])\\
& = \lim_{L \ra \infty}  \omega(A^* \tilde{H}_L A ) - \omega(A^* A \tilde{H}_L) \\
& = \lim_{L \ra \infty} \Tr(\rho_L A^* \tilde{H}_L A)  - \Tr( \tilde{H}_L \rho_L A^*A)\\
& \geq 0
\end{align*}
for all $A \in \calA_{loc}$.
\end{proof}

The states $\omega^{\chi,c}_s$ were first introduced in~\cite{FiedlerN,Naaijkens}.
They showed the states can be constructed from the frustration-free ground state via an automorphism,
$ \omega_s^{\chi,c} = \omega^0 \circ \alpha^{\chi,c}_\rho$, where
\begin{equation}\label{eqn:chargemorp}
\alpha^{\chi,c}_\rho(A) = \lim_{L\ra \infty} F_{\rho_L}^{\chi,c *} A \ F_{\rho_L}^{\chi,c}.
\end{equation}
The limit converges in norm for each $A \in \calA$ and defines an outer automorphism.

\subsection{Superselection criterion}
Superselection sectors arise because the quasi-local algebra has many inequivalent representations.
Most representations do not have any physical relevance (for example, because the energy is unbounded),
so it is important to restrict the class of representations of interest.
For example, a theory may have different, inequivalent particle types, like the excitations in the quantum double.
Another example would be electric charge.
Here we will use the term ``charge'' in a generalized sense, as a label of the different particle types.
Once we can identify different classes of representations with charges, it is reasonable to impose additional constraints.
In particular, we can impose certain locality conditions, and demand that we are able to move the localization regions around.
A superselection criterion is a rule that tells us precisely which representations we select in the end.

The Doplicher-Haag-Roberts (DHR) analysis in algebraic quantum field theory showed that 
starting from a vacuum state and a physically motivated superselection criterion, 
one could recover a family of superselection sectors corresponding to the global gauge group \cite{Haag1,Haag2}.
This allows one to recover all physically relevant properties of the charges, such as their particle statistics.
A similar analysis has been done for the quantum double models, 
producing the single excitation ground states as the 
irreducible objects in each superselection sector \cite{FiedlerN,Naaijkens}.
The role of the vacuum is played by the translation invariant frustration-free ground state.

For the quantum double models the relevant criterion is as follows.
Let $\Lambda \subset \mathcal{B}$ be an infinite cone region (the precise shape is not that important).
We consider representations $\pi$ which satisfy the following criterion for \emph{any} such $\Lambda$:
\begin{equation}\label{eqn:conecrit}
\pi_0 \upharpoonright \calA_{\Lambda^c} \cong \pi \upharpoonright \calA_{\Lambda^c}.
\end{equation}
Here $\pi_0$ is the GNS representation of the frustration-free ground state and $\pi_0 \upharpoonright \calA_{\Lambda^c}$ means that we restrict the representation to $\calA_{\Lambda^c}$, the $C^*$-algebra generated by all local observables supported outside $\Lambda$.
Physically, to detect the charge of a state in the representation $\pi$, one needs to measure the value of a ``Wilson loop''.
If such loops around the charge are not allowed (as in the selection criterion, due to the absence of the cone), the charge cannot be detected.

The superselection structure of the quantum double model can be analyzed in the same spirit as the DHR program.
The sector structure is summarized in the following proposition

\begin{proposition}\label{prop:singleexc}(\cite{FiedlerN,Naaijkens})
Let $ (\pi_s^{\chi,c}, \Omega_s^{\chi,c}, \calH_s^{\chi,c})$ be the GNS triple for $\omega_s^{\chi,c}$.
Then,
\begin{enumerate}[(i)]
\item $\pi_s^{\chi,c}$ are irreducible representations satisfying the criterion~\eqref{eqn:conecrit},
\item $\pi_{s}^{\chi,c} \cong \pi_{s'}^{\chi,c} $,
\item if $(\chi,c) \neq (\chi',c')$ then $\pi_s^{\chi,c} $ and $\pi_s^{\chi',c'}$ belong to different superselection sectors (and hence are inequivalent),
\item if $\pi$ is irreducible and satisfies \eqref{eqn:conecrit} then there exists $\rho$ and $(\chi, c)$ such that $\pi \cong \pi_s^{\chi,c}$.
\end{enumerate}
\end{proposition}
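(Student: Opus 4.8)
The plan is to run the Doplicher--Haag--Roberts style analysis with the ribbon machinery of Section~\ref{sec:supersel}, following~\cite{FiedlerN,Naaijkens}. The starting point is the identity $\omega^{\chi,c}_s=\omega^0\circ\alpha^{\chi,c}_\rho$ of~\eqref{eqn:chargemorp}, where $\alpha^{\chi,c}_\rho\in\operatorname{Aut}(\calA)$ is the norm limit of the conjugations $A\mapsto(F^{\chi,c}_{\rho_L})^*A\,F^{\chi,c}_{\rho_L}$, so that $(\pi_0\circ\alpha^{\chi,c}_\rho,\Omega_0,\calH_0)$ is a GNS triple for $\omega^{\chi,c}_s$. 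Since $\omega^0$ is pure and $\alpha^{\chi,c}_\rho$ is an automorphism, $\omega^{\chi,c}_s$ is pure, and hence $\pi^{\chi,c}_s$ is irreducible; that settles the first half of (i).

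For (ii) I would join $s'$ to $s$ by a finite ribbon $\sigma$, so that $\sigma\rho$ runs from $s'$ to infinity. Path independence in the ground state~\eqref{eqn:ribbonpathind} gives $\omega^{\chi,c}_{s'}=\omega^0\circ\alpha^{\chi,c}_{\sigma\rho}$, and extracting the fixed local operator $F^{\chi,c}_\sigma$ from the limit by means of the concatenation rule~\eqref{eqn:ribbonconcatenation} yields $\alpha^{\chi,c}_{\sigma\rho}(A)=\alpha^{\chi,c}_\rho\big((F^{\chi,c}_\sigma)^*A\,F^{\chi,c}_\sigma\big)$, i.e.\ $\omega^{\chi,c}_{s'}=\omega^{\chi,c}_s\circ\operatorname{Ad}(F^{\chi,c}_\sigma)$. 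As $F^{\chi,c}_\sigma$ is a local partial isometry that acts unitarily on the cyclic subspaces in play, this produces a unitary equivalence $\pi^{\chi,c}_{s'}\cong\pi^{\chi,c}_s$ with intertwiner $\pi^{\chi,c}_s(F^{\chi,c}_\sigma)$. The cone condition in (i) then follows by transport: given any cone $\Lambda$, pick a basepoint $s'\in\Lambda$ and a ribbon $\rho'\subset\Lambda$ from $s'$ to infinity whose thickened support stays inside $\Lambda$; then $\alpha^{\chi,c}_{\rho'}$ restricts to the identity on $\calA_{\Lambda^c}$, since every $F^{\chi,c}_{\rho'_L}$ is supported away from $\Lambda^c$, so $\pi^{\chi,c}_{s'}\!\upharpoonright\!\calA_{\Lambda^c}=\pi_0\!\upharpoonright\!\calA_{\Lambda^c}$; combined with (ii) this is exactly~\eqref{eqn:conecrit}.

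For (iii) I would detect the charge with Wilson loops $\partial L$ around $s$ that are sent to infinity. Such a loop crosses $\rho$ exactly once, so the crossing relation~\eqref{eqn:ribbonrelation} gives $\alpha^{\chi,c}_\rho(F^{\xi,e}_{\partial L})=\xi(c)\,F^{\xi,e}_{\partial L}$ and $\alpha^{\chi,c}_\rho(F^{\iota,d}_{\partial L})=\overline{\chi(d)}\,F^{\iota,d}_{\partial L}$; and since these closed ribbons fix ground states~\eqref{eqn:closedribbon}, $\pi_0(F^{\xi,e}_{\partial L})\to I$ and $\pi_0(F^{\iota,d}_{\partial L})\to I$ in the weak operator topology as $L\to\infty$. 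Hence $\pi^{\chi,c}_s(F^{\xi,e}_{\partial L})\to\xi(c)\,I$ and $\pi^{\chi,c}_s(F^{\iota,d}_{\partial L})\to\overline{\chi(d)}\,I$. These limiting scalars are invariants of the unitary equivalence class, and since $\widehat G$ and $G$ separate one another, the families $\{\xi(c)\}_{\xi}$ and $\{\chi(d)\}_{d}$ recover $c$ and $\chi$; thus $\pi^{\chi,c}_s\cong\pi^{\chi',c'}_s$ forces $(\chi,c)=(\chi',c')$, which is (iii).

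The genuinely hard part is (iv). Given an irreducible $\pi$ satisfying~\eqref{eqn:conecrit}, I would fix a cone $\Lambda$ and, after composing with the unitary provided by~\eqref{eqn:conecrit}, assume $\pi$ acts on $\calH_0$ and coincides with $\pi_0$ on $\calA_{\Lambda^c}$. The essential ingredient is Haag duality for cones in the vacuum sector, $\pi_0(\calA_\Lambda)''=\pi_0(\calA_{\Lambda^c})'$, established for the abelian quantum double in~\cite{FiedlerN}; it forces $\pi=\pi_0\circ\beta$ for an endomorphism $\beta$ of $\calA$ localized and transportable in cones. One then invokes the classification of such endomorphisms, which matches each of them, up to unitary equivalence, with some $\alpha^{\chi,c}_\rho$, yielding $\pi\cong\pi^{\chi,c}_s$. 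I expect Haag duality and the endomorphism classification to be the main technical obstacles, and for both I would rely on~\cite{FiedlerN,Naaijkens}.
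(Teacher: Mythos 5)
The paper does not actually prove this proposition --- it is imported from \cite{FiedlerN,Naaijkens} --- and your sketch faithfully reconstructs the DHR-style arguments of those references: purity of $\omega^0$ composed with the automorphisms $\alpha_\rho^{\chi,c}$ for (i), path independence plus the concatenation rule for transportability and the cone criterion in (ii), asymptotic Wilson-loop phases (which are invariants of the unitary equivalence class since the loops commute with local observables for large $L$ and act trivially on the frustration-free state) for (iii), and cone Haag duality together with the classification of localized transportable morphisms for (iv). Parts (i)--(iii) are sound as written, and deferring (iv) to the Haag-duality machinery of \cite{FiedlerN,Naaijkens} is precisely what the paper itself does, so there is nothing substantive to flag.
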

Pushing this analysis further, all properties of the charges such as their fusion and braiding rules can be recovered~\cite{FiedlerN}.
It follows that the structure is completely described by the representation theory of the quantum double, $\operatorname{Rep}(\mathcal{D}(G))$.
It is interesting to see that the charge superselection structure is closely related to the classification of ground states of the quantum double, as will become even clearer in the next section.

\section{The complete set of ground states}\label{sec:results}
In this section we prove our main result: a complete classification of the ground states of the quantum double model
for abelian groups.
Our strategy is to find a boundary condition such that any 
infinite volume ground state has zero energy for the Hamiltonian with this boundary condition.
It turns out that this is possible with the boundary conditions introduced in Section~\ref{sec:finvolume}.
The classification of infinite volume ground states then
simplifies to a classification of infinite volume limits of finite volume ground states.
These finite volume ground states are well understood by the results in the previous section, and this allows us to obtain our classification.
This strategy is similar to the solution of the complete ground state for the XXZ chain given in \cite{KomaN}.

We begin by introducing some notation.
\begin{definition}
Let $K := \{ \omega \in \calA_{+,1}^* \mid  \omega(A^* \delta(A))\geq 0\}$ denote the set of infinite volume ground states,
where $\delta$ is the generator of the dynamics for the quantum double model for abelian group $G$.
Similarly, for the set of all finite volume ground state functionals of $H_L^{\epsilon,\mu}$ we write $ K_L := \{\omega_L:\calA_L \ra \CC \mid \omega_L(H_L^{\epsilon,\mu})=0\} $.
\end{definition}

The first step is to show that any infinite volume ground state minimizes the energy of the finite volume Hamiltonians $H_L^{\epsilon,\mu}$ of Definition~\ref{def:gsboundary}.
Here we will use the formulation of the boundary term in terms of a sum of products of local charge projections. This gives us precise control on the location of possible excitations.
\begin{lemma}\label{lem:gshambound}
Let $\omega\in \calA_{+,1}^*$.
Then, $\omega \in K$ if and only if for all $L \geq 2$
\begin{equation} 
\omega(H_L - D_L^\epsilon - D_L^\mu) = 0.
\end{equation} 
\end{lemma}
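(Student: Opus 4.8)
First I would rewrite the assertion in a more convenient form. By Lemma~\ref{lem:globprojboundaryop}, $H_L-D_L^\epsilon-D_L^\mu=H_L-V_L^\epsilon-V_L^\mu=H_L^{\epsilon,\mu}$, which by Lemma~\ref{lem:gsspan} is a positive element of $\calA_{\Lambda_L}$ whose kernel is exactly $\calG_L^{\epsilon,\mu}$, and whose boundary part $V_L^\epsilon+V_L^\mu$ is supported in the annulus $\Lambda_L\setminus\Lambda_{L-1}$. Since that annulus eventually misses any fixed finite region, $[V_L^k,A]=0$ for $L$ large, so $\delta(A)=\lim_L[H_L,A]=\lim_L[H_L^{\epsilon,\mu},A]$ for every $A\in\calA_{loc}$. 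Hence the ``if'' direction is immediate from Lemma~\ref{lem:finitegslim} applied with $\tilde H_L=H_L^{\epsilon,\mu}$.

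The content is the ``only if'' direction: given $\omega\in K$ and $L\ge 2$, show $\omega(H_L^{\epsilon,\mu})=0$. As $H_L^{\epsilon,\mu}\ge 0$ it suffices to show the density matrix $\rho_L$ of $\omega$ on $\calA_{\Lambda_L}$ is supported on $\calG_L^{\epsilon,\mu}$; equivalently, since one checks (from Lemma~\ref{lem:gsspan}) that $\calG_L^{\epsilon,\mu}$ is precisely the subspace on which at most one bulk star and at most one bulk plaquette are excited, it suffices to show $\omega$ assigns zero weight to configurations with two bulk electric (or two bulk magnetic) excitations inside $\Lambda_L$. The plan is to extract this from the ground-state inequality $\omega(A^*\delta(A))\ge 0$ together with the invariance identity $\omega(\delta(A^*A))=0$, specialized to ribbon operators. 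For an open ribbon $\tau$ routed in $\Lambda_{L'}$ ($L'\ge L$) with endpoint sites $(v_0,f_0),(v_1,f_1)$, equations~\eqref{eqn:ribprop1} and~\eqref{eqn:ribHamrel} give $(F_\tau^{\xi,d})^*F_\tau^{\xi,d}=F_\tau^{\iota,e}$ and $\delta(F_\tau^{\xi,d})=F_\tau^{\xi,d}Q_\tau$, where $Q_\tau$ is a bounded operator built from the local charge projectors at the four endpoint sites $(v_0,f_0)$ and $(v_1,f_1)$, commuting with $F_\tau^{\iota,e}$; feeding in $A=F_\tau^{\xi,d}$, and also ribbons running to infinity and finite linear combinations (as in the construction~\eqref{eqn:singleexcitation} of the single-excitation states), produces a family of inequalities constraining the joint law under $\rho_L$ of the charges at the sites of $\mathcal{S}_L$. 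These are designed so that a positive weight on a fusible pair of bulk excitations contradicts $\omega(A^*\delta(A))\ge 0$, because fusing (or transporting one member toward the boundary along $\tau$) strictly lowers the $H_L^{\epsilon,\mu}$-energy. Finally, Lemma~\ref{lem:globprojboundaryop} in the form $V_L^\epsilon=I-D_L^\iota$, $V_L^\mu=I-D_L^e$, combined with total-charge conservation (Lemma~\ref{lem:ribchargeinv}), shows that in each of the four surviving cases (no excitation, one electric, one magnetic, one of each) the net charge in $\Lambda_L$ equals the excitation's charge, so that $V_L^\epsilon$ and $V_L^\mu$ act as the identity exactly when an excitation is present and cancel the one unit of $H_L$-energy it costs; thus $H_L^{\epsilon,\mu}\rho_L=0$ and $\omega(H_L^{\epsilon,\mu})=0$.

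The main obstacle is the step that turns the one-ribbon-at-a-time inequalities into the global statement ``a ground state supports at most one excitation of each type in a box.'' A single pair of ribbon inequalities only bounds the probability of an offending configuration away from $1$, not from $0$, so the argument must either iterate --- successively transporting excitations toward one another and toward the boundary using path independence~\eqref{eqn:ribbonpathind} and concatenation~\eqref{eqn:ribbonconcatenation} so the bookkeeping closes --- or route the inequalities through the GNS Hamiltonian to first bound the expected number of excitations and then rule out two of them by a fusion argument. In either route, pinning down how the boundary flux ``pays for'' exactly one excitation, which is what forces $\omega(H_L^{\epsilon,\mu})$ to vanish rather than merely be finite, is precisely what the rewriting of $V_L^\epsilon,V_L^\mu$ as sums of products of local charge projectors is for.
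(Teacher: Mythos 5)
Your framing of the problem is the same as the paper's: the ``if'' direction via Lemma~\ref{lem:finitegslim}, the identification $H_L-D_L^\epsilon-D_L^\mu=H_L^{\epsilon,\mu}$ via Lemma~\ref{lem:globprojboundaryop}, and the reduction of the ``only if'' direction to showing that the restriction of $\omega$ to $\calA_{\Lambda_L}$ is supported on $\calG_L^{\epsilon,\mu}$, i.e.\ that a ground state carries at most one bulk excitation of each type per box with the boundary flux matching it. But the core of that direction is missing. Plugging bare ribbon operators $A=F_\tau^{\xi,d}$ into the ground state inequality only yields $\omega(Q_\tau)\geq 0$ with $Q_\tau$ a \emph{mixed-sign} combination of charge projectors, which, as you yourself note, never forces any probability to vanish; and the escape routes you sketch (``iterate the transport bookkeeping'' or ``route through the GNS Hamiltonian and a fusion argument'') are not arguments but placeholders --- the inequalities you say ``are designed so that'' a contradiction arises are exactly what needs to be designed. (Also note that perturbations must lie in $\calA_{loc}$, so ``ribbons running to infinity'' cannot literally enter the ground state condition.)

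The paper closes this gap with a specific construction you do not have: dress the ribbon operators with \emph{local charge projectors} selecting a configuration. Fixing faces $\{f_i\}$ and magnetic charges $\{c_i\}$ with $\prod_i c_i=e$, and connecting consecutive sites by ribbons with matched labels $\tilde c_i$, one takes $A=\big(\prod_i F_{\rho_i}^{\iota,\tilde c_i}\big)\big(\prod_i(I-B_{f_i})\big)$ and computes $A^*\delta(A)=-\prod_i(I-B_{f_i})\big(\sum_j B_{f_j}^{c_j}\big)\leq 0$, so the ground state condition forces the \emph{exact} vanishing $\omega\big(\prod_i(I-B_{f_i})\sum_j B_{f_j}^{c_j}\big)=0$; summing over all configurations with trivial product gives $\omega\big(\prod_i(I-B_{f_i})\big)=0$ for any finite family of faces (this is the rigorous form of your ``at most one bulk plaquette excited''). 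An inclusion--exclusion induction then upgrades this to $\omega\big(\sum_f(I-B_f)\big)=\omega\big(I-\prod_f B_f\big)$, and a second choice $A'=\prod_i F_{\rho_i}^{\iota,\tilde c_i}\prod_i B_{f_i}^{c_i}$ gives $A'^*\delta(A')=-l\prod_iB_{f_i}^{c_i}\leq 0$, hence $\omega\big(\prod_i B_{f_i}^{c_i}\big)=0$ for every nontrivial configuration with trivial total charge, which converts $\omega\big(I-\prod_fB_f\big)$ into $\omega(D_L^\mu)=\omega(V_L^\mu)$; the electric part is analogous. Without this choice of perturbations (projector-dressed ribbons whose labels are tied to a charge configuration with trivial total product), the summation over configurations, and the induction step, your proposal does not establish the ``only if'' direction.
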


\begin{proof}
($\impliedby$) This follows from Lemma \ref{lem:finitegslim}.

($\implies$) We will show that $\omega( \sum_{v\in\mathcal{V}_L} \left(I - A_v\right) ) = \omega(V_L^\epsilon)$ and similarly that $\omega(\sum_{f\in\mathcal{F}_L} \left(I - B_f\right))  = \omega( V_L^\mu)$.
The result then follows from Lemma~\ref{lem:globprojboundaryop}.

Let $L\geq 2$ be given.
Consider an arbitrary enumeration of the set of plaquettes, $\mathcal{F}_L = \{f_i\}_{i=1}^{n_L}$, and
a configuration of magnetic charges, $\{ c_i\in G\}_{i=1}^{n_L}$
such that $\prod_i c_i = e$.
In the following, sums and products indexed by $i,j$ and $k$ will run from $1$ to $n_L$ unless otherwise stated.
Pair $f_i$ with a neighboring vertex $v_i$ and 
let $\rho_i $ be a ribbon 
such that $\partial_0 \rho_i = (v_i, f_i) $  and $\partial_1 \rho_i = (v_{i+1},f_{i+1})$.
With this choice, consider the operator $A =\big( \prod_i F_{\rho_i}^{\iota, \tilde{c}_i}\big)\big(\prod_i(I - B_{f_i})\big)$, 
where the family $\{ \tilde{c}_i \}$ is chosen such that 
\begin{equation}
B_{f_i} \bigg(\prod_k F_{\rho_k}^{\iota, \tilde{c}_k}\bigg) = \bigg(\prod_k F_{\rho_k}^{\iota, \tilde{c}_k}\bigg) B_{f_i}^{c_i}, \quad \forall i.
\end{equation}
Indeed, the $\tilde{c}_i$'s must be such that $c_{i+1} = \tilde{c}_{i} \bar{\tilde{c}}_{i+1}$.
The condition that $\prod_i c_i = e$ guarantees that such a family $\{ \widetilde{c}_i \}_i$ always exists,
for instance, $\tilde{c}_i = \prod_{j\leq i}\bar{c}_j.$

We want to apply the ground state condition to the operator $A$, hence we compute
\begin{align*}
A^*\delta(A) &= A^*[H_L, A] \\
&=  \prod_{i} (I - B_{f_i})\bigg(\prod_i F_{\rho_i}^{\iota, \tilde{c}_i}\bigg)^* \Big(\sum_{j} \bigg[  - B_{f_j},\prod_i F_{\rho_i}^{\iota, \tilde{c}_i} \bigg]\Big) \prod_i (I - B_{f_i})\\
& = \prod_{i} (I - B_{f_i})\bigg(\prod_i F_{\rho_i}^{\iota, \tilde{c}_i}\bigg)^*\prod_i F_{\rho_i}^{\iota, \tilde{c}_i} \bigg(\sum_{j} B_{f_j} - B_{f_j}^{c_{j}}\bigg) \prod_{i} (I - B_{f_i})\\
&= \prod_{i} (I - B_{f_i}) \bigg(\sum_{j} B_{f_j} - B_{f_j}^{c_{j}}\bigg) \\
&=  - \prod_{i} (I - B_{f_i}) \bigg(\sum_{j} B_{f_j}^{c_{j}}\bigg).
\end{align*}
The operator $\prod_{i} (I - B_{f_i}) \bigg(\sum_{j} B_{f_j}^{c_{j}}\bigg)$ is a product of
commuting positive operators and, hence, it is positive. But this implies that $A^*[H_L, A] \leq 0$.

Because of the ground state condition, equation \eqref{eqn:gs}, and the calculation above, $\omega(A^*[H_L, A]) = 0$.
We can then sum over each configuration $c_i$ with trivial product.
Note that if we fix $c_j$ for $j=1, \ldots, n_L-1$, this fixes $c_{n_L}$ by the condition that their product should be trivial.
Hence the summation over all configurations gives
\begin{equation}
	0 = \sum_{(c_1, \dots, c_{n_L-1}) \in G^{n_{L}-1}} \omega\bigg(\prod_{i} (I - B_{f_i}) \bigg(B_{f_{n_L}}^{\overline{\prod c_i}} + \sum_{k=1}^{n_L-1} B_{f_k}^{c_{k}}\bigg)\bigg).
\end{equation}
Here we separated the $n_{L}$ face from the others in the summation, since its magnetic charge is fixed by the others.
We now do the summation over $c_1$.
Note that as $c_1$ runs over the group $G$, so does $\overline{\prod_{i=1}^{n_L-1} c_i}$.
Also note that for any $j$, $\sum_{c_j\in G} B_{f_j}^{c_j} = I$.
This yields, by repeating this procedure,
\begin{align}
0 & = \sum_{(c_2, \dots c_{n_L-1}) \in G^{n_{L}-2}} \omega\bigg(\prod_{i} (I - B_{f_i}) \bigg(2 I + \sum_{k=2}^{n_L-1} B_{f_k}^{c_{k}}\bigg)\bigg)\\
& = c(G, n_L) \omega\bigg(\prod_{i} (I - B_{f_i})\bigg),
\end{align}
where $c(G, n_L)$ is some non-zero constant depending only on $|G|$ and the number of plaquettes.
Therefore, 
\begin{equation}\label{eqn:doubgscond3}
\omega\bigg(\prod_{i=1}^{n_L} (I - B_{f_i})\bigg) = 0.
\end{equation}
Equation \eqref{eqn:doubgscond3} generally holds for a finite subset $\Lambda\subset \calB$,
where we assume that the subset is contained in some box $\Lambda_L$.
We will need this fact for the following argument.

We proceed by induction to show that 
\begin{equation}\label{eqn:doubgscond1}
\omega\bigg( \sum_{f\in \mathcal{F}_L} I-B_f\bigg)  = \omega\bigg( I - \prod_{f\in \mathcal{F}_L} B_f\bigg).
\end{equation}
For the case of two faces, $f_1$ and $f_2$, we have from equation~\eqref{eqn:doubgscond3}
\begin{align*}
0 & = \omega( (I- B_{f_1})(I-B_{f_2}))\\
& = \omega( I - B_{f_1} - B_{f_2} + B_{f_1}B_{f_2}),
\end{align*}
so that $\omega( I - B_{f_1}) + \omega(I- B_{f_2}) = \omega(I- B_{f_1}B_{f_2})$.

Suppose that equation~\eqref{eqn:doubgscond1} holds if $X$ is a finite collection of faces with $\abs{X} \leq n$.
Now let $X$ be a finite collection of $n$ faces making up a region in $\Lambda_L$ and 
enumerate the elements, $X = \{f_i\}_{i=1}^n$ and let $ f_{n+1} \notin X$ be a face in $X$ but otherwise arbitrary.
From equation~\eqref{eqn:doubgscond3} it follows that 
\[
\omega\Big( \Big(\prod_{i\in X} I-B_{f_i}\Big)(I-B_{f_{n+1}} )\Big) = 0.
\]
Expanding the product and using the hypothesis we have,
\begin{align*}
0 & =  \omega\Big( \Big(\prod_{f_i\in X} I-B_{f_i}\Big)(I-B_{f_{n+1}} )\Big) \\
& = \omega\Big( I - \sum_{i=1}^{n+1} B_{f_i} + \sum_{i<j \leq n+1} B_{f_i}B_{f_j} - \sum_{i<j<k \leq n+1} B_{f_i}B_{f_j}B_{f_k} +  \\ & \quad\quad\quad\quad\quad+ \ldots + 
(-1)^{n+1} \Big( \prod_{f_i\in X} B_{f_i} \Big) B_{f_{n+1}}\Big)\\
&= 1 + \left[ \sum_{i=1}^{n+1} \omega\left( I - B_{f_i} \right) - \binom{n+1}{1} \right] - 
\left[ \sum_{i < j \leq n+1} \omega\left(I - B_{f_i} B_{f_j}\right) - \binom{n+1}{2}\right] +\\ 
		 & \quad\quad\quad\quad\quad+ \ldots (-1)^{n+1} \omega\Big( \Big( \prod_{f_i\in X} B_{f_i} \Big) B_{f_{n+1}}\Big)\Big) \\
&= -(-1)^{n+1} + \sum_{i=1}^{n+1} \omega\left( I - B_{f_i} \right)  + \sum_{i < j \leq n+1} \omega\left(I - B_{f_i} B_{f_j}\right) + \ldots + \\
& \quad\quad\quad\quad\quad + (-1)^{n+1} \omega\Big( \Big( \prod_{f_i\in X} B_{f_i} \Big) B_{f_{n+1}}\Big)\Big),
\end{align*}
where in the last step we use the elementary equation $\sum_{k=1}^{n-1} (-1)^k \binom{n}{k} = - (1+(-1)^n)$.
We can then apply the induction hypothesis to all but the last terms.
Note that for the term with $k$ products of $B_{f_i}$, after applying the summation in the induction hypothesis,
each term $\omega(I-B_{f_i})$ appears exactly $\binom{n}{k-1}$ times.
Hence we obtain
\begin{align*}
	0 & =  -(-1)^{n+1}  + \sum_{k=1}^n (-1)^{k+1} \binom{n}{k-1} \left( \sum_{i=1}^{n+1} \omega\left(I - B_{f_i}\right) \right) + \\
	& \quad\quad\quad\quad + (-1)^{n+1} \omega\Big( \Big( \prod_{f_i\in X} B_{f_i} \Big) B_{f_{n+1}}\Big)\Big) \\
	& = -(-1)^{n+1}\omega\left(I - \prod_{i=1}^{n+1} B_{f_i} \right) + (-1)^{n+1} \omega\left( \sum_{i=1}^{n+1} \left(I - B_{f_i}\right)\right),
\end{align*}
where we used that $\sum_{k=1}^{n} (-1)^{k+1} \binom{n}{k-1} = (-1)^{n+1}$.
Therefore equation~\eqref{eqn:doubgscond1} holds. 

Now consider a configuration of magnetic charges, $\{ c_i\in G\}_{i=1}^{n_L}$
such that $\prod_i c_i = e$ and $\rho_i$ and $\widetilde{c}_i$ are as defined earlier.
Let  $ A' = \prod_{i=1}^{n_L} F_{\rho_i}^{\iota, \widetilde{c}_i} \prod_{i=1}^{n_L} B_{f_i}^{c_i}$
and let $l = \#\{i: c_i \neq e\}$.
We compute
\begin{align*}
A'^*\delta(A')& = A'^*[H_L, A'] \\
& = \prod_i B_{f_i}^{c_i}\bigg( \prod_i F_{\rho_i}^{\iota, \widetilde{c}_i} \bigg)^*\sum_{j}  \bigg[ - B_{f_j},  \prod_i F_{\rho_i}^{\iota, \widetilde{c}_i}\bigg]\prod_i B_{f_i}^{c_i}\\
&=\prod_i B_{f_i}^{c_i} \sum_{j} \left(B_{f_j} - B_{f_j}^{c_{j}}\right)\\
& = - l \prod_i B_{f_i}^{c_i}\leq 0.
\end{align*}
Therefore, applying the ground state condition gives
\begin{equation}\label{eqn:doubgscond2}
\text{ if } l > 0 \qquad \text{ then } \qquad  \omega\bigg(\prod_{i=1}^{n_L}B_{f_i}^{c_i}\bigg) = 0.
\end{equation}
Finally, applying the equivalence in Lemma~\ref{lem:globprojboundaryop}
with equations~\eqref{eqn:doubgscond1} and~\eqref{eqn:doubgscond2} gives the result,
\begin{align*}
\omega\bigg( \sum_{i=1}^{n_L} I-B_{f_i}\bigg) & = \omega\bigg( I - \prod_{i=1}^{n_L} B_{f_i}\bigg)\\
& = \omega\bigg( I - \sum_{\prod_i c_i =e }  \prod_{i=1}^{n_L} B_{f_i}^{c_i} \bigg)\\
& = \omega(D_L^\mu) = \omega(V_L^\mu).
\end{align*}

A similar argument gives
\begin{equation}
\omega \bigg( \sum_{v\in \mathcal{V}_L} I  - A_v \bigg) = \omega( V_L^\epsilon).
\end{equation}
This concludes the proof.
\end{proof}

We now state and prove the main result of the paper,
starting with the definitions of the infinite volume ground state subsets:
\begin{definition}\label{def:chargedgs}
Define the following convex subset of states for each $(\chi,c) \in \widehat{G} \times G$: 
\begin{equation}
\begin{split}
K^{\chi,c} := \bigg\{ \omega^{\chi,c} \in \calA_{+,1}^*:& \exists \omega\in K \text{ such  that } 
\lim_{L\ra\infty} \omega(D_{L}^{\chi,c}) > 0 \text{ exists, and } \\
& \omega^{\chi,c} = \wslim_{L\ra\infty} \frac{\omega(\ \cdot \ D_{L}^{\chi,c} )}{\omega(D_{L}^{\chi,c})} \bigg\}.
\end{split}
\end{equation}
\end{definition}

By Lemma \ref{lem:ribchargeinv}, $D_{L'}^{\chi,c}$ is a supported on the boundary.
It follows that if $L'>L$ and  $\omega(D_{L'}^{\chi,c}) > 0$ then 
\begin{equation}\label{eqn:Ksets}
\frac{ \omega( \ \cdot \ D_{L'}^{\chi,c})}{\omega(D_{L'}^{\chi,c})}\bigg|_{\calA_L} = 
\frac{\omega( D_{L'}^{\chi,c} \cdot \ D_{L'}^{\chi,c})}{\omega(D_{L'}^{\chi,c})}\bigg|_{\calA_L}. 
\end{equation} 
In particular, we have that $\omega(\ \cdot \ D_{L'}^{\chi,c})$ is a \emph{positive} linear functional, which is not a priori clear, and $\omega( H_L^{\epsilon,\mu} D_{L'}^{\chi,c}) = 0$.
Thus, by Lemma \ref{lem:gshambound}, $K^{\chi,c} \subset K$ is a subset of the set of infinite volume ground states.
The interpretation of a state in $K^{\chi,c}$ is that it has a \emph{global} excitation of type $(\chi,c)$, hence the projection onto the charge $(\chi,c)$ in the region $\Lambda_L$ has a positive expectation value as $L$ goes to infinity.
The assumption that $  \lim_{L\ra\infty} \omega(D_{L}^{\chi,c})$ exists is always satisfied, as follows from the next lemma.

\begin{lemma}\label{lem:asymptoticcoef}
The limit $ \lambda_{\chi,c}(\omega) := \lim_{L\ra\infty} \omega(D_{L}^{\chi,c})$ exists for all ground states $\omega$ and 
we have $ \lambda_{\chi,c}(\omega) \geq 0$.
Furthermore, if $\omega^{\chi,c}\in K^{\chi,c}$ then  $\lambda_{\sigma,d}(\omega^{\chi,c}) = \delta_{(\sigma,d),(\chi,c)}$.
\end{lemma}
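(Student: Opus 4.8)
The plan is to exploit the fact that each $D_L^{\chi,c}$ is really a \emph{boundary} operator (Lemmas~\ref{lem:ribchargeinv} and~\ref{lem:globprojboundaryop}), so that the sequence $\omega(D_L^{\chi,c})$ has a built-in nesting/monotonicity structure coming from the identities in equation~\eqref{eqn:Ksets}. First I would fix $L'>L\ge 2$ and observe that $D_{L'}^{\chi,c}\in\calA_{\Lambda_{L'}\setminus\Lambda_{L'-1}}$ commutes with everything in $\calA_L$, while the various $D_{L'}^{\chi',c'}$ for different charges are mutually orthogonal projections summing to $I$. Applying the ground state condition via Lemma~\ref{lem:gshambound} to $\omega(\,\cdot\,D_{L'}^{\chi,c})$ and using $[H_{L'}^{\epsilon,\mu},D_{L'}^{\chi,c}]=0$ together with $H_{L'}^{\epsilon,\mu}\ge 0$ shows $\omega(D_{L'}^{\chi,c}\,H_{L'}^{\epsilon,\mu}\,D_{L'}^{\chi,c})=0$, hence $D_{L'}^{\chi,c}$ ``commutes past'' $\omega$ against operators supported deep inside, giving $\omega(A D_{L'}^{\chi,c}) = \omega(D_{L'}^{\chi,c} A D_{L'}^{\chi,c})$ for $A\in\calA_{L'-1}$; positivity of this functional is then immediate, and taking $A=D_L^{\chi,c}$ for $L\le L'-1$ yields $\omega(D_L^{\chi,c}D_{L'}^{\chi,c}) = \omega(D_{L'}^{\chi,c}D_L^{\chi,c}D_{L'}^{\chi,c})\ge 0$.

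Next, I would extract a relation between consecutive terms. The key observation is that the boundary projector $D_L^{\chi,c}$, once we have moved to volume $L'>L$, is detecting a \emph{global} charge that —by Lemma~\ref{lem:ribchargeinv}, since local operations cannot change the charged sector— must agree with the charge detected by $D_{L'}^{\chi,c}$ modulo excitations created by ribbons running out through the annulus $\Lambda_{L'}\setminus\Lambda_L$. Concretely, I expect a decomposition of the form $D_{L'}^{\chi,c} = \sum_{(\chi',c')} D_L^{\chi',c'}\, E_{L,L'}^{(\chi',c')\to(\chi,c)}$ where the $E$'s are positive operators supported in the annulus coming from ribbon operators crossing it; this is exactly the content of the spanning sets in Lemma~\ref{lem:gsspan} applied to the annular region. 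Since $\omega\in K$ and $\omega$ restricted to each $\Lambda_{L'}$ minimizes $H_{L'}^{\epsilon,\mu}$, the reduced density matrix kills $H_{L'}^{\epsilon,\mu}$, which forces the ``charge-changing'' annular contributions to have zero expectation and leaves $\omega(D_{L'}^{\chi,c})=\omega(D_L^{\chi,c})$ — i.e.\ the sequence is \emph{eventually constant}, not merely convergent. Monotone boundedness in $[0,1]$ is the fallback if the eventual-constancy argument needs more care, but either way the limit $\lambda_{\chi,c}(\omega)$ exists and is $\ge 0$, and summing over $(\chi,c)$ gives $\sum_{\chi,c}\lambda_{\chi,c}(\omega)=1$ from equation~\eqref{eqn:localprojcomplete}.

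For the last claim, take $\omega^{\chi,c}\in K^{\chi,c}$, so $\omega^{\chi,c}(A)=\lim_{L\to\infty}\omega(A D_L^{\chi,c})/\omega(D_L^{\chi,c})$ for some $\omega\in K$. To compute $\lambda_{\sigma,d}(\omega^{\chi,c})=\lim_{L'\to\infty}\omega^{\chi,c}(D_{L'}^{\sigma,d})$, fix $L'$ and push $L\to\infty$ first: for $L>L'$, using equation~\eqref{eqn:Ksets} and the fact that $D_{L'}^{\sigma,d}$ and $D_L^{\chi,c}$ are orthogonality-related projectors (for $L>L'$, $D_L^{\chi,c}$ with trivial charge in the annulus forces $D_{L'}^{\sigma,d}$ to act as $\delta_{(\sigma,d),(\chi,c)}$ on the relevant subspace, again by Lemma~\ref{lem:ribchargeinv} — no ribbon crosses from $\Lambda_{L'}$ out and back), one gets $\omega(D_{L'}^{\sigma,d} D_L^{\chi,c}) = \delta_{(\sigma,d),(\chi,c)}\,\omega(D_L^{\chi,c})$. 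Dividing by $\omega(D_L^{\chi,c})>0$ and letting $L\to\infty$ then $L'\to\infty$ gives $\lambda_{\sigma,d}(\omega^{\chi,c})=\delta_{(\sigma,d),(\chi,c)}$.

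The main obstacle I anticipate is making the annular decomposition of $D_{L'}^{\chi,c}$ in terms of $D_L^{\chi',c'}$ and annulus-supported positive operators fully rigorous — i.e.\ precisely controlling which ribbon configurations in $\Lambda_{L'}\setminus\Lambda_L$ can change the apparent charge and showing their expectation vanishes against a ground state. This is where Lemma~\ref{lem:gsspan}, the relation~\eqref{eqn:gsintersection}, and the $H_L^{\epsilon,\mu}$-minimization of $\omega|_{\Lambda_{L'}}$ must be combined carefully; everything else is bookkeeping with orthogonal projections and the already-established boundary-support statements.
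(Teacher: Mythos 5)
Your overall framing (use the structure of the finite--volume ground spaces $\calG_{L''}^{\epsilon,\mu}$ from Lemma~\ref{lem:gsspan} together with the fact that $\omega|_{\calA_{L}}$ minimizes $H_{L}^{\epsilon,\mu}$ by Lemma~\ref{lem:gshambound}) is the right one, but two of your key steps are false as stated. First, the claimed eventual constancy $\omega(D_{L'}^{\chi,c})=\omega(D_{L}^{\chi,c})$ cannot hold: a ground state may perfectly well carry charge located in the annulus $\Lambda_{L'}\setminus\Lambda_L$. For example, the single--excitation ground state $\omega_s^{\chi,c}$ with $s$ a site in that annulus has $\omega_s^{\chi,c}(D_{L'}^{\chi,c})=1$ but $\omega_s^{\chi,c}(D_{L}^{\chi,c})=0$; more generally a convex combination $\sum_n 2^{-n}\omega_{s_n}^{\chi,c}$ with $s_n\to\infty$ has $\omega(D_L^{\chi,c})$ strictly increasing in $L$, never constant. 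So the ground state condition does not kill the ``charge--changing'' annular contributions, and your fallback of monotone boundedness is also not available in general: on $\calG_{L''}^{\epsilon,\mu}$ one has $D_{L}^{\chi}\leq D_{L'}^{\chi}$ (as projections) only for $\chi\neq\iota$, while for the trivial labels the inclusion is reversed, so mixed labels such as $(\chi,e)$ with $\chi\neq\iota$ give sequences $\omega(D_L^{\chi,e})$ that are not monotone. The paper's proof handles exactly this: it proves the four subspace inclusions on $\calG_{L''}^{\epsilon,\mu}$, gets monotone convergence for the purely nontrivial and purely trivial labels, and then treats mixed labels algebraically via $D_L^{\chi,e}=D_L^{\chi}-\sum_{c\neq e}D_L^{\chi,c}$, writing the limit as a difference of already--established limits.

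Second, the exact identity $\omega(D_{L'}^{\sigma,d}D_{L}^{\chi,c})=\delta_{(\sigma,d),(\chi,c)}\,\omega(D_{L}^{\chi,c})$ fails for the same reason (take the bulk excitation of $\omega$ to sit between the two boxes), so your computation of $\lambda_{\sigma,d}(\omega^{\chi,c})$ does not go through as written. What is actually true, and what the paper proves, is an asymptotic statement: for both labels nontrivial one has the exact relation $\omega(D_{L}^{\chi,c}D_{L'}^{\chi,c})=\omega(D_{L}^{\chi,c})$ for $L'>L$, but for mixed labels one only gets, via Cauchy--Schwarz,
\begin{equation*}
\abs{\omega(D_L^{\chi,e}D_{L'}^{\chi,e})-\omega(D_L^{\chi,e})}\leq \sqrt{\omega(D_L^{\chi})}\sqrt{\abs{\omega(D_{L'}^{e}-D_L^{e})}},
\end{equation*}
which is small only because the monotone limits of the first part already exist. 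From this one concludes $\lambda_{\chi,c}(\omega^{\chi,c})=1$, and then the vanishing of all other $\lambda_{\sigma,d}(\omega^{\chi,c})$ follows not from a pointwise orthogonality identity but from positivity together with the sum rule $\sum_{\sigma,d}\lambda_{\sigma,d}=1$ coming from \eqref{eqn:localprojcomplete}. You should replace your eventual--constancy/orthogonality claims by these inclusion--plus--estimate arguments; the rest of your outline (boundary support, positivity of $\omega(\,\cdot\,D_{L'}^{\chi,c})$, use of Lemma~\ref{lem:gshambound}) is sound.
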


\begin{proof}
Let $L''> L' > L$.
First, we claim that 
\begin{align}
\label{eqn:chargecont1} D_L^\chi(\calG_{L''}^{\epsilon,\mu}) &\subset D_{L'}^\chi(\calG_{L''}^{\epsilon,\mu}) & & \text{if } \chi \neq \iota\\
\label{eqn:chargecont2} D_L^c(\calG_{L''}^{\epsilon,\mu}) &\subset D_{L'}^c(\calG_{L''}^{\epsilon,\mu}) & & \text{if } c \neq e\\ 
\label{eqn:chargecont3} D_{L'}^\iota(\calG_{L''}^{\epsilon,\mu}) &\subset D_{L}^\iota(\calG_{L''}^{\epsilon,\mu}) & &\text{if } \chi = \iota\\ 
\label{eqn:chargecont4} D_{L'}^e(\calG_{L''}^{\epsilon,\mu}) &\subset D_{L}^e(\calG_{L''}^{\epsilon,\mu}) & &\text{if } c = e 
\end{align}
(see Lemma~\ref{lem:gsspan} for a description of $\calG_{L''}^{\epsilon,\mu}$).
Note the reversal of $L$ and $L'$ in the last two equations.
The reason is that while in the first two equations, the \emph{presence} of a charge in the region is measured, in the last two equations it is the \emph{absence} of any charge that is important.

To see why these equations are true, consider first $\chi \neq \iota$ and note that the subspace $D_{L}^\chi(\calG_{L''}^{\epsilon,\mu})$ is spanned by 
$\{ F_{\rho}^{\chi,e}F_{\sigma}^{\iota,c}  \Omega:\forall \Omega\in \calG_{L''}; c\in G; \rho, \sigma$ paths connecting sites from the interior of $ \Lambda_{L''}$ to the boundary such that $\partial_0\rho \subset \Lambda_{L}  \}$.
The same statement is true if we replace $L$ by $L'$.
Thus, $D_L^\chi$ selects for a $\chi$-excitation in the region $\Lambda_L$ 
whereas $D_{L'}^\chi$ selects for a $\chi$-excitation in the region $\Lambda_{L'}$.
The later condition is weaker.
Therefore, $D_{L}^\chi|_{\calG_{L''}^{\epsilon,\mu}} \leq D_{L'}^\chi|_{\calG_{L''}^{\epsilon,\mu}} $ as projections.
A similar argument gives $D_{L}^c|_{\calG_{L''}^{\epsilon,\mu}} \leq D_{L'}^c|_{\calG_{L''}^{\epsilon,\mu}} $ as projections.
If $\chi=\iota$, $D_{L'}^{\iota}$ selects for a trivial $\epsilon$-type charge (i.e., the absence of an electric charge) in the region $\Lambda_{L'}$ 
while $D_{L}^{\iota}$ selects for a trivial $\epsilon$-type charge in the region $\Lambda_{L}$.
The later condition is weaker.  
Therefore, $D_{L'}^\iota|_{\calG_{L''}^{\epsilon,\mu}} \leq D_{L}^\iota|_{\calG_{L''}^{\epsilon,\mu}} $ as projections.
A similar argument gives $D_{L'}^e|_{\calG_{L''}^{\epsilon,\mu}} \leq D_{L}^e|_{\calG_{L''}^{\epsilon,\mu}} $ as projections. 
This shows that~\eqref{eqn:chargecont1}--\eqref{eqn:chargecont4} hold.

Let $\omega \in K$ be an infinite volume ground state.
As remarked below Definition~\ref{def:chargedgs}, $\omega|_{\calA_{L''}} \in K_{L''}$ is a ground state functional for $H_{L''}^{\epsilon,\mu}$.
Consider the sequence $\{ \omega(D_L^{\chi}) \}_{L=2}^\infty$.
If $\chi \neq \iota$, the inclusion~\eqref{eqn:chargecont1} gives that $ \omega( D_{L'}^\chi - D_{L}^\chi) \geq 0$,
thus the sequence is increasing.
The sequence is also bounded, $\omega(D_L^\chi) \leq \| D_L^\chi \| = 1$.
Hence we have a uniformly bounded and increasing sequence, and therefore the limit $ \lim_{ L\ra \infty} \omega(D_L^\chi)$ exists.
A similar argument gives that the limit $\lim_{L\ra \infty} \omega(D^{\chi,c}_L)$ exists if $\chi\neq \iota$ and $c \neq e$.

If $\chi \neq \iota$ and $c =e$, where there is a non-trivial electric charge and the magnetic charge is trivial, we can rewrite the projector $D_L^{\chi,e}$ as 
\begin{equation*}
D_L^{\chi, e} = D_L^{\chi} D_L^e =  D_L^{\chi} \bigg( I - \sum_{c \neq e} D_L^c \bigg)  = D_L^{\chi} - \sum_{c \neq e} D_L^{\chi,c}.
\end{equation*}
This is enough to show the limit $\lim_{L\ra \infty} \omega(D^{\chi,e}_L)$ exists.
By similar arguments, 
the limits exist for the cases $\chi = \iota$ with $c \neq e$, and when $\chi = \iota$ with $ c = e$.
The limits are always positive, since $\omega(D_L^{\chi,c}) \geq 0$ for all $L$.

To prove the second claim,
let $\omega^{\chi,c}\in K^{\chi,c}$ and choose $\omega\in K$ such that
$$
\omega^{\chi,c} = \wslim_{L'\ra\infty} \frac{\omega(\ \cdot \ D_{L'}^{\chi,c} )}{\omega(D_{L'}^{\chi,c})}.
$$
We use freely that the charge projectors commute.
Equations \eqref{eqn:chargecont1}--\eqref{eqn:chargecont2} imply that if $\chi \neq \iota$ and $c \neq e$ then $\omega(D_L^{\chi,c}D_{L'}^{\chi,c}) = \omega(D_L^{\chi,c})$.
Equations \eqref{eqn:chargecont3}--\eqref{eqn:chargecont4} imply that $\omega(D_L^{\iota,e} D_{L'}^{\iota,e}) = \omega(D_{L'}^{\iota,e})$.
If $\chi \neq \iota$, \eqref{eqn:chargecont1} and \eqref{eqn:chargecont3} imply that $\omega(D_L^{\chi,e} D_{L'}^{\chi,e})  = \omega(D_L^\chi D_{L'}^e)$. 
From the Cauchy-Schwarz inequality, it follows that
\begin{align*}
\abs{ \omega(D_L^{\chi,e} D_{L'}^{\chi,e}) - \omega(D_L^{\chi,e} )} 
& = \abs{ \omega( D_L^\chi (D_{L'}^e - D_L^e)}\\
& \leq \sqrt{\omega(D_L^\chi )}\sqrt{\abs{ \omega(( D_{L'}^e - D_L^e)^2) }}\\
&= \sqrt{\omega(D_L^\chi )}\sqrt{\abs{ \omega( D_{L'}^e + D_L^e - 2 D_L^e D_{L'}^e)}}\\
&= \sqrt{\omega(D_L^\chi )}\sqrt{\abs{ \omega( D_{L'}^e - D_L^e) }}
\end{align*}
and similarly for $c \neq e$, $\abs{ \omega(D_L^{\iota,c} D_{L'}^{\iota,c}) - \omega(D_L^{\iota,c} )} \leq \sqrt{\omega(D_L^c)}\sqrt{\abs{\omega(D_{L'}^\iota - D_L^\iota)}}$.

Let $\epsilon >0$ be given.  
The previous arguments show that for all $(\chi,c) \in \widehat{G}\times G$, there exists $l$ such that if  $L'>L>l$ then
\begin{equation}\label{eqn:fullcharge}
\abs{ \omega(D_L^{\chi,c}D_{L'}^{\chi,c}) - \omega(D_{L}^{\chi,c})} < \epsilon.
\end{equation}
Thus,
\begin{align*}
\lambda^{\chi,c}(\omega^{\chi,c})
& = \lim_{L\ra\infty} \lim_{L'\ra\infty}  \frac{\omega(D_L^{\chi,c} D_{L'}^{\chi,c})}{\omega(D_{L'}^{\chi,c})}\\
&=\lim_{L\ra\infty} \lim_{L'\ra\infty} \frac{\omega( D_{L}^{\chi,c})}{\omega(D_{L'}^{\chi,c})}\\
&=1.
\end{align*}
Therefore, combining $\sum_{\sigma,d} \lambda_{\sigma,d}(\omega^{\chi,c}) = 1$ 
and $\lambda_{\chi,c}(\omega^{\chi,c}) =1$ 
gives $\lambda_{\sigma,d}(\omega^{\chi,c}) = \delta_{(\sigma,d),(\chi,c)}$.
\end{proof}

From the arguments given in the previous lemma, we can achieve a slightly stronger bound which will be used later.
Let $\epsilon>0$ be given. 
Then, for all $(\chi,c)\in \widehat{G} \times G$, we show that there exists $l$ such that if $L' > L >l$ then 
\begin{equation}\label{eqn:chargemoment}
\abs{ \omega\left( ( D_{L'}^{\chi,c} - D_{L}^{\chi,c})^2\right) } <\epsilon.
\end{equation}
If $\chi \neq \iota$ and $ c\neq e$, \eqref{eqn:chargecont1} and \eqref{eqn:chargecont2} give that $\omega\left( ( D_{L'}^{\chi,c} - D_{L}^{\chi,c})^2\right) = \omega( D_{L'}^{\chi,c} - D_{L}^{\chi,c})$. 
Similarly, we find that $\omega\left( ( D_{L'}^{\iota,e} - D_{L}^{\iota,e})^2\right) = \abs{\omega( D_{L'}^{\iota,e} - D_{L}^{\iota,e})}$.
If $\chi \neq \iota$ then 
\begin{align*}
\abs{ \omega\left( ( D_{L'}^{\chi,e} - D_{L}^{\chi,e})^2\right) }
& = \abs{\omega( D_{L'}^{\chi,e} + D_L^{\chi,e} - 2 D_L^{\chi,e}D_{L'}^{\chi,e})}\\
& \leq \abs{\omega( D_{L'}^{\chi,e}  - D_L^\chi D_{L'}^e)} +  \abs{\omega( D_L^{\chi,e} - D_L^\chi D_{L'}^e)}\\
& = \abs{\omega\left( D_{L'}^e (D_{L'}^\chi -D_L^\chi)\right)} + \abs{ \omega( D_L^{\chi} (D_{L'}^e - D_L^e))},
\end{align*}
with a similar bound holding if $ \chi = \iota$ and $c \neq e$.  Thus, \eqref{eqn:chargemoment} holds.

Lemma \ref{lem:asymptoticcoef} allows us to distinguish ground states with different charges and makes it possible to decompose any ground state into charged ground states. This is the first part of our main result.
\begin{theorem}\label{thm:qdoubgs1}
Let $\omega\in K$ be a ground state.
Then there exists a convex decomposition of $\omega$  as
\begin{equation}\label{eqn:gsdecomp4}
\omega = \sum_{\chi\in \widehat{G},c \in G} \lambda_{\chi,c}(\omega) \omega^{\chi,c} \qquad \text{ where } \quad \omega^{\chi,c}\in K^{\chi,c}.
\end{equation}
Furthermore, we can calculate the coefficients explicitly as
\begin{equation}
\lambda_{\chi,c}(\omega) = \lim_{L\ra\infty} \omega(D_{L}^{\chi,c}).
\end{equation}
If $ \lambda_{\chi,c}(\omega) >0$ then 
\begin{equation}
	\label{eqn:wslimit}
\omega^{\chi,c} = \wslim_{L\ra\infty} \frac{\omega(\ \cdot \ D_{L}^{\chi,c} )}{\omega(D_{L}^{\chi,c})}.
\end{equation}
\end{theorem}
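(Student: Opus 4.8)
The plan is to build the states $\omega^{\chi,c}$ as weak$^*$ limits along subsequences and then use Lemma~\ref{lem:asymptoticcoef} and the boundary-support of the charge projectors to glue everything together. First I would fix a ground state $\omega \in K$ and let $I(\omega) = \{(\chi,c) : \lambda_{\chi,c}(\omega) > 0\}$, where by Lemma~\ref{lem:asymptoticcoef} each $\lambda_{\chi,c}(\omega) = \lim_{L\to\infty}\omega(D_L^{\chi,c})$ exists and is nonnegative; summing the completeness relation $\sum_{\chi,c} D_L^{\chi,c} = I$ (equation~\eqref{eqn:localprojcomplete}) against $\omega$ gives $\sum_{\chi,c}\lambda_{\chi,c}(\omega) = 1$. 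For each $(\chi,c) \in I(\omega)$ the functionals $L \mapsto \omega(\,\cdot\, D_L^{\chi,c})/\omega(D_L^{\chi,c})$ are states on $\calA$ for $L$ large (positivity follows from equation~\eqref{eqn:Ksets}, since on any $\calA_L$ the functional equals $A \mapsto \omega(D_{L'}^{\chi,c} A\, D_{L'}^{\chi,c})/\omega(D_{L'}^{\chi,c})$, a genuine positive functional). By weak$^*$ compactness of the state space there is a convergent subnet; I would show the limit is actually unique so that $\wslim$ makes sense.

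The key step is that the limit in~\eqref{eqn:wslimit} exists. Fix $A \in \calA_L$. Using that $D_{L'}^{\chi,c}$ is supported on $\Lambda_{L'} \setminus \Lambda_{L'-1}$ (Lemma~\ref{lem:ribchargeinv} / Lemma~\ref{lem:globprojboundaryop}), for $L' > L$ we have $\omega(A\, D_{L'}^{\chi,c}) = \omega(D_{L'}^{\chi,c} A\, D_{L'}^{\chi,c})$ modulo a term controlled by the Cauchy--Schwarz estimates already carried out in the proof of Lemma~\ref{lem:asymptoticcoef}, in particular the bound~\eqref{eqn:chargemoment} on $\omega((D_{L'}^{\chi,c} - D_L^{\chi,c})^2)$. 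Concretely, for $L'' > L' > L$ I would estimate $|\omega(A\,D_{L''}^{\chi,c})/\omega(D_{L''}^{\chi,c}) - \omega(A\,D_{L'}^{\chi,c})/\omega(D_{L'}^{\chi,c})|$ by writing $D_{L''}^{\chi,c} - D_{L'}^{\chi,c}$ where it appears, applying Cauchy--Schwarz with respect to the positive sesquilinear form $(X,Y) \mapsto \omega(X^*Y)$, pulling out $\|A\|$, and invoking~\eqref{eqn:chargemoment} together with $\omega(D_{L'}^{\chi,c}) \to \lambda_{\chi,c}(\omega) > 0$. This shows the sequence is Cauchy in $\CC$ for each fixed local $A$, hence convergent; the limit extends by $\|\cdot\|$-density and boundedness to a state $\omega^{\chi,c}$ on $\calA$, which by construction lies in $K^{\chi,c}$.

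Finally I would verify the convex decomposition~\eqref{eqn:gsdecomp4}. For $A \in \calA_L$ and $L' > L$, the completeness relation gives $\omega(A) = \sum_{\chi,c}\omega(A\, D_{L'}^{\chi,c})$; splitting into $(\chi,c) \in I(\omega)$ and its complement, the complementary terms vanish in the limit since $|\omega(A\, D_{L'}^{\chi,c})| \le \|A\|\sqrt{\omega(D_{L'}^{\chi,c})} \to 0$ there, while for $(\chi,c) \in I(\omega)$ we have $\omega(A\, D_{L'}^{\chi,c}) = \omega(D_{L'}^{\chi,c})\cdot \bigl(\omega(A\,D_{L'}^{\chi,c})/\omega(D_{L'}^{\chi,c})\bigr) \to \lambda_{\chi,c}(\omega)\,\omega^{\chi,c}(A)$. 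Hence $\omega(A) = \sum_{\chi,c}\lambda_{\chi,c}(\omega)\,\omega^{\chi,c}(A)$ on the dense subalgebra $\calA_{loc}$, and therefore on all of $\calA$; for $(\chi,c)\notin I(\omega)$ one may pick $\omega^{\chi,c}$ arbitrarily in $K^{\chi,c}$ (or note the term is absent). I expect the main obstacle to be the convergence argument in~\eqref{eqn:wslimit}: one must be careful that the denominators $\omega(D_{L'}^{\chi,c})$ stay bounded away from zero (which is exactly why the restriction to $I(\omega)$ and Lemma~\ref{lem:asymptoticcoef} are needed) and that the error terms from replacing $D_{L''}^{\chi,c}$ by $D_{L'}^{\chi,c}$ are genuinely small uniformly in $\|A\| \le 1$, which relies on the monotonicity/inclusion relations~\eqref{eqn:chargecont1}--\eqref{eqn:chargecont4} holding on the relevant finite-volume ground state spaces and on $\omega|_{\calA_{L''}}$ being a ground state functional of $H_{L''}^{\epsilon,\mu}$ via Lemma~\ref{lem:gshambound}.
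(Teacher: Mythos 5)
Your proposal is correct and follows essentially the same route as the paper: existence of the coefficients via Lemma~\ref{lem:asymptoticcoef}, a Cauchy-sequence estimate for $\omega(A\,D_{L}^{\chi,c})/\omega(D_{L}^{\chi,c})$ on fixed local $A$ using Lemma~\ref{lem:ribchargeinv}, Cauchy--Schwarz and the bounds~\eqref{eqn:fullcharge}--\eqref{eqn:chargemoment} with the denominator controlled by $\lambda_{\chi,c}>0$, and then the completeness relation~\eqref{eqn:localprojcomplete} to assemble the convex decomposition, with vanishing contributions from sectors of zero weight. The only cosmetic difference is that you make the splitting into $I(\omega)$ and its complement explicit, which the paper compresses into the statement $\wslim_{L\ra\infty}\abs{\omega-\sum_{\chi,c}\lambda_{\chi,c}\omega_L^{\chi,c}}=0$.
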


\begin{proof}
For convenience, in this proof we denote $\lambda_{\chi,c}(\omega) = \lambda_{\chi,c}$.

By Lemma \ref{lem:asymptoticcoef}, the values $\lambda_{\chi,c} \geq 0$  are well-defined,
so we will have to show that the limit in equation~\eqref{eqn:wslimit} exists and that the decomposition in equation~\eqref{eqn:gsdecomp4} agrees with the state $\omega$.

Let $L' > L \geq 2$.  
Since for each $L'$ the charge projections add up to the identity, by equation~\eqref{eqn:localprojcomplete}, we have $\omega = \sum_{\chi,c} \omega( \ \cdot \ D_{L'}^{\chi,c})$ for all $L'$.
By Lemma \ref{lem:gshambound}, $\omega|_{\calA_L}$
is a finite volume ground state for $H_L^{\epsilon,\mu}$ for all $L\geq 2$ (see also the remark after Definition~\ref{def:chargedgs}).

Now suppose $\lambda_{\chi,c}>0$.
Let $\epsilon >0$ be given and suppose $\epsilon$ is small enough such that $\lambda_{\chi,c} > \epsilon >0$. 
By Lemma~\ref{lem:asymptoticcoef}, and inequalities~\eqref{eqn:fullcharge} and~\eqref{eqn:chargemoment}, there exists $L>0$ such that if $L''> L' >L$ then
\[
\abs{\omega( D_{L'}^{\chi,c}) - \lambda_{\chi,c}}  <   \epsilon, \quad
\abs{ \omega( D_{L''}^{\chi,c}-  D_{L'}^{\chi,c})} < \epsilon, \quad \text{ and }\quad
\abs{ \omega\left( ( D_{L''}^{\chi,c} -  D_{L'}^{\chi,c})^2\right) } < \epsilon.
\]
We also demand that $\omega(D^{\chi,c}_{L'}) > 0$ for all $L' > L$, which can always be achieved by choosing $L$ big enough.
Note that $\lambda_{\chi,c} \leq 1$, so we can restrict to $\epsilon < 1$.
Let $A\in \calA_{L}$, then 
\begin{align}\label{eqn:cauchyproj}
&\left\vert \frac{\omega(A D_{L'}^{\chi,c} )}{\omega(D_{L'}^{\chi,c})}  - \frac{\omega(A D_{L''}^{\chi,c} )}{\omega(D_{L''}^{\chi,c})} \right\vert \\
& \quad\quad\quad\quad =  \frac{1}{\omega(D_{L'}^{\chi,c}) \omega(D_{L''}^{\chi,c})}
\left\vert  \omega(A D_{L'}^{\chi,c} )\omega(D_{L''}^{\chi,c}) -  \omega(A D_{L''}^{\chi,c} )\omega(D_{L'}^{\chi,c}) \right\vert\\
& \quad\quad\quad\quad \leq \frac{1}{\omega(D_{L'}^{\chi,c}) \omega(D_{L''}^{\chi,c})}
\Big( \left| \omega(A D_{L'}^{\chi,c} )\right| \left\vert   \omega(D_{L''}^{\chi,c}) - \omega( D_{L'}^{\chi,c} ) \right\vert \\
& \qquad \qquad \qquad \qquad +
\omega(D_{L'}^{\chi,c})\abs{\omega(A (D_{L'}^{\chi,c} - D_{L''}^{\chi,c}) )}\Big). 
\end{align}
Recall that by Lemma \ref{lem:ribchargeinv},
 we have that $\omega(A D_{L'}^{\chi,c}) = \omega(D_{L'}^{\chi,c} A D_{L'}^{\chi,c})$.
It follows that $\abs{\omega(A D_{L'}^{\chi,c})} \leq \|A\| \omega( D_{L'}^{\chi,c})$.
We also note the estimate $\frac{1}{\omega(D_{L''}^{\chi,c})} \leq \frac{1}{\lambda_{\chi,c} - \epsilon } $.
The last term can be estimated using the Cauchy-Schwarz inequality,
\[
	\left| \omega(A(D_{L'}^{\chi,c} - D_{L''}^{\chi,c}) \right| 
	\leq \sqrt{\omega(A^*A)} \sqrt{\abs{\omega\left( (D_{L'}^{\chi,c} - D_{L''}^{\chi,c})^2\right)}}
	\leq \|A\| \sqrt{\epsilon}.
\]
Combining these estimates we obtain the bound
\begin{equation}\label{eqn:cauchyseq}
	\left\vert \frac{\omega(A D_{L'}^{\chi,c} )}{\omega(D_{L'}^{\chi,c})}  - \frac{\omega(A D_{L''}^{\chi,c} )}{\omega(D_{L''}^{\chi,c})} \right\vert \leq \frac{\|A\| (\epsilon + \sqrt{\epsilon})}{\lambda_{\chi,c} - \epsilon}.
\end{equation}
Because $\lambda_{\chi,c} > 0$, this goes to zero as $\epsilon$ goes to zero.

Thus for each pair $(\chi,c)$ there is a sequence of states
\begin{equation}
\omega_L^{\chi,c} = \frac{\omega( \ \cdot \ D_{L}^{\chi,c})}{\omega(D_{L}^{\chi,c})} ,
\end{equation}
converging in the weak$^*$ limit (or $\omega^{\chi,c}$ is the zero functional if $\lambda_{\chi,c} = 0$).
We have the following properties (cf.\ equations (4.48)--(4.50) in~\cite{KomaN}):
\begin{align}
&\wslim_{L\ra\infty} \omega_L^{\chi,c}  := \omega^{\chi,c} \quad \text{ exists}, \\
&\wslim_{L\ra \infty} \left\vert \omega - \sum_{\chi,c} \lambda_{\chi,c}\omega_L^{\chi,c} \right\vert = 0\\
& \omega^{\chi,c}_{L'}( H_{L}^{\epsilon,\mu}) = 0  \quad \text{ for all } L'> L\geq 2.
\end{align}
The last property follows directly from the fact that $\omega_{L'}^{\chi,c}\in K_{L}$ for all $L'>L$
and $\omega^{\chi,c}$ is an infinite volume ground state for all $(\chi,c)$.
Thus we have proven the ground state decomposition as in equation~\eqref{eqn:gsdecomp4}.
\end{proof}

\begin{corollary}\label{cor:face}
For all $(\chi,c)\in \widehat{G}\times G$, $K^{\chi,c}$ is a face in the set of all states.
In particular, if $\omega^{\chi,c} \in K^{\chi,c}$ is an extremal point of $K^{\chi,c}$ 
then $\omega^{\chi,c}$ is a pure state.
\end{corollary}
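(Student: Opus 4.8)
The plan is to use that $K$ is itself a face in $\calA_{+,1}^*$ (recalled in Section~\ref{sec:mainresults}, via Theorem~5.3.37 of \cite{BratteliR}) together with the inclusion $K^{\chi,c}\subseteq K$, so that the whole statement reduces to showing $K^{\chi,c}$ is a face in $K$. Indeed, suppose $\omega^{\chi,c}\in K^{\chi,c}$ and $\omega^{\chi,c}=t\phi_1+(1-t)\phi_2$ with $\phi_1,\phi_2\in\calA_{+,1}^*$ and $t\in(0,1)$. Since $K$ is a face we immediately get $\phi_1,\phi_2\in K$, and it remains only to show $\phi_1,\phi_2\in K^{\chi,c}$. (Convexity of $K^{\chi,c}$ is part of Definition~\ref{def:chargedgs}: given two elements one combines the ground states generating them with suitably chosen weights.)

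To place $\phi_1,\phi_2$ in $K^{\chi,c}$, I would first read off their asymptotic charge coefficients. By Lemma~\ref{lem:asymptoticcoef}, for any ground state the limits $\lambda_{\sigma,d}(\,\cdot\,)=\lim_{L\ra\infty}(\,\cdot\,)(D_L^{\sigma,d})$ exist and are nonnegative, and $\lambda_{\sigma,d}(\omega^{\chi,c})=\delta_{(\sigma,d),(\chi,c)}$. Taking $L\ra\infty$ in $\omega^{\chi,c}(D_L^{\sigma,d})=t\,\phi_1(D_L^{\sigma,d})+(1-t)\,\phi_2(D_L^{\sigma,d})$ yields
\begin{equation*}
\delta_{(\sigma,d),(\chi,c)}=t\,\lambda_{\sigma,d}(\phi_1)+(1-t)\,\lambda_{\sigma,d}(\phi_2)\qquad\text{for all }(\sigma,d)\in\widehat{G}\times G .
\end{equation*}
For $(\sigma,d)\neq(\chi,c)$ the right-hand side is a sum of two nonnegative terms with strictly positive weights that equals $0$, forcing $\lambda_{\sigma,d}(\phi_1)=\lambda_{\sigma,d}(\phi_2)=0$; and since $\sum_{(\sigma,d)}D_L^{\sigma,d}=I$ by equation~\eqref{eqn:localprojcomplete}, we have $\sum_{(\sigma,d)}\lambda_{\sigma,d}(\phi_i)=1$, so $\lambda_{\chi,c}(\phi_i)=1$ for $i=1,2$. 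Applying the decomposition of Theorem~\ref{thm:qdoubgs1} to each $\phi_i\in K$, the only surviving term has coefficient $\lambda_{\chi,c}(\phi_i)=1$, so $\phi_i=\phi_i^{\chi,c}\in K^{\chi,c}$. Hence $K^{\chi,c}$ is a face in $K$, and therefore a face in $\calA_{+,1}^*$.

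The statement about extremal points is then the standard general fact that an extremal point of a face of the state space is extremal in the whole state space: if $\omega^{\chi,c}$ is extremal in $K^{\chi,c}$ and $\omega^{\chi,c}=t\phi_1+(1-t)\phi_2$ with $\phi_1,\phi_2$ states and $t\in(0,1)$, the face property just proved gives $\phi_1,\phi_2\in K^{\chi,c}$, and extremality in $K^{\chi,c}$ forces $\phi_1=\phi_2=\omega^{\chi,c}$; thus $\omega^{\chi,c}$ is pure. The whole argument is essentially bookkeeping built on Lemma~\ref{lem:asymptoticcoef} and Theorem~\ref{thm:qdoubgs1}; the one point to handle with a little care is the interchange of the convex combination with the defining limits of the charge coefficients, which is legitimate precisely because those are honest (finite) limits by Lemma~\ref{lem:asymptoticcoef}. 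I do not expect a genuine obstacle here beyond assembling these ingredients correctly.
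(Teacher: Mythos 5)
Your proof is correct and follows essentially the same route as the paper: use that $K$ is a face to conclude the components are ground states, then kill the off-charge coefficients via Lemma~\ref{lem:asymptoticcoef} and the decomposition of Theorem~\ref{thm:qdoubgs1} to place them in $K^{\chi,c}$, with the same standard argument for extremality implying purity. The only cosmetic difference is that the paper verifies the face property through the dominated-functional criterion $\phi \leq \lambda\,\omega^{\chi,c}$ rather than directly through convex combinations; the two formulations are equivalent.
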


\begin{proof}
Let $\omega^{\chi,c} \in K^{\chi,c}$ and suppose $\phi \leq \lambda \omega^{\chi,c}$. 
Since the set of ground states is a face
this implies $\phi$ is a ground state.
By Theorem \ref{thm:qdoubgs1}, decompose $\phi$ as 
$\phi = \sum \lambda_{\sigma,d}(\phi)  \phi^{\sigma, d}$.
By Lemma \ref{lem:asymptoticcoef}, 
\begin{align*}
\lambda_{\sigma,d}(\phi) &= \lim_{L\ra \infty} \phi(D_L^{\sigma,d}) \\
& \leq \lambda \lim_{L\ra\infty}  \omega^{\chi,c}(D^{\sigma,d}_L)\\
& = \lambda \delta_{(\sigma,d), (\chi,c)}.
\end{align*}
Therefore, $\phi = \phi^{\chi,c} \in K^{\chi,c}$, which shows $K^{\chi,c}$ is a face
in the set of all states.

Suppose $\omega^{\chi,c}$ is an extremal state of $K^{\chi,c}$
and that $\omega^{\chi,c}$ can be decomposed as a convex combination of states
\[ \omega^{\chi,c}  = c_1 \omega_1 + c_2 \omega_2.\]
It follows that $c_i \omega_i \leq \omega^{\chi,c}$, 
so by the face property of $K^{\chi,c}$, $\omega_i \in K^{\chi,c}$.
The supposition that the state $\omega^{\chi,c}$ is extremal in $K^{\chi,c}$ leads to the conclusion that $\omega_1 = \omega_2 =\omega^{\chi,c}$. 
\end{proof}

We note that the definition we take for a face does not require it to be a closed set in the weak$^*$ topology (but see also Theorem~\ref{thm:weakclosure}).
The decomposition above suggests that each ground state can be decomposed into ground states that are related to the superselection sectors.
Indeed, the \emph{pure} states in $K^{\chi,c}$ are equivalent to one of the charged states that we constructed before.
Two states $\omega_1$ and  $\omega_2$ are said to be \emph{equivalent} if their corresponding GNS representations are unitarily equivalent.

\begin{theorem}\label{thm:eqstates}
If $\omega^{\chi,c} \in K^{\chi,c}$ is a pure state then $\omega^{\chi,c}$ 
is equivalent to a single excitation ground state $\omega_s^{\chi,c}$,
as defined in equation~\eqref{eqn:singleexcitation}. 
\end{theorem}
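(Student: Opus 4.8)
The plan is to recognize the GNS representation $\pi^{\chi,c}$ of $\omega^{\chi,c}$ as one of the irreducible superselection representations of Proposition~\ref{prop:singleexc}, and then to read off its charge label from Lemma~\ref{lem:asymptoticcoef}. Since $\omega^{\chi,c}$ is pure, $\pi^{\chi,c}$ is irreducible, so by Proposition~\ref{prop:singleexc}(iv) it suffices to verify the cone superselection criterion~\eqref{eqn:conecrit}: for every infinite cone $\Lambda\subset\calB$ one must exhibit a unitary on the GNS Hilbert space intertwining $\pi^{\chi,c}\upharpoonright\calA_{\Lambda^c}$ with $\pi_0\upharpoonright\calA_{\Lambda^c}$, $\pi_0$ being the GNS representation of $\omega^0$. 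Granting this, Proposition~\ref{prop:singleexc}(iv) produces a site $s$ and a pair $(\chi',c')\in\widehat{G}\times G$ with $\pi^{\chi,c}\cong\pi_s^{\chi',c'}$; it then remains to show $(\chi',c')=(\chi,c)$.

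For the cone criterion I would pass to finite volume. By Lemma~\ref{lem:gshambound} and Lemma~\ref{lem:globprojboundaryop} the restriction $\omega^{\chi,c}\upharpoonright\calA_{L'}$ is given by a density matrix supported on $\calG_{L'}^{\epsilon,\mu}$, and since $D_{L'}^{\chi,c}$ is a boundary operator (Lemma~\ref{lem:ribchargeinv}) one has $\omega^{\chi,c}(A)=\lim_{L'\to\infty}\omega^{\chi,c}(D_{L'}^{\chi,c}AD_{L'}^{\chi,c})$ for $A\in\calA_{loc}$; thus $\omega^{\chi,c}\upharpoonright\calA_L$ is a limit of states whose density matrices lie in the charge-$(\chi,c)$ sector $D_{L'}^{\chi,c}\calG_{L'}^{\epsilon,\mu}$, which by Lemma~\ref{lem:gsspan} is spanned by string configurations $F_\rho^{\chi,e}F_\sigma^{\iota,c}\Omega_{L'}$ with $\rho,\sigma$ running from interior sites to $\partial\Lambda_{L'}$. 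Because $\lambda_{\chi,c}(\omega^{\chi,c})=1$ by Lemma~\ref{lem:asymptoticcoef}, for any $\varepsilon>0$ one may discard (up to an $\varepsilon$-error in trace norm on $\calA_L$) the configurations whose interior endpoints lie outside a fixed box $\Lambda_{L_0}$, and by path independence in the ground state (ribbon property (h)) the strings of the remaining ones may be deformed so as to leave $\Lambda_{L_0}$ only through the cone $\Lambda$. The ribbon operators along the in-cone portions of these strings are unitaries commuting with $\calA_{\Lambda^c}$ (ribbon property (a) gives unitarity, since $F_\rho^{\iota,e}=I$), so they drop out of $\omega^{\chi,c}\upharpoonright\calA_{\Lambda^c}$, and what remains differs from $\omega^0\upharpoonright\calA_{\Lambda^c}$ only by conjugation with finite ribbon operators localized near $\Lambda_{L_0}$; passing to the limits $L'\to\infty$ and $\varepsilon\to0$, and using local indistinguishability of the finite-volume ground states, realizes $\omega^{\chi,c}$ on $\calA_{\Lambda^c}$ as $\omega^0$ composed with an inner automorphism of $\calA$, which yields the desired intertwiner. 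I expect the bookkeeping of this limit---uniform control of the errors, the interface between the bounded core region and the cone, and the fact that the criterion is required for \emph{every} cone rather than a single one---to be the main obstacle.

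Finally, granting $\pi^{\chi,c}\cong\pi_s^{\chi',c'}$, let $U$ implement the equivalence; then $\omega^{\chi,c}=\langle U\Omega^{\chi,c},\pi_s^{\chi',c'}(\,\cdot\,)U\Omega^{\chi,c}\rangle$ is a vector state of the irreducible representation $\pi_s^{\chi',c'}$, so the two GNS representations are unitarily equivalent and $\omega^{\chi,c}$ is equivalent to $\omega_s^{\chi',c'}$. To identify the label I would first record the finite-volume identity $\omega_s^{\chi',c'}(D_L^{\sigma,d}C)=\delta_{(\sigma,d),(\chi',c')}\,\omega_s^{\chi',c'}(C)$ for every $C\in\calA_L$ and every $L$ for which $\partial_0\rho\in\mathcal{S}_L$: writing $\omega_s^{\chi',c'}=\lim_M\langle F_{\rho_M}^{\chi',c'}\Omega_M,\,\cdot\,F_{\rho_M}^{\chi',c'}\Omega_M\rangle$, one routes $\rho_M$ so that $s$ is its only endpoint in $\Lambda_L$, commutes $D_L^{\sigma,d}$ past $F_{\rho_M}^{\chi',c'}$ using the global analogues of~\eqref{eqn:localprojribbonrelation1}--\eqref{eqn:localprojribbonrelation2}, and applies $D_L^{\tau,e'}\Omega_M=\delta_{(\tau,e'),(\iota,e)}\Omega_M$ from~\eqref{eqn:localprojgs}. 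Approximating $U\Omega^{\chi,c}$ in norm by $\pi_s^{\chi',c'}(B)\Omega_s^{\chi',c'}$ with $B\in\calA_{\Lambda_{L_0}}$, and using that $D_L^{\sigma,d}$ commutes with $B$ for $L>L_0$ (Lemma~\ref{lem:ribchargeinv}), this identity forces $\omega^{\chi,c}(D_L^{\sigma,d})\to\delta_{(\sigma,d),(\chi',c')}$ as $L\to\infty$. Since Lemma~\ref{lem:asymptoticcoef} asserts $\omega^{\chi,c}(D_L^{\sigma,d})\to\delta_{(\sigma,d),(\chi,c)}$, we conclude $(\chi',c')=(\chi,c)$, so $\omega^{\chi,c}$ is equivalent to the single excitation ground state $\omega_s^{\chi,c}$.
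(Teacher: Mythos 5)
Your route---verify the cone superselection criterion \eqref{eqn:conecrit} for the GNS representation of $\omega^{\chi,c}$ and then invoke Proposition~\ref{prop:singleexc}(iv)---is genuinely different from the paper's, which never touches the criterion: the paper proves the quantitative estimate $\abs{\omega^{\chi,c}(A)-\omega_s^{\chi,c}(A)}\leq \|A\|\bigl(\epsilon+2(\epsilon+\sqrt{\epsilon})/(\lambda_{\chi,c}-\epsilon)^2\bigr)$ for all $A\in\calA_{loc}\cap\calA_{L^c}$, using the charge projectors, the description of $D_L^{\chi,c}D_{L'}^{\chi,c}\calG_{L'}^{\epsilon,\mu}$ from Lemma~\ref{lem:gsspan}, and a braiding-eigenvalue argument on the annulus algebra $\calA_{L'-2}\cap\calA_{(L+2)^c}$, and then applies the equivalence criterion for pure states (Corollary 2.6.11 of \cite{BratteliR}). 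Your final paragraph (identification of the label $(\chi',c')$ via Lemma~\ref{lem:asymptoticcoef}) is sound, but the middle step of your plan, which you yourself flag as the main obstacle, contains a genuine gap rather than mere bookkeeping.

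The criterion \eqref{eqn:conecrit} asserts unitary equivalence of the \emph{restricted representations} $\pi^{\chi,c}\upharpoonright\calA_{\Lambda^c}$ and $\pi_0\upharpoonright\calA_{\Lambda^c}$, not agreement of the restricted \emph{states}. Even an exact identity $\omega^{\chi,c}|_{\calA_{\Lambda^c}}=\omega^0\circ\beta|_{\calA_{\Lambda^c}}$ would not by itself yield the intertwiner: the GNS vector of $\omega^{\chi,c}$ is in general not cyclic for $\pi^{\chi,c}(\calA_{\Lambda^c})$, so the restricted representation is not the GNS representation of the restricted state, and two representations of $\calA_{\Lambda^c}$ can induce the same state without being equivalent; indeed, the proof in \cite{FiedlerN} that even the explicitly constructed states $\omega_s^{\chi,c}$ satisfy \eqref{eqn:conecrit} uses transportability of the automorphisms $\alpha_\rho^{\chi,c}$ together with Haag duality for cones, not a state estimate. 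Moreover, your conclusion that the restriction of $\omega^{\chi,c}$ to $\calA_{\Lambda^c}$ is ``$\omega^0$ composed with an inner automorphism of $\calA$'' cannot survive the limit $L'\to\infty$: the limiting automorphism along an infinite ribbon is outer, and if a single quasi-local unitary implemented it globally the state would lie in the vacuum sector, contradicting $\lambda_{\chi,c}(\omega^{\chi,c})=1$. A further unaddressed point is that $\omega^{\chi,c}|_{\calA_{L'}}$ is a \emph{mixture} over string configurations with off-diagonal terms in the basis you use; the paper disposes of these by showing every vector of $\calG_{L',L}^{\chi,c}$ is an eigenvector, with one common eigenvalue, of any product of closed ribbon operators in the annulus, and gives vanishing expectation to open ribbon products, so the conditioned state agrees exactly with $\omega_s^{\chi,c}$ there. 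Carrying out the estimates needed to make your cone-criterion step rigorous would essentially reproduce this comparison with $\omega_s^{\chi,c}$, at which point Corollary 2.6.11 of \cite{BratteliR} already finishes the proof and Proposition~\ref{prop:singleexc}(iv) is no longer needed.
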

\begin{proof}
First, we notice that $\omega_s^{\chi,c} \in K^{\chi,c}$ for all sites $s$ since 
$ \frac{\omega_s^{\chi,c}( A D_{L'}^{\chi,c})}{\omega_s^{\chi,c}(D_{L'}^{\chi,c})} = \omega_s^{\chi,c}(A)$
for all $A \in \calA_L$ and $L'>L$, where $L$ is chosen large enough so that $ s \in \mathcal{S}_L$.

Let $\omega^{\chi,c} \in K^{\chi,c}$ and 
let $\omega \in K$ be a ground state such that we have
$\omega^{\chi,c} = \wslim \frac{\omega( \cdot D^{\chi,c}_L)}{\omega(D_L^{\chi,c})}$ and $\lambda_{\chi,c}>0$.
Let $\epsilon >0$ be given and suppose $\epsilon$ is small enough such that $\lambda_{\chi,c} > \epsilon >0$. 
By Lemma \ref{lem:asymptoticcoef} and equation \eqref{eqn:fullcharge}, there is an $L>0$  such that if $L''\geq L' \geq L$ then $\abs{ \omega(D^{\chi,c}_{L''}) - \omega(D_{L'}^{\chi,c}D^{\chi,c}_{L''})}  < \epsilon$ and $\abs{ \omega(D^{\chi,c}_{L''}) - \omega(D_{L'}^{\chi,c})} < \epsilon$.
Fix an operator $ A \in \calA_{loc} \cap \calA_{(L)^c}$.
Then, there is an $L'>L+1$ such that $A$ is supported on the annulus $A \in \calA_{L'-2} \cap \calA_{(L)^c}$,
and an $L''> L'+1$ such that 
\begin{equation}\label{eqn:bound1}
\abs{ \omega^{\chi,c}( A) - \frac{\omega(A D^{\chi,c}_{L''})}{\omega(D^{\chi,c}_{L''})}} < \|A\|\epsilon.
\end{equation}
The estimate \eqref{eqn:cauchyseq} also holds.
Applying \eqref{eqn:fullcharge} and the estimates 
$$
\frac{1}{\omega(D_{L'}^{\chi,c})} \leq \frac{1}{\lambda_{\chi,c} - \epsilon } \quad \mbox{ and } \quad
\frac{1}{\omega(D_L^{\chi,c}D_{L'}^{\chi,c})} \leq \frac{1}{\lambda_{\chi,c} - \epsilon }
$$
gives
\begin{align*}
\abs{ \frac{\omega(A D_{L'}^{\chi,c})}{\omega(D_{L'}^{\chi,c})} - \frac{\omega(A D_{L}^{\chi,c} D_{L'}^{\chi,c})}{\omega(D_{L}^{\chi,c}D_{L'}^{\chi,c})}} 
&\leq  \abs{  \frac{\omega(A D_{L'}^{\chi,c})}{\omega(D_{L'}^{\chi,c})} - \frac{\omega(A D_{L}^{\chi,c} D_{L'}^{\chi,c})}{\omega(D_{L'}^{\chi,c})}} \\
&\quad\quad\quad+ \abs{ \frac{\omega(A D_{L}^{\chi,c} D_{L'}^{\chi,c})}{\omega(D_{L'}^{\chi,c})}- \frac{\omega(A D_{L}^{\chi,c} D_{L'}^{\chi,c})}{\omega(D_{L}^{\chi,c}D_{L'}^{\chi,c})}} \\
&\leq \frac{\|A\| (\epsilon + \sqrt{\epsilon})}{(\lambda_{\chi,c} - \epsilon)^2}
\end{align*}
Combining the above estimates, it follows that for a site $s$ contained in $\Lambda_L$
\begin{equation}\label{eqn:eqstates1}
\begin{split}
& \abs{ \omega^{\chi,c}(A) - \omega_s^{\chi,c}(A) } \leq 
\abs{\omega^{\chi,c}(A) - \frac{\omega(A D^{\chi,c}_{L''})}{\omega(D^{\chi,c}_{L''})}}
+ \abs{ \frac{\omega(A D^{\chi,c}_{L''})}{\omega( D^{\chi,c}_{L''})} -  \frac{\omega(A D_{L'}^{\chi,c})}{\omega(D_{L'}^{\chi,c})}} \\
& \, + \abs{ \frac{\omega(A D_{L'}^{\chi,c})}{\omega(D_{L'}^{\chi,c})} - \frac{\omega(A D_{L}^{\chi,c} D_{L'}^{\chi,c})}{\omega(D_{L}^{\chi,c}D_{L'}^{\chi,c})}}
+\abs{ \frac{\omega(A D_{L}^{\chi,c} D_{L'}^{\chi,c})}{\omega(D_{L}^{\chi,c}D_{L'}^{\chi,c})} -  \frac{ \omega(D_{L'}^{\chi,c} D_{L}^{\chi,c} A D_{L}^{\chi,c}D_{L'}^{\chi,c})}{\omega(D_{L}^{\chi,c}D_{L'}^{\chi,c})}}\\
& \qquad \qquad + \abs{  \frac{ \omega(D_{L'}^{\chi,c} D_{L}^{\chi,c} A D_{L}^{\chi,c}D_{L'}^{\chi,c})}{\omega(D_{L}^{\chi,c}D_{L'}^{\chi,c})} - \omega_s^{\chi,c}(A) } \\
& \quad\quad\quad \leq \|A\|\epsilon  +  2\frac{\|A\| (\epsilon + \sqrt{\epsilon})}{(\lambda_{\chi,c} - \epsilon)^2}
+  \abs{  \frac{ \omega(D_{L'}^{\chi,c}D_{L}^{\chi,c}A D_{L}^{\chi,c}D_{L'}^{\chi,c})}{\omega(D_{L}^{\chi,c}D_{L'}^{\chi,c})} - \omega_s^{\chi,c}(A) },
\end{split}
\end{equation}
where the fourth term vanishes by Lemma \ref{lem:ribchargeinv}.
The last term will be shown to be identically zero.

Recall that $\omega_s^{\chi,c}|_{\calA_{L'}} = \omega^0( F_{\rho_{L'}}^{\chi,c *} \ \cdot \ F_{\rho_{L'}}^{\chi,c})|_{\calA_{L'}}$ 
where $\rho_{L'}$ is a ribbon that connects the site $s$ to the boundary of $\Lambda_{L'}$.
Denote the subspace $\calG_{L',L}^{\chi,c} := D_{L}^{\chi,c} D_{L'}^{\chi,c} (\calG_{L'}^{\epsilon,\mu})$;
it is spanned by simple vectors of the form 
$F_{\sigma}^{\chi,e}F_{\sigma'}^{\iota, c} \Omega$ where $ \Omega \in \calG_{L'}$, and $\sigma$ and $\sigma'$ are a ribbons connecting sites in 
$\mathcal{S}_L$ to the boundary of $\Lambda_{L'}$, see Lemma \ref{lem:gsspan}.

First, we consider the case $\psi \in\calG_{L',L}^{\chi,c}$ and $\psi =  F_\sigma^{\chi,c} \Omega$.
Indeed, for each $\sigma$ as above, we can find a new ribbon, $\sigma' = \sigma_1 \sigma_2 \sigma_3$, see Figure~\ref{fig:eqstates}, connecting 
$y$ to the boundary of $ \Lambda_{L'}$ with the following properties 
\begin{align}\label{eqn:pathdecomp}
&F_{\sigma_1}^{\chi,c}\in \calA_L \text{ and } \sigma_1\cap \Lambda_L =\sigma \cap \Lambda_L\\
&F_{\sigma_2}^{\chi,c} \in \calA_{L'-1}\cap \calA_{(L+1)^c} \text{ and }  
\sigma_2 \cap \Lambda_{L'-2}\cap \Lambda_{(L+2)^c}  = \rho_{L'} \cap  \Lambda_{L'-2}\cap \Lambda_{(L+2)^c} \\
&F_{\sigma_3}^{\chi,c} \in \calA_{L'} \cap  \calA_{( L'-1)^c} 
\text{ and } \sigma_3\cap \Lambda_{L''} \cap  \Lambda_{( L'-1)^c} =\sigma \cap \Lambda_{L''} \cap  \Lambda_{( L'-1)^c}\\
&F_\sigma^{\chi,c} \Omega = F_{\sigma_1}^{\chi,c} F_{\sigma_2}^{\chi,c} F_{\sigma_3}^{\chi,c} \Omega. \label{eqn:pathdecomp2}
\end{align} 
Here we used that the state only depends on the endpoints of the ribbon, not on the path.
Decompose 
$\psi = F_{\sigma}^{\chi,c} \Omega
= F_{\sigma_{1}}^{\chi,c} F_{\sigma_{2}}^{\chi,c} F_{\sigma_{3}}^{\chi,c}\Omega $.

\begin{figure}
\begin{center} 
\includegraphics[width=.5\textwidth]{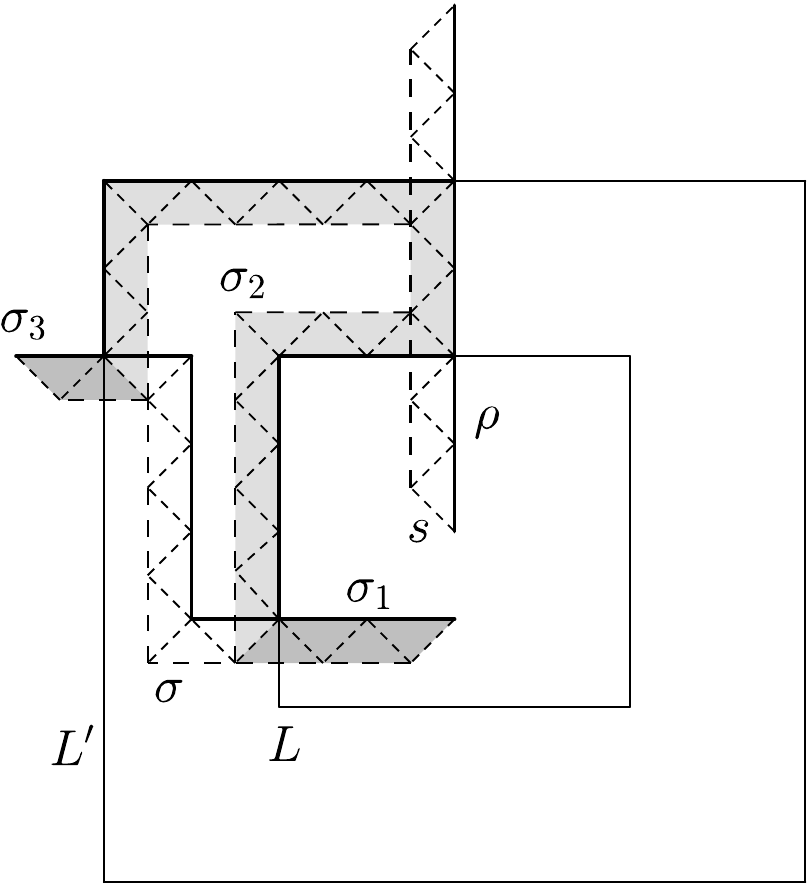}
\end{center}
\caption{A depiction of the ribbons $\rho$, $\sigma$ and $\sigma'=\sigma_1\sigma_2\sigma_3$. 
The ribbon $\sigma'$ is shaded, $\sigma_2$ is distinguished by a lighter shade.}
\label{fig:eqstates}
\end{figure}

Suppose $A$ is a product of ribbon operators.
If $A$ is not a product of closed paths, 
then its action on $\calG_{L',L}^{\chi,c}$ yields a subspace with strictly higher energy with respect to $H_{L''}^{\chi,c}$.
Thus, if $\psi \in \calG_{L',L}^{\chi,c}$ then $\langle \psi, A \psi \rangle = 0 = \omega_s^{\chi,c}(A)$.
If $A$ is a product of closed paths, then $A$ leaves the frustration free ground state invariant,
 $A\Omega = \Omega$.
Let $k\in\CC$  be such that $A F_{\rho_{L'}}^{\chi,c} = kF_{\rho_{L'}}^{\chi,c} A$;
$k$ can be computed from ribbon intertwining relations \eqref{eqn:ribbonrelation}.
It follows that 
\begin{align*}
\omega_s^{\chi,c}(A)& = \langle F_{\rho_{L'}}^{\chi,c} \Omega, A F_{\rho_{L'}}^{\chi,c} \Omega \rangle\\
& = k \langle \Omega, A \Omega \rangle \\
& = k.
\end{align*}
Now going back to the properties of $\sigma_{l}$ in equations \eqref{eqn:pathdecomp}--\eqref{eqn:pathdecomp2}, we have that 
\begin{equation}
[A, F_{\sigma_1}^{\chi,c}] = 0, \quad [A, F_{\sigma_3}^{\chi,c}] = 0, 
\quad \text{ and } \quad A F_{\sigma_2}^{\chi,c} = kF_{\sigma_2}^{\chi,c} A .
\end{equation}
Thus, if $A$ is a product of closed ribbon operators then
\begin{align*}
A \psi & =  A F_{\sigma_{1}}^{\chi,c} F_{\sigma_{2}}^{\chi,c} F_{\sigma_{3}}^{\chi,c}\Omega \\
&=  k F_{\sigma_{1}}^{\chi,c} F_{\sigma_{2}}^{\chi,c} F_{\sigma_{3}}^{\chi,c}A \Omega \\
& = k  \psi
\end{align*}
and $\psi$ has eigenvalue $k$.
For the case $\psi = F_{\sigma}^{\chi,e}F_{\tau}^{\iota,c}\Omega$,
the decomposition $\sigma = \sigma_1 \sigma_2 \sigma_3$ and $ \tau = \tau_1 \tau_2 \tau_3$ 
as above then we can choose $ \sigma_2$ and $\tau_2$ to coincide on the annulus $\Lambda_{L'-2}\cap \Lambda_{(L+2)^c}$.
Therefore, the same argument as above shows $A \psi = k \psi$.

For a general $\psi \in\calG_{L',L}^{\chi,c}$, $\psi$ is a linear combination of the simple vectors $ F_\sigma^{\chi,e} F_{\sigma'}^{\iota,c}\Omega$.
Thus, by linearity $A\psi = k \psi$ for all $\psi \in\calG_{L',L}^{\chi,c}$.
Therefore, if $\psi$ is normalized 
\begin{equation}
\langle \psi, A \psi \rangle = k = \omega_s^{\chi,c}(A).
\end{equation}
Note that we already established this equation for $A$ an open ribbon operator.

Since ribbon operators span the algebra $\calA_{L'-2} \cap \calA_{(L+2)^c}$, 
we extend the above argument by linearity 
so that 
\begin{equation}\label{eqn:eqstate3}
\langle \psi, A \psi \rangle  = \omega_s^{\chi,c}(A)\quad \text{ for all } \quad A \in \calA_{L'-2} \cap \calA_{(L+2)^c}.
\end{equation} 

A general mixed state supported on $\calG_{L',L}^{\chi,c}$ is of the form 
$$ 
\phi = \frac{ \omega_{L'}(D_{L'}^{\chi,c} D_{L}^{\chi,c} \ \cdot \ D_{L}^{\chi,c}D_{L'}^{\chi,c})}{\omega_{L'}(D_{L}^{\chi,c}D_{L'}^{\chi,c})} =
\sum c_{\psi} \langle \psi, A \psi\rangle,
$$ 
where $ \omega_{L'} \in K_{L'}$ and each $\psi$ is a linear combination of vectors of the form $F_{\sigma}^{\chi,e}F_\xi^{\iota,c}\Omega$.
Since the $c_\psi$ add up to one, it follows that $ \phi(A) = \omega_s^{\chi,c}(A)$ for all $A \in \calA_{L'-2} \cap \calA_{(L+2)^c}$.

For the ground state $\omega$, Lemma \ref{lem:gshambound} gives $\omega|_{\calA_{L'}} \in K_{L'}$.
Therefore,
\begin{equation}
\frac{ \omega(D_{L'}^{\chi,c}D_{L}^{\chi,c}A D_{L}^{\chi,c}D_{L'}^{\chi,c})}{\omega(D_{L}^{\chi,c}D_{L'}^{\chi,c})} = \omega_s^{\chi,c}(A) \quad \quad \text{ for all } A \in \calA_{L'-2} \cap \calA_{(L+2)^c}.
\end{equation}

Since $L'$ was chosen such that $L'>L$, and otherwise arbitrary, the estimate in \eqref{eqn:eqstates1} becomes
\begin{equation}
\abs{ \omega^{\chi,c}(A) - \omega_s^{\chi,c}(A) } \leq  \|A\| \left(\epsilon +2\frac{(\epsilon + \sqrt{\epsilon})}{(\lambda_{\chi,c} - \epsilon)^2}\right) \qquad \text{ for all } \quad A \in \calA_{loc}\cap \calA_{L^c}.
\end{equation}

Now suppose further that $\omega^{\chi,c}$ is a pure state.
Proposition \ref{prop:singleexc} also gives that the states $\omega_s^{\chi,c}$ are pure states.
Therefore, applying the criterion for equivalence of pure states (Corollary 2.6.11, \cite{BratteliR})
we have 
\begin{equation}\label{eqn:eqstates2}
\omega^{\chi,c} \approx \omega_s^{\chi,c} \text{ for all } (\chi,c) \in \widehat{G}\times G.
\end{equation}
This completes the proof.
\end{proof}

The above two theorems give a complete characterization of the ground states of the quantum double model.
The sets of states $K^{\chi,c}$ played an important role in the analysis.
We end our discussion by finding the weak$^*$ closure of these sets in the set of all states.
Recall that each state in $K^{\chi,c}$ has a charge $(\chi,c)$.
Now consider a sequence of states where the $\chi$ charge is gradually moved off to infinity.
The resulting weak$^*$ limit will be a state with only a charge $c$,
so the weak closures of the sets $K^{\chi,c}$ are strictly larger than $K^{\chi,c}$ (unless $\chi = \iota$ and $c = e$).
The final result is that this procedure of moving away charges gives the  weak$^*$ closures of the sets of charged ground states.

\begin{theorem}\label{thm:weakclosure}
The closures in the weak$^*$ topology are given by
\begin{equation}\label{eqn:sectclose}
 \overline{K^{\chi,c}}^w = \conv{K^0\cup K^{\chi,e} \cup K^{\iota,c} \cup K^{\chi,c}},
\end{equation}
where $\operatorname{Conv}$ denotes the convex hull.
\end{theorem}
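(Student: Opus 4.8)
The plan is to prove the two inclusions in \eqref{eqn:sectclose} separately, using throughout that $K$ is weak$^*$ closed (so $\overline{K^{\chi,c}}^w\subseteq K$) and that every $\omega\in K$ has the charge decomposition of Theorem~\ref{thm:qdoubgs1}, with coefficients $\lambda_{\sigma,d}(\omega)=\lim_L\omega(D_L^{\sigma,d})$ that are nonnegative, sum to $1$, and depend affinely on $\omega\in K$. Call $(\sigma,d)$ \emph{admissible} if $(\sigma,d)\in\{\iota,\chi\}\times\{e,c\}$, i.e.\ if it is one of the four labels in \eqref{eqn:sectclose}. The first observation is that the right-hand side of \eqref{eqn:sectclose} equals $\{\omega\in K:\lambda_{\sigma,d}(\omega)=0 \text{ for all non-admissible } (\sigma,d)\}$: a convex combination of states from the four sectors has this property by affineness and Lemma~\ref{lem:asymptoticcoef}, and conversely such an $\omega$ is, by Theorem~\ref{thm:qdoubgs1}, the convex combination $\sum_{(\sigma,d)\text{ admissible}}\lambda_{\sigma,d}(\omega)\,\omega^{\sigma,d}$.

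For the inclusion ``$\subseteq$'' it then suffices to show $\lambda_{\sigma,d}$ vanishes on $\overline{K^{\chi,c}}^w$ for every non-admissible $(\sigma,d)$. I would first record that, by the monotonicity established in the proof of Lemma~\ref{lem:asymptoticcoef} (on $\calG_{L''}^{\epsilon,\mu}$ one has $D_L^\sigma\le D_{L'}^\sigma$ for $L<L'$ when $\sigma\neq\iota$, and similarly $D_L^d\le D_{L'}^d$ for $d\neq e$), the function $\omega\mapsto\lambda_\sigma(\omega)=\sup_L\omega(D_L^\sigma)$ is weak$^*$ lower semicontinuous on $K$ for $\sigma\neq\iota$, and likewise $\lambda_d$ for $d\neq e$. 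Since $\sum_{d'}\lambda_{\sigma,d'}=\lambda_\sigma$ and $\sum_{\sigma'}\lambda_{\sigma',d}=\lambda_d$ we have $\lambda_{\sigma,d}\le\min(\lambda_\sigma,\lambda_d)$. Now a non-admissible $(\sigma,d)$ must satisfy $\sigma\notin\{\iota,\chi\}$ or $d\notin\{e,c\}$, because $\{\iota,\chi\}\times\{e,c\}$ is exactly the admissible set. In the first case $\sigma\neq\iota$, so $\lambda_\sigma$ is lower semicontinuous, and $\lambda_\sigma\equiv 0$ on $K^{\chi,c}$ by Lemma~\ref{lem:asymptoticcoef} (as $\sigma\neq\chi$); hence $\lambda_\sigma\equiv 0$, and so $\lambda_{\sigma,d}\equiv 0$, on $\overline{K^{\chi,c}}^w$. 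The second case is symmetric, using $\lambda_d$. This proves ``$\subseteq$''.

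For ``$\supseteq$'', since $\overline{K^{\chi,c}}^w$ is convex (it is the weak$^*$ closure of the convex set $K^{\chi,c}$) and closed, it is enough to show that each of $K^0$, $K^{\chi,e}$, $K^{\iota,c}$ lies in it; $K^{\chi,c}$ is trivial. The tool is the charge-transporting automorphism $\alpha_\rho^{\sigma,d}$ of \eqref{eqn:chargemorp}, for $\rho$ a ribbon running from a site $s$ to infinity. I will use two claims. (i) If $\omega'\in K$ satisfies $\omega'(D_L^\iota)=1$ for all $L$ [resp.\ $\omega'(D_L^e)=1$ for all $L$], then $\omega'\circ\alpha_\rho^{\chi,e}$ [resp.\ $\omega'\circ\alpha_\rho^{\iota,c}$], and $\omega'\circ\alpha_\rho^{\chi,c}$ if both hypotheses hold, is a ground state, and if $\omega'\in K^{\chi',c'}$ it lies in $K^{\chi'\sigma,\,c'd}$ where $(\sigma,d)$ is the transported label. (ii) For $A\in\calA_M$, if $\rho$ is routed to avoid $\Lambda_M$ then $(\omega'\circ\alpha_\rho^{\sigma,d})(A)=\omega'(A)$. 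Granting these: $K^0=\{\omega^0\}$ (a frustration-free ground state has $\omega^0(D_L^\iota)=\omega^0(D_L^e)=1$ for all $L$, since $D_v^\iota=A_v$ and $D_f^e=B_f$), and transporting $(\chi,c)$ gives $\omega^0\circ\alpha_\rho^{\chi,c}=\omega_s^{\chi,c}\in K^{\chi,c}$ with $\omega_s^{\chi,c}\to\omega^0$ as $s\to\infty$; for $\omega'\in K^{\chi,e}$, $\lambda_e(\omega')=1$ and the fact that $\omega'(D_L^e)$ is nonincreasing with limit $1$ force $\omega'(D_L^e)=1$ for all $L$, so by (i) $\omega_n:=\omega'\circ\alpha_{\rho_n}^{\iota,c}\in K^{\chi,c}$ for base points $s_n\to\infty$, and by (ii) $\omega_n\to\omega'$ weak$^*$; the case $\omega'\in K^{\iota,c}$ is identical, transporting $(\chi,e)$ (using $\omega'(D_L^\iota)=1$ for all $L$).

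It remains to justify (i) and (ii). Claim (ii) is immediate: the ribbon operators $F_{\rho_L}^{\sigma,d}$ are supported on $\rho$, which avoids $\Lambda_M$, hence commute with $A$, so $\alpha_\rho^{\sigma,d}(A)=\lim_L F_{\rho_L}^{\sigma,d\,*}A\,F_{\rho_L}^{\sigma,d}=A$ by unitarity. Claim (i) is the crux. For the ground-state part, by Lemma~\ref{lem:finitegslim} with $\tilde H_L=H_L^{\epsilon,\mu}$ (for which $\delta(A)=\lim_L[H_L^{\epsilon,\mu},A]$) it suffices that $(\omega'\circ\alpha_\rho^{\sigma,d})(H_L^{\epsilon,\mu})=0$; choosing $L_0>L$ large enough that $\alpha_\rho^{\sigma,d}(H_L^{\epsilon,\mu})=F_{\rho_{L_0}}^{\sigma,d\,*}H_L^{\epsilon,\mu}F_{\rho_{L_0}}^{\sigma,d}$ and writing the expectation as $\Tr\big(F_{\rho_{L_0}}^{\sigma,d}\,\varrho_{L_0}\,F_{\rho_{L_0}}^{\sigma,d\,*}\,H_L^{\epsilon,\mu}\big)$, where $\varrho_{L_0}$ is the density matrix of $\omega'$ on $\calA_{L_0}$ — which is supported on $\calG_{L_0}^{\epsilon,\mu}$ by Lemmas~\ref{lem:gshambound} and~\ref{lem:globprojboundaryop}, and moreover on $D_{L_0}^\iota\calG_{L_0}^{\epsilon,\mu}$ [resp.\ $D_{L_0}^e\calG_{L_0}^{\epsilon,\mu}$] under the hypothesis $\omega'(D_{L_0}^\iota)=1$ [resp.\ $\omega'(D_{L_0}^e)=1$] — one reduces to showing that $F_{\rho_{L_0}}^{\sigma,d}$ maps this subspace into $\calG_{L_0}^{\epsilon,\mu}$, since $H_L^{\epsilon,\mu}$ annihilates $\calG_{L_0}^{\epsilon,\mu}$ by \eqref{eqn:gsintersection}. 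The point is that $\omega'(D_{L_0}^\iota)=1$ forces $\varrho_{L_0}$ to carry \emph{no} bulk electric excitation in $\Lambda_{L_0}$ (the constraint of $\calG_{L_0}^{\epsilon,\mu}$ upgrades ``trivial net charge'' to ``no excitation''), so transporting in one electric charge keeps us within the ``at most one electric, at most one magnetic'' space $\calG_{L_0}^{\epsilon,\mu}$; concretely, using the spanning description of $\calG_{L_0}^\epsilon$, $\calG_{L_0}^\mu$ in Lemma~\ref{lem:gsspan}(ii) together with \eqref{eqn:ribprop1}, \eqref{eqn:ribbonrelation}, \eqref{eqn:ribbonconcatenation} and path independence \eqref{eqn:ribbonpathind}, one rewrites $F_{\rho_{L_0}}^{\sigma,d}\big(F_{\tau}^{\chi_1,e}F_{\tau'}^{\iota,c_1}\Omega\big)$ (with $c_1=e$ or $\chi_1=\iota$ according to the hypothesis) as a scalar multiple of a vector of the form $F_{\tilde\tau}^{\tilde\chi,e}F_{\tilde\tau'}^{\iota,\tilde c}\Omega\in\calG_{L_0}^{\epsilon,\mu}$. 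For the sector label one commutes $F_{\rho_{L_0}}^{\sigma,d}$ through $D_L^{\sigma',d'}$: since the far endpoint of $\rho_{L_0}$ lies on $\partial\Lambda_{L_0}$ and hence outside $\mathcal V_L$, only the base-point star and plaquette contribute, and \eqref{eqn:localprojribbonrelation1} with unitarity gives $F_{\rho_{L_0}}^{\sigma,d\,*}D_L^{\sigma',d'}F_{\rho_{L_0}}^{\sigma,d}=D_L^{\sigma'\bar\sigma,\,d'\bar d}$, whence $\lambda_{\sigma',d'}(\omega'\circ\alpha_\rho^{\sigma,d})=\lambda_{\sigma'\bar\sigma,\,d'\bar d}(\omega')$, which for $\omega'\in K^{\chi',c'}$ equals $\delta_{(\sigma',d'),(\chi'\sigma,c'd)}$ and so, by Theorem~\ref{thm:qdoubgs1}, places $\omega'\circ\alpha_\rho^{\sigma,d}$ in $K^{\chi'\sigma,c'd}$. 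I expect the main obstacle to be precisely this verification in (i) — controlling the crossing phases from \eqref{eqn:ribbonrelation}, handling ribbon endpoints that may coincide with those already present in the spanning vectors, and confirming that conjugation by a ribbon operator of the ``missing'' type keeps the $\calG_{L_0}^{\epsilon,\mu}$-supported density matrix of a ground state with $\lambda_\iota=1$ (or $\lambda_e=1$) inside $\calG_{L_0}^{\epsilon,\mu}$ while implementing exactly the stated charge shift.
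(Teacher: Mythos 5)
Your proposal is correct and, for the harder inclusion $\conv{K^0\cup K^{\chi,e}\cup K^{\iota,c}\cup K^{\chi,c}}\subseteq\overline{K^{\chi,c}}^w$, it is essentially the paper's argument: compose a charged ground state with the transport automorphisms $\alpha_{\rho_n}^{\sigma,d}$ based at sites $s_n\to\infty$, check via the spanning description of Lemma~\ref{lem:gsspan} that multiplication by $F_{\rho_{L'}}^{\sigma,d}$ keeps the relevant charged part of $\calG_{L'}^{\epsilon,\mu}$ inside $\calG_{L'}^{\epsilon,\mu}$ (so that Lemma~\ref{lem:finitegslim} applies and the sector label shifts as computed from \eqref{eqn:localprojribbonrelation1}), and use locality plus unitarity of the ribbon operators to get weak$^*$ convergence back to the original state; your claim (i) is exactly the step the paper carries out for $K^{\chi,e}$ and dismisses for the other sectors with ``by similar arguments'', and your sketch of it is at the same level of detail as the paper's. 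Where you genuinely differ is the inclusion $\overline{K^{\chi,c}}^w\subseteq\conv{\cdots}$: the paper evaluates $\omega_\lambda^{\chi,c}(D_L^{\sigma,d})$ for an approximating net and gets exactly zero for non-admissible $(\sigma,d)$ from the orthogonality $D_L^{\sigma,d}D_{L'}^{\chi,c}\restriction\calG_{L'}^{\epsilon,\mu}=0$, then passes to the limit; you instead extract from the monotonicity in the proof of Lemma~\ref{lem:asymptoticcoef} that $\lambda_\sigma=\sup_L\omega(D_L^\sigma)$ (for $\sigma\neq\iota$) and $\lambda_d$ (for $d\neq e$) are weak$^*$ lower semicontinuous on the weak$^*$ closed set $K$, and combine this with $\lambda_{\sigma,d}\leq\min(\lambda_\sigma,\lambda_d)$ and $\lambda_\sigma\equiv0$ on $K^{\chi,c}$. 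Both routes rest on the same facts; yours is a bit more structural (it isolates a semicontinuity statement that handles arbitrary nets without the paper's double limit), while the paper's is more direct. Two small points: your parenthetical assertion $K^0=\{\omega^0\}$ is not proved (the remark only verifies that $\omega^0$ satisfies the hypothesis of claim (i)), but it is also unnecessary — any $\omega'\in K^0$ satisfies $\omega'(D_L^\iota)=\omega'(D_L^e)=1$ for all $L$ by the same monotonicity argument you use for $K^{\chi,e}$, so claim (i) with both hypotheses covers all of $K^0$; and in the charge-shift formula the precise inverses ($D_L^{\sigma'\bar\sigma,d'\bar d}$ versus $D_L^{\sigma'\sigma,d'd}$) depend on ribbon orientation conventions, which should be fixed consistently with \eqref{eqn:localprojribbonrelation1}--\eqref{eqn:localprojribbonrelation2}, but this only relabels the target sector and does not affect the argument.
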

\begin{proof}

First, we show that $K^{\chi,e} \subset \overline{K^{\chi,c}}^w$.

Let $\omega^{\chi,e} \in K^{\chi,e}$ and $\rho$ be a path extending to infinity based 
at site $s$. Consider the automorphism $\alpha_\rho^{\iota, c}$ (see \eqref{eqn:chargemorp}) 
that generates a charge of type $(\iota,c)$ located at the site $s$.
We claim that the state $\omega^{\chi,e} \circ \alpha_\rho^{\iota,c} \in K^{\chi,c}$.
To see this, from Theorem \ref{thm:qdoubgs1} and Lemma \ref{lem:asymptoticcoef},
write $ \omega^{\chi,e}(A) = \lim_{L\ra \infty} \omega^{\chi,e}( A D_{L}^{\chi,e} )/\omega^{\chi,e}(D_{L}^{\chi,e})$
for $A \in \calA$.
Notice that for $L'>L$ large enough such that $s \subset \Lambda_L$, equation \eqref{eqn:localprojribbonrelation1} gives
$F_{\rho_{L'}}^{\iota,c} D_L^{\chi,e} = D_L^{\chi,c}F_{\rho_{L'}}^{\iota,c}$
and $(\alpha_{\rho}^{\iota,c })^{-1}(D_L^{\chi,e}) = D_L^{\chi,c}$.
Thus,
\begin{equation}
\omega^{\chi,e} \circ \alpha_\rho^{\iota, c} (A) 
= \lim_{L\ra \infty} \frac{ \omega^{\chi,e}(  \alpha_\rho^{\iota, c}(A) D_{L}^{\chi,e} )}{\omega^{\chi,e}(D_{L}^{\chi,e})} 
= \lim_{L\ra \infty} \frac{ \omega^{\chi,e}\circ \alpha_\rho^{\iota, c}(A D_{L}^{\chi,c}) }{\omega^{\chi,e}\circ \alpha_\rho^{\iota, c} (D_{L}^{\chi,c})}.
\end{equation}
To finish the claim, we need to show $\omega^{\chi,e} \circ \alpha_\rho^{\iota, c} $ is a ground state.
Recall that $ D_{L'}^{\chi,e}\calG_{L'}^{\epsilon, \mu}$ is spanned by simple vectors of the form
$F_\sigma^{\chi,e} \Omega$ 
where $ \Omega \in \calG_{L'}$ and 
$\sigma$ is a ribbon connecting a site $s \in \mathcal{S}_{L'}$ to the 
boundary of $\Lambda_{L'}$.
Let $ \psi \in D_{L'}^{\chi,e}\calG_{L'}^{\epsilon, \mu}$ and write $ \psi = \sum_j b_j F_{\sigma_j}^{\chi,e}\Omega_j$.
Since $F_{\rho_{L'}}^{\iota,c} \psi = \sum_j b_j F_{\rho_{L'}}^{\iota,c} F_{\sigma_j}^{\chi,e}\Omega_j \in D_{L'}^{\chi,c}\calG_L^{\epsilon,\mu}$, it follows that 
$H_L^{\epsilon,\mu} F_{\rho_{L'}}^{\iota,c} \psi = 0$.
Indeed, we compute $\omega^{\chi,e}\circ\alpha_{\rho}^{\iota, c}$ is an infinite volume ground state:
\begin{align*}
\omega^{\chi,e}(\alpha_{\rho}^{\iota, c}( H_L^{\epsilon,\mu}))
&= \lim_{L'\ra\infty} \frac{ \omega^{\chi,e}(\alpha_{\rho}^{\epsilon,\mu}( H_L^{\epsilon,\mu} D_{L'}^{\chi,c}))}
{\omega^{\chi,e}\circ\alpha_{\rho}^{\iota, c}(D_{L'}^{\chi,c})} \\
&= \lim_{L'\ra\infty} \frac{ \omega^{\chi,e}(\alpha_{\rho}^{\epsilon,\mu}( D_{L'}^{\chi,c} H_L^{\epsilon,\mu} D_{L'}^{\chi,c}))}
{\omega^{\chi,e}\circ\alpha_{\rho}^{\iota, c}(D_{L'}^{\chi,c})}\\
&= \lim_{L'\ra\infty} \frac{  \omega^{\chi,e}( D_{L'}^{\chi, e} F_{\rho_{L'}}^{\iota,c *} H_{L}^{\epsilon,\mu}F_{\rho_{L'}}^{\iota,c} D_{L'}^{\chi,e} )}{\omega^{\chi,e}( D_{L'}^{\chi, e})}\\
& = 0,
\end{align*}
where the last equality is true since the state 
$$
\frac{  \omega^{\chi,e}( D_{L'}^{\chi, e} F_{\rho_{L'}}^{\iota,c *} \ \cdot \ F_{\rho_{L'}}^{\iota,c} D_{L'}^{\chi,e} )}{\omega^{\chi,e}( D_{L'}^{\chi, e})}
$$
is a mixed state supported on $D_{L'}^{\chi,c}\calG_L^{\epsilon,\mu}$.

Now consider a sequence $s_n$ of sites such that $s_1 = s$ and $ s_n \ra \infty$ as $n\ra \infty$.
Let $\rho_n$ be a ribbon extending to infinity based at the site $s_n$ 
and then define the sequence of states 
\[  \omega_{n}=  \omega^{\chi,e}\circ\alpha_{\rho_n}^{\iota,c} \in K^{\chi,c}.\]
For $A \in \calA_{loc}$, choose $n$ large enough so that
$ \alpha_{\rho_n}^{\iota, c}(A) = A$.
It follows that,
\begin{align*}
\omega_{n}( A ) & = \omega^{\chi,e}\circ\alpha_{\rho_n}^{\iota,c}(A)= \omega^{\chi, e}(A).
\end{align*}
Therefore, $\wslim \omega_n = \omega^{\chi, e} \in K^{\chi,e}$.
By similar arguments one can show the inclusion,
\[\conv{K^0\cup K^{\chi,e} \cup K^{\iota,c} \cup K^{\chi,c}} \subset  \overline{K^{\chi,c}}^w.\]

Now, to show the reverse inclusion, suppose $\widehat{\omega}^{\chi,c} \in \overline{K^{\chi,c}}^w$ 
and let $\omega^{\chi,c}_\lambda \in K^{\chi,c}$ be a net in $K^{\chi,c}$ such that $ \wslim_{\lambda}\omega_\lambda^{\chi,c} = \widehat{\omega}^{\chi,c}$.   
For each $ \lambda$, we can write
$$
\omega_\lambda^{\chi,c} = \lim_{L \ra \infty } \frac{\omega_\lambda( \ \cdot \ D_L^{\chi,c})}{\omega_\lambda(D_L^{\chi,c})} .
$$
The proof of Lemma \ref{lem:asymptoticcoef}
gives that $ D^{\sigma, d}_L D^{\chi,c}_{L'} |_{\calG_{L'}^{\epsilon,\mu}} = 0$ if $(\sigma, d)$ is not in the set $\{ (\chi,c), (\chi,e), (\iota,c), (\iota, e) \}  $.
Thus, in that case,
\begin{align}\label{eqn:gswslim}
\widehat{\omega}^{\chi,c}( D_L^{\sigma, d}) &= \lim_{\lambda } \omega_{\lambda}^{\chi,c}( D_L^{\sigma, d}) \\
& = \lim_{\lambda} \lim_{L'\ra\infty }\frac{\omega_{\lambda}^{\chi,c}(D_L^{\sigma, d} D_{L'}^{\chi,c} )}{\omega_\lambda^{\chi,c}(D_{L'}^{\chi,c})}\\
& = 0.  
\end{align}
This holds for all $L$, hence $\lambda_{\sigma,d} = 0$.
Since the set of infinite volume ground states is closed in the weak$^*$ topology, 
we apply equation~\eqref{eqn:gswslim} to the ground state decomposition~\eqref{eqn:gsdecomp4} of $\widehat{\omega}^{\chi,c}$,
\begin{equation}
	\begin{split}
	\widehat{\omega}^{\chi,c} = &\lambda_0 \omega^0 + \\
	&\wslim_{L\ra\infty} \bigg( \lambda_{\chi, e} \frac{\widehat{\omega}^{\chi,c}( \ \cdot \ D_L^{\chi, e})}{\widehat{\omega}^{\chi,c}(D_L^{\chi, e})} 
+c_{\iota, c}  \frac{\widehat{\omega}^{\chi,c}( \ \cdot \ D_L^{\iota, c})}{\widehat{\omega}^{\chi,c}(D_L^{\iota, c})}
+c_{\chi, c} \frac{\widehat{\omega}^{\chi,c}( \ \cdot \ D_L^{\chi, c})}{\widehat{\omega}^{\chi,c}(D_L^{\chi, c})}\bigg).
\end{split}
\end{equation}
Therefore, $\widehat{\omega}^{\chi,c} \in \conv{K^0\cup K^{\chi,e} \cup K^{\iota,c} \cup K^{\chi,c}}$.
\end{proof}

\section{Concluding remarks}\label{sec:con}

We have proved that the set of known ground states for Kitaev's quantum
double model is complete in the case of finite abelian groups.
A natural question is if the results can be extended to non-abelian groups.
The main technical challenge is that the quantum double $\mathcal{D}(G)$ has higher dimensional irreducible representations.
In physical terms, this manifests itself in the presence of \emph{non-abelian} anyons.
As a result, the structure of the ribbon operators and quasi-particles is much richer than that of the abelian case.
In particular, the quasi-particle excitations no longer decompose into simply electric and magnetic type,
since they have to account for the more complicated structure of $\operatorname{Rep}(\mathcal{D}(G))$.
For instance, the boundary operators we use in Lemmas~\ref{lem:globprojboundaryop} and~\ref{lem:gshambound}
would need to be generalized to account for this structure.
Since the fusion rules are non-abelian, the fusion of two excitations does not always have a definite outcome.
This makes it more difficult to sum over all admissible configurations that lead to a given charge in the region.
In addition, the charged sectors are generated by endomorphisms, which are not automorphisms in general, and that
are less straightforward to construct~\cite{NaaijkensLQP}.

A current challenge in mathematical physics is the classification of gapped ground state
phases \cite{BachmannMNS,BachmannO,Ogata1,Ogata2,Ogata3}.
One approach to classifying a phase is to construct a complete set of invariants.
By definition, an invariant is a quantity that is constant within a phase.
Consequently, if an invariant is computed for two systems and is found to take different values,
the systems must be in different phases. In the literature, a topological phase is often defined
as an open region in a space of Hamiltonians where there is a non-vanishing gap above the
ground state~\cite{ChenGW}. Therefore, the construction of invariants can be expected to rely on the
existence of a spectral gap. In the quantum double models, while it is known that the gap above
the vacuum state is stable under small uniform perturbations, we do not expect that the charged
ground states will survive a generic perturbation of this model since the anyon quasi-particles
will, in general, not appear as time-invariant states~\cite{Kitaev2}.
Our classification of the complete ground state space of the quantum double model for abelian groups
gives an example that shows that the set of infinite volume ground states is generally not an invariant
of a phase. From the physical point of view, however, the invariance of the structure of anyon quasi-particles
is usually taken as fact. There are few mathematically rigorous results in this direction \cite{Haah}.
We hope that our results are a first step in rigorously studying the stability properties of the superselection structure
of quantum double models.


\begin{thebibliography}{50}
\bibitem{AlickiFH}

Alicki, R., Fannes, M., Horodecki, M.: \emph{A statistical mechanics view on Kitaev's proposal of quantum memories}. 
J. Phys. A, {\bf 40}(24), 6451--6467 (2007)

\bibitem{ArakiM}

Araki, H., Matsui, T.: \emph{Ground states of the $XY$-model}. Commun. Math. Phys. {\bf 101}, 213--245 (1985)

\bibitem{ArovasSW}

Arovas, D., Schrieffer, J.R., Wilczek, F.: \emph{Fractional statistics and the quantum Hall effect}. Phys. Rev. Lett. {\bf 53}, 722--723 (1984)

\bibitem{Bachmann}

Bachmann, S.: \emph{Local disorder, topological ground state degeneracy and entanglement entropy, and discrete anyons}. arXiv:1608.03903 (2016)

\bibitem{BachmannMNS}

Bachmann, S., Michalakis, S., Nachtergaele, B., Sims, R.: \emph{Automorphic equicalence within gapped phases of quantum lattice systems}. Commun. Math. Phys. {\bf 309}, 835--871 (2012)

\bibitem{BachmannO}

Bachmann, S., Ogata, Y.: \emph{$C^1$-classification of gapped parent Hamiltonians of quantum spin chains}. Commun. Math. Phys. {\bf 338}, 1011--1042 (2015)

\bibitem{BaisDW}

Bais, F.A., van Driel, P., De Wild Propitius, M.: \emph{Anyons in discrete gauge theories with Chern-Simons terms}. Nucl. Phys. B {\bf 393}, 547--570 (1993)

\bibitem{BakalovK} 

Bakalov, B., Kirillov, Jr., A.: Lectures on Tensor Categories and Modular Functors (University Lecture Series
21). American Mathematical Society, Providence, RI (2001)

\bibitem{BeigiSW}

Beigi, S., Shor, P.W., Whalen, D.: \emph{The quantum double model with boundary: condensations and symmetries}. Commun. Math. Phys. {\bf 306}, 663--694 (2011)

\bibitem{BombinMD}

Bombin, H., Martin-Delgado, M.A.: \emph{A family of non-abelian Kitaev models on the lattice: topological condensation and confinement}. Phys. Rev. B {\bf 78}, 115421 (2008)

\bibitem{BondersonSS}
Bonderson, P., Shtengel, K., Slingerland, J.K.: \emph{Interferometry of non-abelian anyons}. Ann. Physics {\bf 323}, 2709--2755 (2008)

\bibitem{BrandaoH}
Brand{\~{a}}o, F.G.S.L, Horodecki, M.:
\emph{Exponential decay of correlations implies area law}.
Commun. Math. Phys. {\bf 333}, 761 (2015)

\bibitem{BratteliKR}

Bratteli, O., Kishimoto, A., Robinson, D.: \emph{Ground states of infinite quantum spin systems}. Commun. Math. Phys. {\bf 64}, 41--48 (1978)

\bibitem{BratteliR}

Bratteli, O., Robinson, D.W.: Operator algebras and quantum statistical mechanics 1 and 2, Second Edition. Springer Verlag (1987)

\bibitem{BravyiHM}

Bravyi, S., Hastings, M., Michalakis, S.: \emph{Topological quantum order: stability under local perturbations}. J. Math. Phys. {\bf 51}, 093512 (2011)


\bibitem{BravyiK}

Bravyi, S., Kitaev, A.: \emph{Quantum codes on a lattice with boundary}. arXiv:quant-ph/9811052v1 (1998)

\bibitem{ChenGW}

Chen, X., Gu, Z.-C., Wen, X.-G.: \emph{Local unitary transformation, long-range quantum entanglement, wave function renormalization, and topological order}, Phys. Rev. B {\bf 84}, 155138 (2011)

\bibitem{DijkgraafPR}

Dijkgraaf, R., Pasquier, V., Roche, P.: \emph{Quasi Hopf algebras, group cohomology and orbifold models}. Nucl. Phys. B. Proc. Suppl. {\bf 18B}, 60--72 (1991)

\bibitem{Haag1}

Doplicher, S., Haag, R., Roberts, J.E.: \emph{Local observables and particle statistics. I}. Commun. Math. Phys. {\bf 23}, 199--230 (1971)

\bibitem{Haag2}

Doplicher, S., Haag, R., Roberts, J.E.: \emph{Local observables and particle statistics. II}. Commun. Math. Phys. {\bf 35}, 49--85 (1974)

\bibitem{FannesNW}
Fannes, M., Nachtergaele, B., Werner, R. F.:
\emph{Finitely correlated states of quantum spin chains}.
Commun. Math. Phys. {\bf 144}, 443--490 (1992)

\bibitem{FannesW}

Fannes, M., Werner, R.F.: \emph{Boundary Conditions for Quantum Lattice Systems}. Helv. Phys. Acta {\bf 68}, 635--657 (1995)

\bibitem{FiedlerN}

Fiedler, L., Naaijkens, P.: \emph{Haag duality for Kitaev's quantum double model for abelian groups}. Rev. Math. Phys. {\bf 27}, 1550021 (2015)

\bibitem{FredRS} 

Fredenhagen, K., Rehren, K.-H., Schroer, B.: \emph{Superselection sectors with braid group statistics and exchange algebras}. Commun. Math. Phys. {\bf 125}, 201--226 (1989)

\bibitem{Freedman}

Freedman, M.: \emph{P/{NP}, and the quantum field computer}. Proc. Natl. Acad. Sci. USA {\bf 95}, 98--101 (1998)

\bibitem{FreedmanM}

Freedman, M., Meyer, D.A.: \emph{Projective plane and planar quantum codes}. Found. Comput. Math. {\bf 1}, 325--332 (2001)

\bibitem{FrohlichG}

Fr\"ohlich, J., Gabbiani, F.: \emph{Braid statistics in local quantum theory}. Rev. Math. Phys. {\bf 2}, 251--353 (1990)

\bibitem{GottsteinW}

Gottstein, C.T., Werner, R.F.: \emph{Ground states of the $q$-deformed Heisenberg ferromagnet}. arXiv:cond-mat/9501123 (1995)

\bibitem{Haag}

Haag, R.: Local quantum physics: Fields, particles, algebras, Texts and Monographs in Physics. Springer-Verlag, Berlin, second edition (1996)

\bibitem{HaagHW}

Haag, R., Hugenholtz, N.M., Winnink, M.: \emph{On the equilibrium states in quantum statistical mechanics}. Commun. Math. Phys. {\bf 5}, 215--236 (1967)

\bibitem{Haah}

Haah, J.: \emph{An invariant of topologically ordered states under local unitary transformations}. Commun. Math. Phys. {\bf 342}, 771--801 (2016)

\bibitem{Hastings}

Hastings, M.B.: \emph{An area law for one-dimensional quantum systems}.	J. Stat. Mech. {\bf 2007}, P08024 (2007)

\bibitem{HastingsK}
Hastings, M.B., Koma, T.:
\emph{Spectral gap and exponential decay of correlations}.
Commun. Math. Phys.
{\bf 265},  781--804 (2006)

\bibitem{Kitaev}

Kitaev, A.Y.: \emph{Fault-tolerant quantum computation by anyons}. Ann. Physics {\bf 303}, 2--30 (2003)

\bibitem{Kitaev2}
Kitaev, A.:
\emph{Anyons in an exactly solved model and beyond}. 
Ann. Physics {\bf 321}, 2--111 (2006)

\bibitem{KomaN}

Koma, T., Nachtergaele, B.: \emph{The complete set of ground states of the ferromagnetic XXZ chains}. Adv. Theor. Math. Phys. {\bf 2}, 533--558 (1998)


\bibitem{Matsui}

Matsui, T.: \emph{On ground states of the one-dimensional ferromagnetic XXZ chain}. Lett. Math. Phys. {\bf 37}, 397--403 (1996)

\bibitem{MichalakisP}

Michalakis, S., Zwolak, J.P: \emph{Stability of frustration-free Hamiltonians}. Commun. Math. Phys. {\bf 322}, 277--302 (2013)

\bibitem{MooreR}

Moore, G., Read, N.: \emph{Nonabelions in the fractional quantum Hall effect}. Nucl. Phys. B {\bf 360}, 362--396 (1990)

\bibitem{Naaijkens}

Naaijkens, P.: \emph{Localized endomorphisms in Kitaev's toric code on the plane}. Rev. Math. Phys. {\bf 23}, 347--373 (2011)

\bibitem{NaaijkensLQP}

Naaijkens, P.: \emph{Kitaev's quantum double model from a local quantum physics point of view}. In:  R. Brunetti C. Dappiaggi, K. Fredenhagen, J. Yngvason (eds.), Advances in Algebraic Quantum Field Theory, pp. 365--395. Springer (2015)

\bibitem{NachOS}

Nachtergaele, B., Ogata, Y., Sims, R.: \emph{Propagation of correlations in quantum lattice systems}. J. Stat. Phys. {\bf 124}, 1--13 (2006)

\bibitem{NachtergaeleS}

Nachtergaele, B., Sims, R.: \emph{Lieb-Robinson bounds and the exponential clustering theorem}. Commun. Math. Phys. {\bf 265}, 119--130 (2006)

\bibitem{Ogata1}

	Ogata, Y.: \emph{A class of asymmetric gapped Hamiltonians on quantum spin chains and its characterization I}. Commun. Math. Phys. {\bf 348}, 847--895 (2016)

\bibitem{Ogata2}

Ogata, Y.: \emph{A class of asymmetric gapped Hamiltonians on quantum spin chains and its characterization II},
Commun. Math. Phys. {\bf 348}, 897--957 (2016)

\bibitem{Ogata3}

Ogata, Y.: \emph{A class of asymmetric gapped Hamiltonians on quantum spin chains and its characterization III},
arXiv:1606.05508 (2016)

\bibitem{ReedSimon}

Reed, M., Simon, B.: Methods of Modern Mathematical Physics Vol. I: Functional Analysis, 
Revised and Enlarged edition. Academic Press, 400 p. (1980)

\bibitem{SzlachV}

Szlach\'anyi, K., Vecserny\'es, P.: \emph{Quantum symmetry and braid group statistics in $G$-spin models}. Commun. Math. Phys. {\bf 156}, 127--168 (1993)

\bibitem{Wen}

Wen, X.-G.: \emph{Vacuum degeneracy of chiral spin states in compactified space}. Phys. Rev. Lett. {\bf B40}, 7387--7390 (1989)

\bibitem{Wilczek}

Wilczek, F.: \emph{Fractional statistics and anyon superconductivity}. World Scientific Publishing Co., Inc., Teaneck, NJ (1990)

\end{thebibliography}
\end{document}